\def\markatright#1{\leavevmode\unskip\nobreak\quad\hspace*{\fill}{#1}}
\def\@maketitle{%
  \newpage
  \null
  \vskip 2em%
  \begin{center}%
  \let \footnote \thanks
    {\Large\bfseries \@title \par}%
    \vskip 1.5em%
    {\normalsize
      \lineskip .5em%
      \begin{tabular}[t]{c}%
        \@author
      \end{tabular}\par}%
    \vskip 1em%
    {\normalsize \@date}%
  \end{center}%
  \par
  \vskip 1.5em}
\newtheorem {theorem}{Theorem}[section]
\newtheorem{definition}[theorem]{Definition}{\bf}{\it}
\newtheorem{lemma}[theorem]{Lemma}{\bf}{\it}
\newtheorem{corollary}[theorem]{Corollary}{\bf}{\it}
\newtheorem{observation}[theorem]{Observation}
\def\cost{%
  weight%
}
\def\markatright#1{\leavevmode\unskip\nobreak\quad\hspace*{\fill}{#1}}
\newcounter{FitemZ}
\newcommand{\Nat}{I\!\!N}
\newcommand{\down}{\downarrow}
\newcommand{\downvertex}{\!\!\downarrow}
\definecolor{myredblue}{rgb}{0.5,0,0.5}
\newcommand\spp[2]{$(#1,#2)$}
\def\markatright#1{\leavevmode\unskip\nobreak\quad\hspace*{\fill}{#1}}
\newcommand\hide[1]{}
\begin{document}

\title{Approximate Tree Decompositions \\
  of Planar Graphs in Linear Time}

\author{Frank Kammer$^*$}
\author{Torsten Tholey%
  \thanks{E-mail address: \texttt{\{kammer,tholey\}@informatik.uni-augsburg.de}}}
\affil{Institut f\"ur Informatik, Universit\"at Augsburg, Germany}

\date{\vspace{-5ex}}

\maketitle
\begin{abstract}
Many algorithms have been developed for $\mathrm{NP}$-hard
problems on graphs with small treewidth $k$. 
For example, all problems that are expressible in linear 
extended monadic second order
can be solved in linear time on graphs of bounded treewidth.
It turns out that
the bottleneck of many 
algorithms for $\mathrm{NP}$-hard problems
is the computation of a tree
decomposition of width $O(k)$.
In particular, by the bidimensional theory,
there are many linear extended monadic second order problems that 
can be solved on $n$-vertex planar graphs with treewidth $k$ in
a time linear in $n$ and subexponential in $k$ if a 
tree decomposition of width $O(k)$ can be found in such a time.

We present the first algorithm that, on $n$-vertex planar graphs with
treewidth $k$, finds a tree decomposition of width $O(k)$ in such a time.
In more detail, our algorithm has a running time of $O(n k^{2}\log k)$.
We show 
the result as a special case of a result concerning
so-called weighted 
treewidth of weighted graphs.

\noindent {\bf Keywords:}  
 (weighted) treewidth, branchwidth, rank-width, planar graph,
                      linear time, bidimensionality.

\noindent {\bf ACM classification: F.2.2; G.2.2}
\end{abstract}

\section{Introduction}\label{sec:intro}
\newcommand{\PN}{\mathcal{N}}

The treewidth, extensively studied by Robertson and
Seymour~\cite{RobS84}, is one of
the basic parameters in graph theory. Intuitively,
the treewidth measures the similarity of a graph to a tree by means of a
so-called
tree decomposition.
A tree decomposition of
width $r$---defined precisely in the beginning of
Section~\ref{sec:ide}---is a decomposition of a graph $G$ into small
subgraphs 
part of a so-called bag
such that each bag
contains at most 
$r+1$ vertices and such that
the bags
are
connected by a tree-like structure.
The treewidth $\mathrm{tw}(G)$ of a graph $G$ is the smallest $r$ for which
$G$ has a tree decomposition of width~$r$.

Often, $\mathrm{NP}$-hard
problems are solved on graphs $G$
with
small treewidth
by the following
two steps: First, compute a tree decomposition
for $G$ of 
width~$r\in \Nat$
\newpage

\noindent and second, solve the problem by using this tree decomposition.
Unfortunately, there is a trade-off
between the running times of these two steps depending on our
choice of $r$.
Very often the first step is the bottleneck.
For example,
Arnborg, Lagergren and Seese \cite{ArnLS91} showed that, for
  all problems expressible in so-called
  linear extended monadic second order (linear EMSO), the second step
  runs on $n$-vertex graphs in a time linear in~$n$.
Demaine, Fomin, Hajiaghayi, and Thilikos \cite{DemFHT05} have shown that,
for many so-called bidimensional problems that are also
linear EMSO problems,
one can find a solution of a size $\ell$ in a given $n$-vertex graph---if one
exists---in a time
linear in $n$
and subexponential in $\ell$ as follows:
First, try to find a tree decomposition for $G$ of width
$r=\hat{c}\sqrt \ell$ for some constant $\hat{c}>0$.
One can choose $\hat{c}$ such that, if
 the
algorithm
fails, then there is no solution of size at most $\ell$. Otherwise,
in a second step, use the
tree decomposition obtained to solve the problem
by well known algorithms
in $O(n c^r)=O(n c^{\hat{c}\sqrt{\ell}})$ time for some constant $c$.
Thus, it is very important to support the first step
also in such a time. This, even for planar graphs,
was not possible by previously known algorithms.

{\bf{Recent results.}}
In the following overview over related results, 
$n$ denotes
the number
of vertices and $k$
the treewidth of the graph under
consideration.
{Tree decomposition and treewidth were 
introduced by Robertson and Seymour \cite{RobS84}, which
also presented the first algorithm for the computation of
the treewidth and a tree decomposition with a running time
polynomial in $n$ and exponential in $k$~\cite{RobS86}.
There are numerous papers with improved running times, 
as, e.g., \cite{ArnCP87,BodK91,LagA91,RobS95}.}
Here we focus on algorithms with running times 
either being polynomial in both $k$
and $n$, or being subquadratic in $n$. 
Bodlaender~\cite{Bod96} has shown that
a tree decomposition
can be found
in a time linear in $n$ and exponential in $k$.
However,
the running time of Bodlaender's algorithm is
practically infeasible already for very small $k$.
The algorithm
achieving the so far smallest
approximation ratio of the treewidth among the algorithms
with a running time
polynomial in $n$ and $k$ is the algorithm of
Feige, Hajiaghayi, and Lee \cite{FeiHL08}.
It constructs a tree decomposition of width 
$O(k\sqrt{\log k})$ thereby improving the bound
$O(k\log k)$ of Amir~\cite{Ami10}. 
In particular, no algorithms with constant approximation
ratios are known that are polynomial in the number of vertices and in the
treewidth.
One of the so far most efficient practical algorithms with constant approximation
ratio was presented by Reed
in 1992 \cite{Ree92}. 
His algorithm computes
a tree decomposition of 
width $3k+2$
in
$O(f(k) \cdot n \log n)$ time
for some exponential function $f$.
More precisely, this width is obtained after slight
modifications as observed by Bodlaender~\cite{Bod93}.

Better algorithms are known for the special case
of planar graphs.
Seymour and Thomas~\cite{SeyT94}
showed that
the so-called branchwidth $\mathrm{bw}(G)$
and
a so-called branch decomposition
of width $\mathrm{bw}(G)$ for a planar
graph $G$ can be computed in
$O(n^2)$ and $O(n^4)$ time, respectively.
A minimum branch decomposition of a
graph $G$ can
be used directly---like a tree decomposition---to
support efficient algorithms.
For each graph $G$, its branchwidth $\mathrm{bw}(G)$
is closely related to its treewidth $\mathrm{tw}(G)$; in detail,
$\mathrm{bw}(G) \leq \mathrm{tw}(G)+1 \leq \max( 3/2\,\mathrm{bw}(G),2)$ \cite{RobS91}.
Gu and Tamaki \cite{GuT08} improved the running time to $O(n^3)$
for constructing a branch decomposition and thus for finding a tree decomposition of
width $O(\mathrm{tw}(G))$.
They also showed that one can compute a tree decomposition of
width $(1.5+c)\mathrm{tw}(G)$ for a planar graph $G$ in $O(n^{(c+1)/c}\log
n)$ time for each $c\ge 1$ \cite{GuT11}.
Recently, Gu and Xu~\cite{GuX14} presented an algorithm to compute a constant factor
approximation of the treewidth in time $O(n\log^4 n\log k)$.
It is open whether deciding $tw(G) \le k$
is NP-complete or polynomial time solvable for planar graphs~$G$.

{\bf{Our results.}} 
In this paper, a weighted
graph $(G,c)$ is a graph $G=(V,E)$ with a {\cost} function
$c:V \rightarrow \Nat$ ($\Nat=\{1,2,3,\ldots\}$).
Moreover, $c_{\mathrm{max}}$ always denotes the maximum weight over all vertices.
In contrast to our conference version~\cite{KamT12}, we now consider the problem of finding a 
tree decomposition on weighted planar graphs. 
Roughly speaking, the weighted treewidth of a weighted graph is defined analogously as 
the (unweighted) treewidth, but instead of counting the vertices of a bag,
we sum up the weight of the vertices in the bag.
All results shown in this
paper for weighted graphs and weighted treewidth can be applied to
unweighted graphs and unweighted treewidth by 
setting $c(v)=1$ for all vertices $v$ of $G$. 
In Section~\ref{sec:bod}, we compute tree decompositions for 
$\ell$-outerplanar graphs with an algorithm different to Bodlaender's
algorithm~\cite{Bod98}, but
we obtain the same time bound and the same treewidth.
Our algorithm reduces $\ell$-outerplanar graphs to 
weighted $\ell$-outerplanar graphs, where the unweighted version is
$1$-outerplanar.
Another application of weighted treewidth is that
it allows us to
triangulate graphs with only a little increase in the weighted treewidth, which
improves our approximation ratio by about a factor of~$4$.

Interestingly, the generalization from unweighted to weighted graphs is possible
without increasing the asymptotic running time.
Moreover, 
we slightly modify our algorithm in such a way that we can
afterwards bound 
the number of so-called 
$({\mathcal S},\varphi)$-components, which improves the running time by
a factor $k$ compared to the running time shown in our conference version.

Given a weighted planar graph $(G,c)$
with $n$ vertices and
weighted tree\-width~$k$,
our algorithm
computes a tree decomposition for $G$
in time $O(n \cdot k^2 c_{\mathrm{max}} \log k)$
such that the vertices in each bag have a total {\cost} of $O(k)$. 
This means that, for unweighted graphs with $n$ vertices and
treewidth $k$, we obtain a tree
decomposition of width $O(k)$ in $O(n \cdot k^2\log k)$ time, which is
a better running time than that of Gu and Xu~\cite{GuX14} for
all $n$-vertex planar graphs of treewidth $k=O(\log n)$. We do
not focus on graphs with a
larger treewidth since for such graphs 
it is not clear whether
the second step mentioned for solving NP-hard problems
can be solved efficiently.

Our result can be used to find a
solution of size $\ell\in\Nat$ 
for many bidimensional graph problems on planar graphs
that are
expressible in linear EMSO 
in a time linear in $n$ and subexponential in $\ell$.
Such problems are, e.g.,
  {\textsc{Minimum Dominating Set}},
  {\textsc{Minimum Maximal Matching}}, and
  {\textsc{Minimum Vertex Cover}},
which all are NP-hard on planar graphs.

In contrast to general graphs, on planar graphs many graph parameters
as branchwidth and rank-width differ only by a
constant factor from the tree\-width~\cite{RobS91,FomSD10,Oum08}. 
Thus, 
our algorithm
also computes
a
constant factor approximation for these parameters on $n$-vertex planar graphs
in a time linear in~$n$.

\section{Main Ideas}\label{sec:ide}
\newcommand{\CC}{C}
\newcommand{\DD}{D}
Before we can describe our ideas, we precisely define 
tree decompositions and (weighted)
treewidth.

\begin{definition}[tree decomposition, bag, width, {\cost} of a bag]\label{def:treeD}
\hspace{-0.3mm}A {\em tree de\-compo\-sition} for an
unweighted graph $G=(V,E)$ or
for a weighted graph $G=(V,E)$ with a {\cost} function
$c:V\rightarrow \Nat$ is a pair $(T,B)$,
where %
$T=(W,F)$ is a tree and $B$ is a function that maps each
node $w$ of~$T$ to a subset %
of~$V$---called the {\em bag}
of~$w$---such that%
\begin{enumerate}\itemsep 2pt
\item each vertex of~$G$ is contained in a bag and each edge of~$G$ is
a subset of a bag,
\item for each vertex $v\in V$, the nodes whose bags contain $v$ induce a
subtree of~$T$.%
\end{enumerate}
In addition, the {\em unweighted width} of %
$(T,B)$---or short, the {\em width} of~$(T,B)$---is
$\max_{w\in W}\{|B(w)|-1\}$
and the {\em weighted width} is
$\max_{w\in W}\{c(B(w))-1\}$ with $c(B(w))=$
$\sum_{v \in B(w)}
c(v)$. The term $c(B(w))$ is also called the {\em {\cost}} of the bag of~$w$.
The {\em (weighted) treewidth} $\mathrm{tw}(G)$ of a graph $G$ is the smallest $k$ for which
$G$ has a tree decomposition of (weighted) width $k$.
\end{definition}

   For simplification, on weighted graphs the word treewidth 
   means weighted treewidth. However, in this section and 
   in Section~\ref{sec:alg} summarizing our main results we often refer explicitly 
   to weighted or unweighted treewidth.

We next describe our ideas for the
construction
 of a tree decomposition
of small unweighted treewidth and subsequently
generalize it to weighted tree\-width. 
In the case of unweighted plane graphs 
it is very useful to model
vertices as points in a landscape 
where
we assign a {\em height} to each vertex $v$.
For the time being,
the height
can be assumed to be the
length of a shortest path from $v$ to a vertex incident to the outer face
for some given planar embedding $\varphi$
where the length of a path is the number of its vertices---a more precise
definition is given in the next paragraph.
In particular, this is of interest 
for a graph $G$ if we can bound the height of all vertices of~$G$ by $O(tw(G))$
since, as part of our 
computation of a
tree decomposition, we split $G$ in some kind similar
to cutting a round cake into slices.
More exactly,
we use paths starting in a vertex $v^*$ of largest height and
following vertices with decreasing height until reaching a
vertex adjacent to the outer face. Technically, we realize
the splitting by putting the vertices of such a path into
one bag. However, we find such a tree decomposition of width $k$ only if the height 
of~$v^*$ is at most $k+1$ since, otherwise, the vertices of such a path can
not be all part of one bag.
In
the case of weighted plane graphs we have the problem that a large
{\cost} of a vertex increases the {\cost} of a bag so much that we have to
reduce the number of additional vertices that can
be put together with this vertex in one bag of a tree decomposition.
To translate the {\cost} of a vertex into our landscape model,
we consider a vertex not to be a single point %
with a
single height in the landscape, but as a {\em cliff}
leading from a lower height to an upper height.
Thus, instead of a single height we assign a {\em height interval} to each
vertex whose length can be considered as the length of the cliff and
is the {\cost} of the vertex minus one.

A weighted planar graph $(G,c)$ with an embedding $\varphi$ is called a {\em weighted
plane graph} $(G,\varphi,c)$, which we now consider.
We now precisely define %
the height interval %
of each vertex $v$.
It is
referred
to as $h_\varphi(v)=[h^-_\varphi(v),h^+_\varphi(v)]$. 
This means
that
one end of the cliff assigned to vertex $v$ has height
$h^-_\varphi(v)$ and the other end has height $h^+_\varphi(v)$. We also call
$h^-_\varphi(v)$ the {\em lower height} of~$v$ and $h^+_\varphi(v)$ the {\em
upper height} of~$v$. If $i\in h_\varphi(v)$ for some $i\in \Nat$, 
we also say $v$ is a vertex of {\em height} $i$.
The set of all
vertices incident to the outer face is called the {\em coast}
(of the plane graph).
To define the lower and upper heights %
of the vertices, we initially define a function $\eta$ with $\eta(v)=c(v)$
for all vertices~$v$.
We now use the concept of a so-called peeling consisting of a
sequence of peeling steps.
A {\em peeling step}
decrements $\eta(v)$ by one for all vertices $v$ that are part of the coast and
subsequently removes all vertices with $\eta(v)=0$.
Let us number the peeling steps by $1,2,3,\ldots$.
After the removal of all vertices, we set $h_\varphi(v)=[i-c(v)+1,i]$ for all
vertices $v$ that are removed in the $i$th peeling step ($i\in \Nat$).
The height interval of a vertex consists exactly of the numbers of the peeling
steps the vertex is incident to the outer face 
including the peeling step that removed the vertex.
For an example %
see also Fig.~\ref{fig:outerpl}.
A weighted graph $G$ is called
{\em weighted $\ell$-outerplanar}
if there is an embedding $\varphi$ of~$G$ such that
all vertices have upper height at most $\ell$. In this case, $\varphi$ is also
called 
{\em weighted $\ell$-outerplanar}.

We also want to remark that, if we assign weight one to all vertices, the %
definitions %
of weighted treewidth %
and of 
weighted $\ell$-outerplanar
graphs
correspond
to %
usual unweighted treewidth and to %
usual unweighted
$\ell$-outerplanar graphs, respectively.

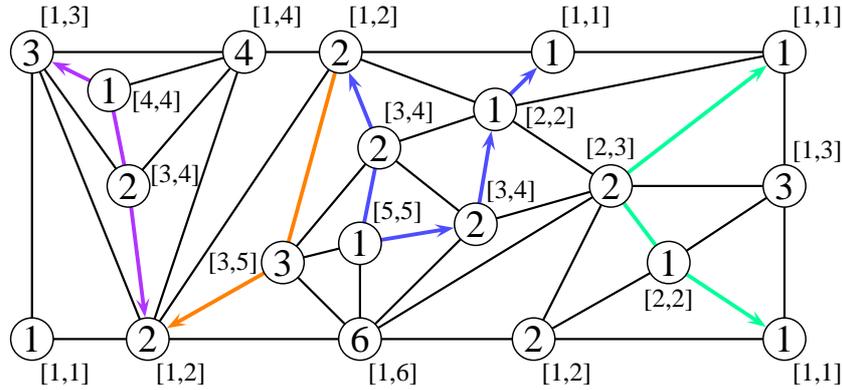
\begin{figure}[b!]
\vspace{1mm}
\begin{center}
\scalebox{1.45}{
\scalebox{1} 
{
\begin{pspicture}(0.1,-1.6625)(7.325,1.6625)
\definecolor{color24}{rgb}{0.0,1.0,0.6}
\definecolor{CadetBlue}{rgb}{0.7,0.2,1.0}
\definecolor{lightBlue}{rgb}{0.3,0.3,1.0}
\psline[linewidth=0.02cm](2.305,1.3125)(2.82,1.3125)
\psline[linewidth=0.02cm](3.18,1.3125)(4.745,1.3125)
\psline[linewidth=0.02cm](5.105,1.3125)(6.845,1.3125)
\psline[linewidth=0.02cm](7.025,1.1325)(7.025,0.2675)
\psline[linewidth=0.02cm](7.025,-0.0925)(7.025,-1.1325)
\psline[linewidth=0.02cm](6.845,-1.3125)(4.93,-1.3125)
\psline[linewidth=0.02cm](4.57,-1.3125)(3.355,-1.3125)
\psline[linewidth=0.02cm](2.995,-1.3125)(1.43,-1.3125)
\psline[linewidth=0.02cm](1.3069,-1.1417)(2.0681,1.1417)
\psline[linewidth=0.02cm](1.07,-1.3125)(0.38,-1.3125)
\psline[linewidth=0.02cm](0.2,-1.1325)(0.2,1.1325)
\psline[linewidth=0.02cm](0.38,1.3125)(1.945,1.3125)
\psline[linewidth=0.02cm](2.9002,1.1627)(1.3498,-1.1627)
\psline[linewidth=0.02cm](3.1685,1.2493)(4.2315,0.8507)
\psline[linewidth=0.02cm](3.3023,-1.1852)(4.0977,-0.3898)
\psline[linewidth=0.02cm](2.5902,-0.4742)(3.2348,0.2992)
\psline[linewidth=0.02cm](3.4906,0.3251)(4.0844,-0.1501)
\psline[linewidth=0.035cm,linecolor=lightBlue,arrowsize=0.05291667cm 2.0,arrowlength=1.4,arrowinset=0.4]{->}(3.2831,0.6046)(3.0743,1.1268)
\psline[linewidth=0.035cm,linecolor=orange,arrowsize=0.05291667cm 2.0,arrowlength=1.4,arrowinset=0.4]{->}(2.3187,-0.7018)(1.4236,-1.2133)
\psline[linewidth=0.02cm](2.6023,-0.7398)(3.0477,-1.1852)
\psline[linewidth=0.02cm](1.1831,-1.1454)(0.2669,1.1454)
\psline[linewidth=0.02cm](0.3046,1.166)(0.9704,0.234)
\psline[linewidth=0.035cm,linecolor=CadetBlue,arrowsize=0.05291667cm 2.0,arrowlength=1.4,arrowinset=0.4]{->}(1.0973,-0.0911)(1.2252,-1.114)
\psline[linewidth=0.02cm](6.8485,1.2772)(4.5765,0.8228)
\psline[linewidth=0.035cm,linecolor=color24](5.558,-0.0565)(5.867,-0.4685)
\psline[linewidth=0.035cm,linecolor=color24,arrowsize=0.05291667cm 2.0,arrowlength=1.4,arrowinset=0.4]{->}(5.5921,0.198)(6.8671,1.1897)
\psline[linewidth=0.02cm](5.3695,-0.0735)(4.8305,-1.1515)
\psline[linewidth=0.02cm](6.1248,-0.5127)(6.8752,-0.0123)
\psline[linewidth=0.035cm,linecolor=color24,arrowsize=0.05291667cm 2.0,arrowlength=1.4,arrowinset=0.4]{->}(6.1248,-0.7123)(6.8586,-1.2016)
\psline[linewidth=0.02cm](2.6496,-0.5688)(3.0004,-0.4812)
\psline[linewidth=0.035cm,linecolor=lightBlue,arrowsize=0.05291667cm 2.0,arrowlength=1.4,arrowinset=0.4]{->}(3.3526,-0.4079)(4.0277,-0.2954)
\psline[linewidth=0.02cm](3.3283,-1.2182)(5.2967,-0.0068)
\psline[linewidth=0.035cm,linecolor=lightBlue,arrowsize=0.05291667cm 2.0,arrowlength=1.4,arrowinset=0.4]{->}(4.2546,-0.0849)(4.3671,0.5902)
\psline[linewidth=0.02cm](1.1921,0.2242)(2.0079,1.1758)
\psline[linewidth=0.035cm,linecolor=orange](2.5224,-0.4388)(2.9526,1.1388)
\psline[linewidth=0.035cm,linecolor=lightBlue,arrowsize=0.05291667cm 2.0,arrowlength=1.4,arrowinset=0.4]{->}(4.5273,0.9148)(4.7836,1.1711)
\psline[linewidth=0.02cm](5.3002,0.1873)(4.5498,0.6877)
\psline[linewidth=0.02cm](1.0731,1.0119)(1.9519,1.2631)
\psline[linewidth=0.035cm,linecolor=CadetBlue](0.9353,0.786)(1.0397,0.264)
\psline[linewidth=0.035cm,linecolor=CadetBlue,arrowsize=0.05291667cm 2.0,arrowlength=1.4,arrowinset=0.4]{->}(0.739,1.043)(0.3789,1.2231)
\psline[linewidth=0.02cm](3.175,-0.6175)(3.175,-1.1325)
\psline[linewidth=0.035cm,linecolor=lightBlue](3.2103,-0.261)(3.3147,0.261)
\psline[linewidth=0.02cm](3.5208,0.4944)(4.2292,0.7306)
\psline[linewidth=0.02cm](4.3981,-0.2131)(5.2769,0.0381)
\psline[linewidth=0.02cm](5.63,0.0875)(6.845,0.0875)
\psline[linewidth=0.02cm](4.9063,-1.2232)(5.8187,-0.7018)
\pscircle[linewidth=0.0135,dimen=outer,fillstyle=solid](2.125,1.3125){0.2}
\usefont{T1}{ptm}{m}{n}
\rput(2.125,1.3125){4}
\rput(2.125,1.3125){\rput(0.3,0.32){\scalebox{0.75}{\small[1,4]}}}
\pscircle[linewidth=0.0135,dimen=outer,fillstyle=solid](3.0,1.3125){0.2}
\usefont{T1}{ptm}{m}{n}
\rput(3.0,1.3125){2}
\rput(3.0,1.3125){\rput(0.3,0.32){\scalebox{0.75}{\small[1,2]}}}
\pscircle[linewidth=0.0135,dimen=outer,fillstyle=solid](4.925,1.3125){0.2}
\usefont{T1}{ptm}{m}{n}
\rput(4.925,1.3125){1}
\rput(4.925,1.3125){\rput(0.3,0.32){\scalebox{0.75}{\small[1,1]}}}
\pscircle[linewidth=0.0135,dimen=outer,fillstyle=solid](7.025,1.3125){0.2}
\usefont{T1}{ptm}{m}{n}
\rput(7.025,1.3125){1}
\rput(7.025,1.3125){\rput(0.3,0.32){\scalebox{0.75}{\small[1,1]}}}
\pscircle[linewidth=0.0135,dimen=outer,fillstyle=solid](7.025,0.0875){0.2}
\usefont{T1}{ptm}{m}{n}
\rput(7.025,0.0875){3}
\rput(7.025,0.0875){\rput(0.3,0.32){\scalebox{0.75}{\small[1,3]}}}
\pscircle[linewidth=0.0135,dimen=outer,fillstyle=solid](7.025,-1.3125){0.2}
\usefont{T1}{ptm}{m}{n}
\rput(7.025,-1.3125){1}
\rput(7.025,-1.3125){\rput(0.3,-0.32){\scalebox{0.75}{\small[1,1]}}}
\pscircle[linewidth=0.0135,dimen=outer,fillstyle=solid](4.75,-1.3125){0.2}
\usefont{T1}{ptm}{m}{n}
\rput(4.75,-1.3125){2}
\rput(4.75,-1.3125){\rput(0.3,-0.32){\scalebox{0.75}{\small[1,2]}}}
\pscircle[linewidth=0.0135,dimen=outer,fillstyle=solid](3.175,-1.3125){0.2}
\usefont{T1}{ptm}{m}{n}
\rput(3.175,-1.3125){6}
\rput(3.175,-1.3125){\rput(0.3,-0.32){\scalebox{0.75}{\small[1,6]}}}
\pscircle[linewidth=0.0135,dimen=outer,fillstyle=solid](1.25,-1.3125){0.2}
\usefont{T1}{ptm}{m}{n}
\rput(1.25,-1.3125){2}
\rput(1.25,-1.3125){\rput(0.3,-0.32){\scalebox{0.75}{\small[1,2]}}}
\pscircle[linewidth=0.0135,dimen=outer,fillstyle=solid](4.4,0.7875){0.2}
\usefont{T1}{ptm}{m}{n}
\rput(4.4,0.7875){1}
\rput(4.4,0.7875){\rput(0.5,-0.075){\scalebox{0.75}{\small[2,2]}}}
\pscircle[linewidth=0.0135,dimen=outer,fillstyle=solid](0.2,-1.3125){0.2}
\usefont{T1}{ptm}{m}{n}
\rput(0.2,-1.3125){1}
\rput(0.2,-1.3125){\rput(0.3,-0.32){\scalebox{0.75}{\small[1,1]}}}
\pscircle[linewidth=0.0135,dimen=outer,fillstyle=solid](0.2,1.3125){0.2}
\usefont{T1}{ptm}{m}{n}
\rput(0.2,1.3125){3}
\rput(0.2,1.3125){\rput(0.3,0.32){\scalebox{0.75}{\small[1,3]}}}
\pscircle[linewidth=0.0135,dimen=outer,fillstyle=solid](4.225,-0.2625){0.2}
\usefont{T1}{ptm}{m}{n}
\rput(4.225,-0.2625){2}
\rput(4.225,-0.2625){\rput(0.32,0.3){\scalebox{0.75}{\small[3,4]}}}
\pscircle[linewidth=0.0135,dimen=outer,fillstyle=solid](2.475,-0.6125){0.2}
\usefont{T1}{ptm}{m}{n}
\rput(2.475,-0.6125){3}
\rput(2.475,-0.6125){\rput(-0.45,0.0){\scalebox{0.75}{\small[3,5]}}}
\pscircle[linewidth=0.0135,dimen=outer,fillstyle=solid](3.35,0.4375){0.2}
\usefont{T1}{ptm}{m}{n}
\rput(3.35,0.4375){2}
\rput(3.35,0.4375){\rput(0.27,0.325){\scalebox{0.75}{\small[3,4]}}}
\pscircle[linewidth=0.0135,dimen=outer,fillstyle=solid](1.075,0.0875){0.2}
\usefont{T1}{ptm}{m}{n}
\rput(1.075,0.0875){2}
\rput(1.075,0.0875){\rput(0.425,0.1){\scalebox{0.75}{\small[3,4]}}}
\pscircle[linewidth=0.0135,dimen=outer,fillstyle=solid](5.45,0.0875){0.2}
\usefont{T1}{ptm}{m}{n}
\rput(5.45,0.0875){2}
\rput(5.45,0.0875){\rput(0.0,0.342){\scalebox{0.75}{\small[2,3]}}}
\pscircle[linewidth=0.0135,dimen=outer,fillstyle=solid](5.975,-0.6125){0.2}
\usefont{T1}{ptm}{m}{n}
\rput(5.975,-0.6125){1}
\rput(5.975,-0.6125){\rput(0.0,-0.34){\scalebox{0.75}{\small[2,2]}}}
\pscircle[linewidth=0.0135,dimen=outer,fillstyle=solid](3.175,-0.4375){0.2}
\usefont{T1}{ptm}{m}{n}
\rput(3.175,-0.4375){1}
\rput(3.175,-0.4375){\rput(0.34,0.29){\scalebox{0.75}{\small[5,5]}}}
\pscircle[linewidth=0.0135,dimen=outer,fillstyle=solid](0.9,0.9625){0.2}
\usefont{T1}{ptm}{m}{n}
\rput(0.9,0.9625){1}
\rput(0.9,0.9625){\rput(0.432,-0.1){\scalebox{0.75}{\small[4,4]}}}
\end{pspicture}
}
         }%
\end{center}
\vspace{1mm}
\caption{A weighted plane graph $G$ with its vertices labeled by their weights and
the resulting height intervals written beside the vertices. Arrows indicate
a neighbor with smallest upper height. The thick edges define so-called
perfect crest separators defined in Section~\ref{sec:goodmount}. In our example
where $G$
is assumed to be a cake, the perfect crest separators are the paths used
to cut the cake into slices.}
\label{fig:outerpl}
\end{figure}

Observe that each vertex of lower height $q \ge 2$ is incident to a face
with a vertex of upper
height $q-1$  
and that the upper height of a vertex $v$ is 
the total weight of a shortest weighted path from $v$ to the coast.
For technical reasons and
to simplify our definitions, our observations and our lemmas, 
in the rest of
the paper
we 
usually consider only {\em almost triangulated graphs}, i.e.,
plane graphs in which the boundary of each inner face 
consists of exactly three vertices and edges.  
As a consequence, each vertex of lower height $q \ge 2$
is adjacent to a vertex of upper height $q-1$.
If a weighted plane graph   
$(H,\psi,c)$ of treewidth $k-1$ and maximal vertex weight
$c_{\mathrm{max}}$ is not almost triangulated, we can multiply each weight of a
vertex by $x\in \Nat$ to obtain a weighted plane graph
$(H',\psi,xc)$ of treewidth $xk-1$. Afterwards, 
it can be %
triangulated 
by simply adding a new vertex of weight $1$ into each inner face
and by connecting this vertex by edges 
with all vertices on the boundary of that
inner face. 
Let $(H'',\psi'',c'')$ be the graph obtained.
Theorem 2 in \cite{KloLL02} shows that %
a tree decomposition $(T',B')$ for $H'$ %
can be turned into a tree decomposition for  
$H''$ by adding at most $3k-2$ of the new vertices into each bag of~$(T,B)$.
Thus, we have a tree decomposition for $H''$ where 
the weight of every bag is bounded by
$xk+3k-2$. 
For some $\alpha,\beta \in \Nat$, assume that we can compute a tree decomposition
$(T'',B'')$ for $H''$ of width 
$\alpha \cdot \mathrm{tw}(H'')+\beta c'_{\mathrm{max}}-1$ where $c'_{\mathrm{max}}$ is the
maximal weight of a vertex in $H''$.
Then the size of the bags of~$(T'',B'')$ is bounded by 
 $\alpha(xk+3k-2)+\beta c'_{\mathrm{max}}$.
Removing the new vertices from $(T'',B'')$,
we get a tree decomposition for $H'$. %
If we finally take this tree decomposition as a tree decomposition
for $H$, which has the vertices of smaller weight, the weight  of
every bag is bounded by $\lfloor (\alpha (xk+3k-2)+\beta x c_{\mathrm{max}})/x \rfloor=\lfloor
\alpha k+\alpha(3k-2)/x
+\beta c_{\mathrm{max}}\rfloor$. 
This means that we can compute a tree decomposition for $H$ of width 
$(\alpha+\epsilon)k + \beta c_{\mathrm{max}} + O(1)$ if we choose $x$ large
enough.

To describe our ideas, we need some more %
definitions.
For a subgraph $G'$ of a %
weighted plane
graph $(G,\varphi,c)$,
we use $\varphi|_{G'}$ to denote
the embedding of~$G$ restricted to the vertices and edges of~$G'$.
For a graph $G=(V,E)$ and a vertex set $V'\subseteq V$,
we let $G[V']$ be the subgraph of~$G$ induced by the vertices
of~$V'$; and we define $G-V'$ to be the graph
$G[V\setminus V']$. If a graph $G$ is a subgraph of another
graph $G'$, we write $G\subseteq G'$.
Through the whole paper, path and cycles are {\em simple}, i.e., no vertex and
no edge
appears more than once in it.
For a graph $G=(V,E)$, we also say that
a vertex set $S\subseteq V$ {\em disconnects} two vertex sets $A,B\subseteq V$
{\em weakly} if no connected component of~$G-S$ contains vertices of both $A$
and $B$. $S$ {\em disconnects} $A$ and $B$ {\em strongly} if
additionally $S\cap(A\cup B)=\emptyset$ holds. If a vertex set $S$
strongly disconnects two non-empty vertex sets, we say that $S$ is a
{\em separator} (for these vertex sets). A special kind of separators
being of great significance for our paper is defined in the following
definition. 

\begin{definition}[coast separator]
A set $Y$ that strongly disconnects a vertex set
$U$ from the %
coast is called %
a {\em coast separator} (for~$U$).
\end{definition} 

As a consequence of the definition above, the vertices of a coast separator are disjoint from
the coast.
Finally, we define the {\em
weighted size of a separator}
and the {\em weighted length of a path or cycle} as
the sum over the {\cost}s of its vertices.

 As observed by Bodlaender~\cite{Bod98},
one can easily construct
a tree decomposition of width $3\ell-1$ for
an $\ell$-outerplanar unweighted graph $G=(V,E)$ in $O(\ell|V|)$ time.
In Section~\ref{sec:bod}, we show that
his algorithm can be extended to
weighted $\ell$-outerplanar
graphs.
One idea
to find a tree decomposition of weighted width $O(k)$ for
an $\omega(k)$-weighted-outer\-pla\-nar graph %
of weighted treewidth $k$
is to search for
a coast separator~$Y$ of weighted size $O(k)$ that disconnects the
vertices of large %
lower
height strongly from the coast by
applying Theorem~\ref{the:Sep} below.
For proving the theorem we use the
following two well-known observations, which follow from the
definition of a tree decomposition.

\begin{observation}\label{obs:weakSep}
Let $(T,B)$ be a tree decomposition for 
a weighted graph $(G,c)$, and let $W$ and $F$ be the set of nodes and arcs,
respectively, of~$T$.
Take
$\{w',w''\}\in F$. For each pair of subtrees %
$(W_1,F_1)$ and $(W_2,F_2)$ part of different trees in
the forest $(W,F \setminus \{w',w''\})$, $B(w')\cap B(w'')$ weakly
disconnects
$\bigcup_{w \in W_1}B(w)$
and
$\bigcup_{w \in W_2}B(w)$.
\end{observation}

\begin{observation}\label{obs:connected}
Let $(T,B)$ and $(G,c)$ be defined as in Obs.~\ref{obs:weakSep},
and let $V'$ be a subset of the vertices of~$G$ such that
$G[V']$ is connected. Then the nodes of~$T$ whose bags contain
at least one
vertex of~$V'$ induce a connected subtree of~$T$.
\end{observation}

\begin{theorem}\label{the:Sep}
Let $(G,\varphi,c)$ be a weighted plane graph of
weighted %
tree\-width $k\in \Nat$.
Moreover,
let $V_1$ and $V_2$ be connected sets of vertices of~$G$
such that
$(\min_{v\in V_2}h^-_\varphi(v))-(\max_{v\in V_1}h^+_\varphi(v)) \geq k+1$.
Then, there exists a
set $Y$
  of weighted size at most $k$ that strongly disconnects
 $V_1$ and~$V_2$.
 \end{theorem}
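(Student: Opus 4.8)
The plan is to argue by contradiction: assuming that no vertex set of size at most $k$ strongly disconnects $V_1$ and $V_2$, I will produce a $(k{+}1)\times(k{+}1)$ grid minor of $G$, contradicting $\mathrm{tw}(G)=k$ (the $(k{+}1)\times(k{+}1)$ grid has treewidth $k{+}1$, and treewidth is minor-monotone). The work splits into an elementary observation about heights, a Menger step, and a grid-assembly step, in that order.

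First I would record the ``layered separator'' fact. For each $i$, let $L_i$ be the set of height-$i$ vertices and $G_i:=G[\{v:h_\varphi(v)\ge i\}]$. Then $L_i$ weakly --- hence, being disjoint from the two sets, strongly --- disconnects $\{v:h_\varphi(v)<i\}$ from $\{v:h_\varphi(v)>i\}$; consequently the heights of the endpoints of any edge differ by at most $1$, and any path from a vertex of height $\le a$ to one of height $\ge b$ meets every $L_i$ with $a<i<b$. The proof is topological: peeling the coast off a connected plane graph enlarges its outer face so that it contains exactly the deleted vertices and edges, so inductively every vertex of height $<i$ lies in the outer face of $G_i$; a path that leaves $\{h<i\}$ while avoiding $L_i$ would have to enter $G_i$ through a vertex of $G_i$ bounding the face of $G_i$ that holds the portion it came from, namely the outer face, forcing that vertex to have height exactly $i$.

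Now put $a=\max_{v\in V_1}h_\varphi(v)$ and $b=\min_{v\in V_2}h_\varphi(v)$, so $b-a\ge k+1$; I may assume $G$ is connected, since otherwise $\emptyset$ works. Assume there is no strong disconnector of size $\le k$. Contracting the connected sets $V_1,V_2$ to single vertices $s,t$ gives a graph of treewidth $\le k$ in which every $s$--$t$ separator avoiding $\{s,t\}$ has size $\ge k+1$ (such separators correspond exactly to strong disconnectors of $V_1,V_2$), so by Menger's theorem there are $k+1$ internally vertex-disjoint $s$--$t$ paths; un-contracting yields $k+1$ pairwise disjoint connected ``columns'' $I_0,\dots,I_k\subseteq V\setminus(V_1\cup V_2)$, each adjacent to $V_1$ at one end and to $V_2$ at the other. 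By the layered observation each $I_i$ crosses each of the $k$ buffer layers $L_{a+1},\dots,L_{a+k}$. For $1\le j\le k$ let $C_j$ be the connected component of $G_{a+j}$ containing $V_2$ and set $R_j:=L_{a+j}\cap C_j$; following a column $I_i$ to its last crossing of height $a+j$ shows that this crossing already lies in $C_j$, so $R_j\neq\emptyset$ and $R_j$ meets every $I_i$, while $R_j$ --- being the coast of the connected plane graph $C_j$ --- is connected, and the $R_j$ are pairwise disjoint and disjoint from $V_1\cup V_2$ (distinct heights strictly between $a$ and $b$).

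Finally I would assemble the grid minor. Along each column $I_i$ the last crossings of $R_1,\dots,R_k$ occur in this order (a later layer's last crossing comes after an earlier one's), and $I_i$ is attached to $V_1$ before all of them and to $V_2$ after all of them; by planarity all $k+1$ columns meet each transversal $R_j$, and attach to $V_1$ and to $V_2$, in one common cyclic order. Hence the $k+1$ columns together with the $k+1$ transversals $R_1,\dots,R_k,V_2$ form a grid-like crossing pattern, from which I extract a $(k{+}1)\times(k{+}1)$ grid minor in the usual way: the branch set of cell $(i,j)$ is the stretch of $I_i$ between its $j$-th and $(j{+}1)$-st transversal crossing together with the stretch of transversal $j$ between its $i$-th and $(i{+}1)$-st column crossing, with $V_2$ supplying the outermost row via its connectedness and its adjacencies to the column ends. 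This contradicts $\mathrm{tw}(G)=k$, proving the theorem. The technical heart is exactly this last step: checking that the crossing orders genuinely match (this is where planarity and the nesting $C_1\supseteq C_2\supseteq\cdots\supseteq C_k\supseteq V_2$ enter) and that the branch sets can be chosen pairwise disjoint --- in particular the correct handling of $V_2$ (symmetrically $V_1$) as the extra row that one needs because, when the gap is exactly $k+1$, only $k$ honest buffer layers are available.
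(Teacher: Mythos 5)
Your route is genuinely different from the paper's. The paper takes a width-$k$ tree decomposition minimizing the number of bags containing both (merged) terminals, observes that a bag of size $k+1$ containing both cannot meet all of the $\ge k$ intermediate height layers, uses the missed layer to split the decomposition and contradict minimality, and then reads the separator off two adjacent bags in the remaining case. Your contrapositive via Menger plus an excluded grid minor is instead the route of Robertson--Seymour--Thomas's planar grid theorem. Your preliminary steps are sound: adjacent heights differ by at most one and each layer separates lower from higher heights; $s$ and $t$ are non-adjacent because the gap is $\ge k+1\ge 3$, so Menger applies; your ``last crossing'' argument correctly shows each column meets each $R_j$; and each $R_j$ is connected (being the outer boundary of the connected plane graph $C_j$).

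The gap is that the grid-assembly step --- which you yourself call the technical heart --- is a plan rather than a proof, and as stated it has concrete holes. First, ``all columns meet each transversal in one common cyclic order, by planarity'' is precisely the hard content of the planar grid theorem; moreover $R_j$ is only a connected vertex set, so ``the stretch of transversal $j$ between its $i$-th and $(i+1)$-st column crossing'' is not even well-defined until you replace $R_j$ by, say, the closed outer boundary walk of $C_j$ and then prove that the resulting cyclic orders agree for all $j$ and with the attachment order on $V_1$ and $V_2$. Second, disjointness of branch sets fails as defined: the stretch of $I_i$ between its last $R_j$-crossing and its last $R_{j+1}$-crossing stays at height $>a+j$ but may climb to height $a+j'$ with $j'>j+1$ and return, so it can contain (non-last) vertices of $R_{j'}$ that also lie in the stretches you assign to cells of row $j'$; you must either reroute the columns to be height-monotone (and then re-establish their pairwise disjointness, which Menger no longer supplies) or trim the transversals, and neither is automatic. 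Third, the extra row really is needed --- a $k\times(k+1)$ grid has treewidth only $k$, which does not contradict $\mathrm{tw}(G)=k$ --- so you must split $V_2$ into $k+1$ connected pieces with the correct mutual and column adjacencies, yet another ordering argument you have not supplied. None of this looks unfixable, since the cylinder-like structure you exhibit does force treewidth at least $k+1$; but filling these holes amounts to reproving the hard direction of the planar grid theorem, whereas the paper's tree-decomposition surgery settles the statement in a few lines.
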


\begin{proof}
We exclude the case $k=1$ since it is well known that graphs of
tree\-width~$1$ are forests, i.e., 
$h^-_\varphi(v)=1$
holds for all vertices $v$.
The same is true for graphs of weighted treewidth $1$.
Consequently, no sets $V_1$ and $V_2$ with the properties described in the
theorem exist in the case of forests. %

Let $(T,B)$
be a tree decomposition of width $k$
with a smallest number of
bags containing both at least one
vertex of~$V_1$ and at least one
vertex of~$V_2$.
If there is no
such bag, then 
for each $i\in\{1,2\}$, the nodes of~$T$ whose bags contain at least
one vertex of~$V_i$  induce a subtree of~$T$ (Obs.~\ref{obs:connected}), and 
we thus can find
two closest nodes $w_1$ and $w_2$ in $T$ with
$V_1\cap B(w_1)\neq \emptyset \neq V_2\cap B(w_2)$.
For
the node $w'$ adjacent to $w_2$
on the $w_1$-$w_2$-connecting path in $T$,
the set $B(w')\cap B(w_2)$ %
is a separator of weighted size at most
$k$ for %
$V_1$ and $V_2$ (Obs.~\ref{obs:weakSep}).
Hence, let us assume that
there is at least one node $w$ in $T$ with
its bag containing
both a vertex $v_1\in V_1$ and a vertex $v_2 \in V_2$.
Since %
the {\cost} of~$B(w)$ is at most $k+1$,
for at least one number $i$
with $h^+_\varphi(v_1)< i< h^-_\varphi(v_2)$,
there is no vertex in $B(w)$ that has a height interval containing $i$.
In other words, no vertex in $B(w)$ is a vertex of height $i$.
Since
$V_2$ is connected and
$\min_{v\in V_2}h^-_\varphi(v))>i$,
there
is a connected set $Z$ that consists
exclusively  of vertices of
height $i$ with $Z$ disconnecting $V_2$ %
from all vertices
$u$ with $h_\varphi(u)^+<i$, i.e., in particular from
$V_1$.
In fact,
$Z$ can be chosen as
the set of vertices of the coast of the connected component that contains $V_2$
after $i-1$ peeling steps and is therefore connected. Thus,
the nodes of~$T$, whose bags contain at least one vertex of~$Z$, induce a subtree
$T'$ of~$T$ (Obs.~\ref{obs:connected}).
Therefore, it is possible to
root $T$ such that $w$ is a child of the root
and such that the subtree
$T_w$ of~$T$ rooted in $w$ does not contain any node of~$T'$.
We then replace $T_w$ by two copies $T_1$ and $T_2$ of~$T_w$
and similarly the
edge between the root $r$ of~$T$ and the root of~$T_w$
by two edges connecting $r$ with the root of~$T_1$ and $T_2$,
respectively.
For each node $w'$ in $T_w$ with copies $w'_1$ and $w'_2$ in $T_1$ and
$T_2$, respectively,
we
define the bag $B(w'_1)$
to
consist of those vertices of the
bag $B(w')$
that are also contained
in the connected component of~$G[V\setminus Z]$ containing
$V_2$, and $B(w'_2)$
should contain the remaining vertices
of~$B(w')$.
Since there are no edges between the vertices of the
connected component of~$G[V\setminus Z]$ containing $V_2$ and
the vertices of other connected components of
$G[V\setminus Z]$ and since the bags of~$T_w$ contain no vertex
of~$Z$,
for each edge, both of its endpoints still appear
in at least one bag
after the replacement described above.
To sum up, the
replacement
leads to a tree decomposition
of width $k$ with a lower number of bags containing both
a vertex of~$V_1$ and of~$V_2$.
Contradiction.
\end{proof}

Assume that we are given a connected weighted plane graph
$(G,\varphi,c)$ with $G=(V,E)$
and weighted treewidth $k\in\Nat$ that contains exactly one so-called crest. 
\begin{definition}[crest, upper/lower height]
For a weighted plane graph,
a maximal connected set $H$ of vertices
of the same upper height is a
{\em crest} if no vertex of~$H$ is connected to a
vertex of larger upper height.
The {\em (upper) height} of a crest is the upper height of its vertices,
and the {\em lower height} is the minimal lower height among the lower heights of its
vertices.
\end{definition}
W.l.o.g.,
$|V|>1$. Thus, every vertex is incident to an edge, which implies that 
$c_{\mathrm{max}}\le k$, i.e.,
$c(v)\le k$
for all vertices $v$
since each vertex must be contained with another vertex
of {\cost} at least~1 in a common bag of total {\cost} at most~$c(v)+1$.
We can try to
construct a tree decomposition for $G$
as follows:

We will construct a series of weighted subgraphs $(G',c')$ of~$(G,c)$,
where
the {\cost} function $c'$ %
should be implicitly defined by
the restriction of
$c$ to the vertices of the subgraph.
Initialize $G'_1=(V'_{1},E'_{1})$ with $G$.  For $i=1,2,\ldots$, as long as
$G'_i$ %
has
only one crest and has
vertices of lower height at least 
$2k+1$
(which are connected since $G'_i$ has only one crest)
apply
Theorem~\ref{the:Sep} to obtain a separator $Y_i$ of weighted size at most $k$ separating
the vertices of lower height at least $2k+1$
from %
all vertices of upper height at most $k$.
This means that we
separate the vertices of large lower height from all vertices of the coast since a vertex $v$
of the coast
can
have an upper height of at most $c(v)\le k$. Thus, $Y_i$ is a coast separator.
Then, define $G'_{i+1}=(V'_{i+1},E'_{i+1})$
as the subgraph of~$G'_i$ induced by the vertices of~$Y_i$ and
of the connected component of~$G'_i\setminus Y_i$ that contains
the crest of~$G'_i$.
Moreover, let
$G_i=G'_i[Y_i\cup(V'_i\setminus V'_{i+1})]$.
If the recursion
stops with a %
weighted $O(k)$-outerplanar graph $G'_j$ ($j\in \Nat$),
we set
$G_{j}=G'_j$
and
construct a tree decomposition for $G$ as follows:
First, compute a tree decomposition $(T_i,B_i)$ of weighted width $O(k)$
for each $G_i$
($i \in \{1,\ldots,j\}$). This is possible
since $G_i$ is weighted $O(k)$-outerplanar.
Second, set
$Y_0=Y_{j}=\emptyset$. Then, for all $i\in\{1,\ldots,j\}$,
add the vertices of $Y_i \cup Y_{i-1}$
to all bags of~$(T_i,B_i)$.
Finally, for %
all $i\in\{1,\ldots,j-1\}$,
connect an arbitrary node of~$T_i$ with an arbitrary node
of~$T_{i+1}$.
This leads to a tree decomposition for $G$ of weighted width $O(k)$.

If we are given a weighted plane  graph $(G,\varphi,c)$ with
weighted treewidth $k\in\Nat$
that has more than
one crest
of lower height at least $2k+1$,
(or if this is true for one of the subgraphs $G'_i$ defined above)
we cannot
apply Theorem~\ref{the:Sep} to find one coast separator separating
simultaneously all vertices of lower height
at least $2k+1$ from the coast
since these vertices
may not be
connected. For cutting off the vertices
of large height, one might use
several coast separators; one
for each connected
component induced by the vertices of lower height at least $2k+1$.
However, if we insert the vertices of all coast separators
into every bag of a tree decomposition for the %
graph
with the vertices of small lower
height,
this may increase the width of the
tree decomposition by more than a constant factor since 
there may be more than a constant number of coast separators.
{This is the reason why}, for some suitable linear function
$f:\Nat \rightarrow \Nat$ and some constant $q\in \Nat$, we search
for further
separators called {\em perfect crest separators} that
are disjoint from the crests and that
partition
our graph~$G$ (or~$G'_i$) into %
smaller
subgraphs---called {\em components}---such
that, %
for each component~$C$ containing a non-empty set $V'$ of vertices of
lower height %
at least $f(k)$, there is a set $Y_{\CC}$ {of vertices} with
the following properties:\label{page:ideas}%
\begin{itemize}
\item[(P1)] $Y_{\CC}$
is a coast separator for $V'$ of weighted size $O(k)$.
\item[(P2)] $Y_{\CC}$ is contained in ${\CC}$.
\item[(P3)] $Y_{\CC}$ is disjoint to the set of vertices with lower
height $\le q$.
\end{itemize}

The main idea is that---in some kind similar to the construction
above---we want to construct a tree decomposition separately for
each component and afterwards to combine these tree decompositions
to a tree decomposition of the whole graph. The
properties above should guarantee that,
for each tree decomposition computed for a component,
we
have to add the vertices of at most one coast separator
{into %
the bags of that tree decomposition}.
We next {try to guarantee}
(P1)-(P3).

By making
the components so small that each component has at most one maximal connected
set of vertices of height at least 
$f(k)$,
we can easily
find a set of coast separators satisfying (P1) %
by using Theorem~\ref{the:Sep}.
We next try to find some constraints for the perfect crest separators
such that we can also guarantee (P2).
Recall that we assume that
our graph is almost triangulated.
Thus, 
the vertices
of a %
coast separator of minimal weighted size induce a unique
cycle.
 Suppose for a moment that it is possible
to choose
each perfect crest separator %
as the vertices of a path %
with the property that,
for each pair of consecutive vertices $u$ and $v$ with $u$
before $v$,
the upper height of~$v$ is one smaller than the lower height of
$u$, i.e., $h^+_\varphi(v)=h^-_\varphi(u)-1$.
In Section~\ref{sec:goodmount}, we call such a path a {\em down path} if it also has some
additional properties.
For a down path $P$, there is no path
that connects two vertices $u$ and $v$ of~$P$ with a
strictly shorter weighted length than the subpath of~$P$ from $u$ to $v$.
Consequently,
whenever we search
for a coast separator of smallest weighted size for a connected
set $V'$ {in a component $C$}, there is no need
to consider any coast separator with a subpath $Q$
{consisting of vertices that are strongly disconnected
from the vertices of~$C$ by the vertex set of~$P$.}
{Note that it is still possible that vertices
of a coast separator belong to
a perfect
crest
separator.
To guarantee that (P2) holds,
in a more precise definition of the
components, we let
the vertices and edges
of a perfect crest
separator %
belong to two components on
`both sides' of the perfect crest separator.
Then we can observe that,} if we
choose the perfect crest separators as down paths
not containing any vertex of any crest,
each maximal connected set of vertices
of lower height at least %
$f(k)$
in a component ${\CC}$
has a coast separator $Y_C$ that is completely contained
in ${\CC}$. %

Unfortunately, the vertex set of one %
down path can
not be a separator.
For two down paths $P_1$ and $P_2$ that start in
  two
adjacent vertices, we use $P_1\circ P_2$ to denote
the concatenation of the reverse path of~$P_1$, a path $P'$,
and the path $P_2$, where $P'$ is the path induced by the
edge connecting the
first vertices of~$P_1$ and $P_2$.
The idea is to define a
perfect crest separator
as the vertex set of such a concatenation $P_1\circ P_2$.
For more information on crest separators, see
Section~\ref{sec:goodmount}.
With our new definition of a perfect crest separator
we cannot avoid in general that a coast separator
$Y_C$ for the crest of a component $C$
crosses a perfect crest separator and uses vertices
outside $C$. Thus (P2) may be violated. To
handle this problem we
define a {\em minimal coast separator}
for a connected set $S$ in a graph $G$ to be a coast separator
$Y$ for $S$ that %
has minimal weighted size
such
that among all such coast separators
the subgraph of~$G$ induced by the vertices of~$Y$ and the vertices of the
connected component \mbox{of~$G\setminus Y$} containing $S$ has a minimal
number of inner faces.
Whenever a minimal coast separator for a connected
set $S$ in a component~$C$ crosses a perfect crest separator,
the part of the minimal coast separator outside $C$ forms
a so-called pseudo shortcut. Further details on pseudo shortcuts and their
computations are described in Section~\ref{sec:interact}.
Lemma~\ref{lem:enclosev} 
shows that,
if a pseudo shortcut $P$ is part of a
minimal
coast separator~$Y_{\CC}$ for a
connected set of vertices
in a component~${\CC}$ and if it passes %
through %
another component~$C'$, $Y_{\CC}$ also
separates all vertices of lower height at least $2k+1$
in $C'$
from the
coast. Then,
{we merge}
{$C$ and $C'$}
to one super component ${\CC}^*$.
After a similar merging for each pseudo shortcut passing through
another component,
both properties (P1) and (P2) hold. %
For guaranteeing property (P3), 
vertices of height $\le c$ and, in particular, the vertices of 
the coast play a special role for several definitions, e.g.,
for the pseudo          
shortcuts or the so-called lowpoints of a perfect crest separator.

Given a perfect crest separator $X=P_1\circ P_2$,
the idea is to %
find tree decompositions $(T_1,B_1)$
and $(T_2,B_2)$ for the two components of~$G$
on `either sides' of~$X$
such that %
$T_i$, for each
$i\in\{1,2\},$ has a node $w_i$  with $B_i(w_i)$ containing all
vertices of~$P_1$ and~$P_2$. By inserting an
additional edge $\{w_1,w_2\}$ we then obtain a tree decomposition
for the whole graph.
However, in general we are given a set $\mathcal{X}$ of
perfect crest separators that splits our graph into components
for which (P1)-(P3) holds.
If we want to construct a tree decomposition for
a component, we usually have to guarantee that, for each
perfect crest separator $X\in\mathcal{X}$, there is
a bag %
containing all vertices of~$X$. Since we can use
the techniques described above to cut off the vertices of
large
height
from each component, it remains
to find such a tree decomposition for the remaining
$O(k)$-weighted-outerplanar subgraph of the component.
Because of the simple structure of our perfect crest separators
we can indeed find such a tree decomposition by
extending
{Bodlaender's algorithm \cite{Bod98}} for
$O(k)$-outerplanar
graphs. For more details see %
Section~\ref{sec:bod}.
Finally, we can iteratively connect the tree decompositions
constructed for the several components in the same
way as it is described in case of only one perfect crest separator.
For an example, see also Fig.~\ref{fig:globalExample} and
\ref{fig:globalExampleTD}---the concepts of `top
vertex' and `ridge' are defined in the next section.
Our algorithm to compute a tree decomposition is presented
in Section~\ref{sec:alg}.

\definecolor{OliveGreen}{rgb}{0.0,0.6,0.0}
 \begin{figure}[hb!]
   \vspace{2mm}
   \centering
      \input{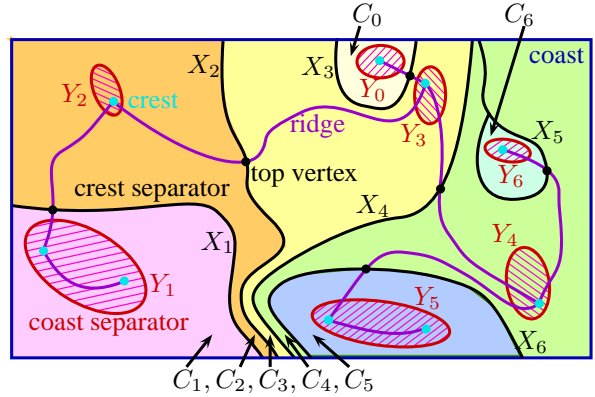}
      \caption{A weighted plane graph $(G,\varphi,c)$
           decomposed by a set ${\mathcal S}=\{X_1,\ldots,X_6\}$ of
     perfect crest separators into
       several components $C_0,\ldots,C_6$. In addition, each
       component $C_i$ has a coast separator $Y_i$ for all crests in $C_i$.}
      \label{fig:globalExample}
   \vspace{2mm}
   \end{figure}
\begin{figure}[ht!]
   \centering
      \scalebox{0.95}{\input{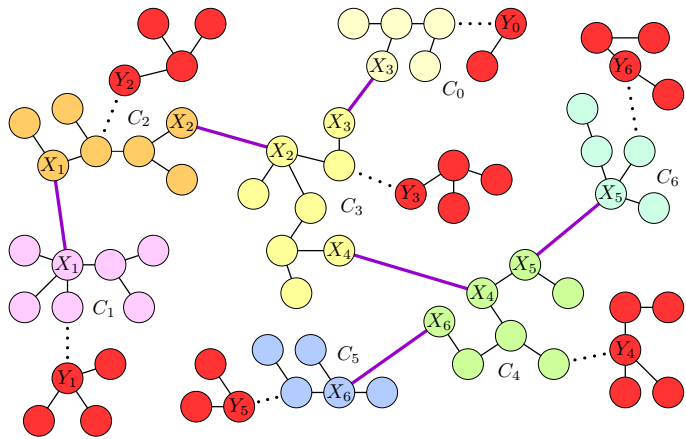}}
      \caption{A sketch of a tree decomposition for the graph $G$ of
      Fig~\ref{fig:globalExample}. In detail, each colored
       component 
       $C_i$ of
      Fig~\ref{fig:globalExample} has its own tree decomposition $(T_i,B_i)$ for
       the
       vertices of small height in $C_i$ 
       with the bags of $(T_i,B_i)$ being colored  
       with the same color than the component $C_i$.
      We implicitly assume
      that $Y_i$ is part of all bags of $(T_i,B_i)$.
      A tree decomposition for the vertices of large height in $C_i$ 
      is constructed recursively and contains a
      bag---marked with $Y_i$---containing all vertices of $Y_i$ 
      and being connected to one bag of $(T_i,B_i)$.
      For connecting the tree decompositions of different components,
      there is an edge
      connecting a node $w'$
      of $(T_i,B_i)$ with a node $w''$ of $(T_j,B_j)$  
      if and only if
      $C_i$ and $C_j$ share a certain edge (on their boundary) that 
      later is defined more precisely.
       In this case, the bags $w'$ and $w''$ contain the vertices of
       the perfect crest separators disconnecting $C_i$ and $C_j$---and
       are marked by the name of this crest separator.
       }
      \label{fig:globalExampleTD}
   \end{figure}

\section{Decomposition into Mountains}\label{sec:goodmount}

In the next three sections we let $(G,\varphi,c)$ be a weighted,
almost triangulated, and biconnected
graph with vertex set~$V$ and edge set~$E$.
If a weighted graph is not biconnected
one can easily
construct a tree decomposition for the whole graph by combining tree
decompositions for each biconnected subgraph.
Moreover,
let $n$ be the
number of vertices of~$G$ 
and take $\ell$ as the smallest number such that $\varphi$ is
weighted $\ell$-outerplanar.
Recall that, for each vertex $v$,
$h_\varphi(v)=[h^-_\varphi(v),h^+_\varphi(v)]$ is the height interval 
of~$v$ with $h^-_\varphi(v)$ and $h^+_\varphi(v)$ being called
the lower and upper height of~$v$.
Recall also that
a crest
is
a maximal connected set~$H$
of vertices of the same upper height
such that
no vertex in $H$
is connected to a vertex of larger upper height.
A weighted plane  graph with exactly one crest
is called a {\em mountain}. 
In this section we show a splitting of 
$(G,\varphi,c)$ into several mountains. Since 'certain' crests are
not of interest, we also show that, given a set of crests, a splitting of
$(G,\varphi,c)$ into several so-called components is possible such that each component contains
one crest part of the set.

As indicated in Section~\ref{sec:ide} our splitting process
makes use of so-called perfect crest separators and
down paths.
Since it is not so easy to compute
perfect crest separators,
we start to define and to consider crest separators more general
and later describe how to find perfect crest separators as
a subset of the crest separators.
First of all, we have to define down paths precisely.
Let us assume w.l.o.g.\ that the vertices
of all graphs considered in this paper are numbered with
pairwise different integers called the {\em vertex number}.
For each
vertex $u$ with a lower height $q\ge 2$, we define
the {\em down vertex} of~$u$ to be
the neighbor of~$u$ that among all neighbors
of~$u$ with upper height $q-1$ has the smallest vertex number.
We denote the down vertex of~$u$ by
$u\!\down$. The down edge of~$u$ is the
      edge $\{u,u\!\down\}$. The {\em down path} (of a vertex $v$) is a
path that (starts in $v$,) consists completely of down edges,
and ends in a vertex of the coast. 
In particular, a vertex $v$ of lower height 1 is
a down path
that only consists of itself.
Note that
every vertex has a down path.

\begin{definition}[crest separator, top edge, exterior/interior lowpoint]
\label{def:2} A {\em crest separator} in a weighted, almost triangulated,
and
biconnected graph is a tuple $X=(P_1,P_2)$ with $P_1$ being a down path starting
in some vertex $u$ and $P_2$ 
being a down path starting in a neighbor $v \neq u\!\down$ of~$u$ with
$h^+_{\varphi}(v)\le h^+_{\varphi}(u)$, where 
in the case $h^+_{\varphi}(v)= h^+_{\varphi}(u)$ the vertex number of $v$ is smaller than that of
$u$.
The edge $\{u,v\}$ is called {\em top edge} of~$X$.
The first vertex $v$
on $P_1$ that also is part of~$P_2$,
if it exists, is called
the  {\em lowpoint of}~$X$.
If $v$ belongs to the coast, we call $v$ an {\em exterior
lowpoint}, otherwise an {\em interior~lowpoint}.
\end{definition}

In the remainder
of this paper, 
we let
$\mathcal{S}(G,\varphi,c)$ be the set of all crest separators in
$(G,\varphi,c)$. 
Note that, for a crest separator $X=(P_1,P_2)$,
{the vertex set of}
$P_1 \circ P_2$ usually defines a
separator. This explains the name crest separator,
but formally a crest separator
is a tuple of paths.
Note also that a top edge is never a down edge and uniquely defines a
crest separator. Moreover, the top edges of two different crest
separators are always different.
Since an $n$-vertex planar graph has at most {$O(n)$}
edges, which can possibly be a top edge, and since, 
in a
weighted $\ell$-outerplanar
graph,
each crest separator consists of at most $2\ell$ vertices,
the next lemma holds.

\begin{lemma}\label{lem:allcs}
The set $\mathcal{S}(G,\varphi,c)$
can be constructed in
$O(\ell n)$ time.
\end{lemma}

Since crest separators are in the main focus of our paper, we use some additional terminology: Let $X=(P_1,P_2)$ be a crest separator. Then,
the
{\em top vertices} of
$X$
consist of the first vertex of~$P_1$ and the first vertex of
$P_2$. A top vertex of~$X$ is
called {\em highest} if its upper height is at least as large as the upper
height of the other top vertex of~$X$.
We write
$v\in X$ and say that {\em $v$ is a vertex of~$X$} to denote the
fact that $v$ is a vertex
 of~$P_1$ or $P_2$. The {\em border edges} of~$X$ are the
 edges of~$P_1\circ P_2$.
The {\em height} of~$X$ is the 
maximum upper height over all its vertices, which is the upper height
of the first vertex of~$P_1$.
The {\em weighted length} of~$X$ is the weighted length of 
$P_1\circ P_2$.
The {\em essential boundary of~$X$} is the
subgraph of~$G$ induced by all border edges of~$X$
that appear on exactly one of the two paths $P_1$ and $P_2$.
In particular, if $X$ has a lowpoint, the vertices of
the essential boundary consists exactly of the vertices appearing
before the lowpoint on $P_1$ or $P_2$ and of the lowpoint itself.
If $X$ has no lowpoint, the essential boundary is the
subgraph of~$G$ induced by the edges of~$P_1\circ P_2$.  
\begin{definition}[crest separator path]
For two vertices $s_1$ and $s_2$ being part of the essential boundary of
a crest separator $X$,
the {\em (long and short) crest-separator path} from~$s_1$
to~$s_2$
is the longest and shortest
path, respectively,
 from $s_1$ to $s_2$ that consists
 only of border edges of~$X$ and that does not
contain the lowpoint of~$X$ as an inner 
vertex.\end{definition} 

Note that, if neither $s_1$ nor $s_2$ is the lowpoint of 
$X$, the longest and the shortest
crest-separator path of~$X$ are the same.
If $s_1$ and $s_2$ are vertices of~$P_1 \circ P_2$,
{but not both are}
part of the essential boundary
of~$X$, the {\em
crest-separator path} from $s_1$ to $s_2$
is the
shortest path
from $s_1$ to $s_2$ consisting completely 
of edges of~$P_1\circ P_2$. 

We say that two paths $P'$ and $P''$ {\em cross},
if after merging the endpoints of the common edges of~$P'$ and
$P''$, there is a vertex $v$ with 
incident edges $e_1$ of~$P'$, $e_2$ of~$P''$, $e_3$ of~$P'$, $e_4$ of~$P''$
appearing clockwise in this order around $v$. The vertices that are merged
into $v$ are called the {\em crossing vertices of} $P'$ and $P''$.
Moreover, we also say that $P$ and $P'$ cross if
adding a new endpoint $v'$ to the outer face as well as
an edge $e$ from $v'$ to one endpoint of either $P$ or $P'$
with lower height 
$1$ 
together with an appropriate planar embedding of~$e$
in $\varphi$
makes the resulting paths
cross with respect to the definition of the previous sentence.
A crest separator 
$X=(P_1,P_2)$ and a path $P$ cross if $P_1\circ P_2$ and $P$ cross.

Each set
$\mathcal{S}$
of crest separators
splits $G$ into several subgraphs.
More precisely, for a set $\mathcal{S} \subseteq \mathcal{S}(G,\varphi,c)$,
let us define two inner faces $F$ and $F'$ of~$(G,\varphi,c)$
to be
{\em $(\mathcal{S},\varphi)$-connected}
 if there is a
list $(F_1,\ldots,F_j)$ ($j \in \Nat$) of inner faces of~$(G,\varphi,c)$
with $F_1=F$ and $F_j=F'$
such that, for each $i\in\{1,\ldots,j-1\}$, the
faces $F_i$ and $F_{i+1}$ share a common edge
not being a border edge of a crest
separator in $\mathcal{S}$.
A set $\mathcal{F}$
of inner faces of~$(G,\varphi,c)$ is
{\em $(\mathcal{S},\varphi)$-connected}
if each pair of faces in~$\mathcal{F}$
is $(\mathcal{S},\varphi)$-connected. Hence, a graph is
split by crest separators into the following
kind of subgraphs.

\begin{definition}[$(\mathcal{S},\varphi)$-component]
Let $\mathcal{S} \subseteq \mathcal{S}(G,\varphi,c)$.
For a maximal non\-empty $(\mathcal{S},\varphi)$-connected
set $\mathcal{F}$ of inner faces of~$(G,\varphi,c)$, the subgraph of~$G$ consisting of the
set of vertices and edges that are part of the boundary
of at least one face $F\in\mathcal{F}$ is called
an {\em $(\mathcal{S},\varphi)$-component}.
\end{definition}

By the fact that an $(\mathcal{S},\varphi)$-component consists of the
vertices on the boundary of an $(\mathcal{S},\varphi)$-connected
set of faces, we can
observe.

\begin{observation}\label{obs:biconnComp}
The $(\mathcal{S},\varphi)$-components of a biconnected graph are biconnected.
\end{observation}

For a single crest separator $X$ in
$(G,\varphi,c)$, the set $\{X\}$ splits $(G,\varphi,c)$
into exactly two $(\{X\},\varphi)$-components, which, for
an easier notation,  are also called {\em $(X,\varphi)$-components}.
 For
the $(X,\varphi)$-components
 $\DD$ and $\tilde{\DD}$,
we say that $\tilde{\DD}$ is {\em opposite} to $\DD$.
We say that $X$ {\em goes
weakly
between
two vertex sets} $U_1$ and $U_2$ if
we can number the two
$(X,\varphi)$-components with $C_1=(V_1,E_1)$ and
$C_2=(V_2,E_2)$ such that
$U_1\subseteq V_1$ and $U_2\subseteq V_2$. 
If additionally 
$U_1\cup U_2$ does not contain any vertex of $X$, 
we say that $X$
{\em goes strongly between} 
the sets.
We also say that $X$ goes strongly (or weakly)
between two subgraphs if $X$ goes strongly (or weakly)
between their corresponding vertex sets.
We want to remark that these definitions
focus on the disconnection of faces instead of vertex sets. Nevertheless,
if a crest separator $X$ weakly (strongly) goes between two non-empty
vertex sets $A$ and $B$, then the set of vertices of~$X$ weakly (strongly) disconnects $A$ and
$B$.

Recall that our goal is to find
a subset of the set of all
crest separators {large enough to} separate all crests from each other.
As a first step to restrict the set of crest separators, 
we define a special kind of
path, a so-called ridge, at the end of 
this
paragraph. 
The only crest separators that we need are those
that start at an inner vertex of a ridge. Moreover, we define the
ridge in such a way that each inner vertex of the ridge defines a crest
separator.
We start with the definition of a 
{\em height-vector} of a path $P$. This is a vector $(n_1,\ldots,n_{\ell})$ where $n_i$ ($i\in
\{1,\ldots,\ell\}$)
is the number of vertices of~$P$ whose upper height is
$i$. We say that a height vector $(n_1,...,n_{\ell})$ is smaller
than a height vector $(n'_1,...,n'_{\ell})$ if it is smaller with
respect to the lexicographical order.
{For} vertices $s$ and $t$,
a
{\em ridge} $R$ between $s$ and $t$ is a path connecting $s$ and $t$ 
with a smallest height-vector among all paths connecting $s$ and $t$.
A vertex of~$R$ with smallest upper height $h$  is called a {\em deepest vertex} of~$R$, and
$h$ is called the {\em depth} of~$R$.

\begin{lemma}\label{lem:ridgeCross}
For every inner vertex $u$ of a ridge $R$, there is a neighbor $v$ of~$u$ 
of at most the same upper height
such
that 
the 
down path 
$P_1$ of~$u$ and
the 
down path
$P_2$ of~$v$ define a crest
separator~$X=(P_1,P_2)$ or $X=(P_2,P_1)$ 
that crosses~$R$.
\end{lemma}

\begin{proof}
Consider the 
down path $P_1$ of~$u$.
Let $\bar{u}$ be the vertex of~$P_1$ that belongs to the coast. 
If $\bar{u}\neq u$,
$P_1$ has an edge $\{u',\bar{u}\}$ that is not
part of $R$. Otherwise, 
extend $P_1$
by an edge $\{\bar{u},u^*\}$ to a new virtual
vertex $u^*$ in the outer face. %
 $P_1$ can be extended by an
edge $\{u,v'\}$ such that the resulting path $P'_1$
crosses the ridge. Intuitively speaking, 
$v'$ is a neighbor of~$u$ `on the
other side of the ridge' than $P_1$ or the virtual vertex $u^*$.
Note that there must be indeed at least
one vertex $v'$ on the other side of the ridge 
since, 
otherwise,
$u$ must be a vertex of the coast, the down path of~$u$ has only $u$ as
vertex, and $u$ is incident on both sides to the
outer face, which is a contradiction to the biconnectivity of~$G$.

Let $L$ be the cyclic list of neighbors of~$u$ in clockwise order, and let
$r_1$ and $r_2$ be the two vertices of~$L$ that belong to $R$.
We split $L\setminus \{r_1,r_2\}$ into two sublists~$L_1$ and~$L_2$ where $L_i$ ($i\in \{1,2\}$)
starts with the successor of~$r_i$ and ends with the predecessor of~$r_{3-i}$.
Note that, 
for all vertices $v''$ of the list $L_j$ containing $v'$, the
concatenation of~$v''$ and $P_1$ crosses $R$.
If such a vertex $v''$ exists with $h^+_{\varphi}(v'')\le h^+_{\varphi}(u)$,
then we can take $v=v''$ and the lemma holds.
Let us assume that no such vertex $v''$ exists. Let $R'$ be the path
obtained from $R$ where $u$ is replaced by the vertices in $L_j$.
Then $R'$ has a smaller height-vector than $R$
since $R'$ has one vertex less of height $h^+_{\varphi}(u)$ whereas the
number of vertices with a smaller upper height does not change; a contradiction
to the fact that $R$ is a ridge. 
\end{proof}

\begin{definition}[mountain structure]
Let $\mathcal H$ be a set of crests of $(G,\varphi,c)$.
A {\em mountain structure} for~$(G,\varphi,c)$ and $\mathcal H$ is a tuple
$(G,\varphi,c,{\mathcal H},\mathcal{S})$
with 
$\mathcal{S} \subseteq \mathcal{S}(G,\varphi,c)$ such
that, for each pair of different crests $H_1$ and $H_2$
in
$\mathcal H$
and
for
each ridge $R$ in $(G,\varphi,c)$
with
one endpoint in $H_1$ and the other
in $H_2$, the following property holds:

\begin{itemize}
\item[\rm (a)] There is a crest separator
$X\in \mathcal{S}$
with one of its highest top vertices being a vertex of~$R$
such that $X$ strongly goes between $H_1$ and $H_2$ and such that $X$ has
smallest weighted length
 among all such crest separators in $\mathcal{S}(G,\varphi,c)$.
\end{itemize}
\end{definition}

The next lemma shows that
the simple structure of a crest separator as a tuple of two 
down paths {suffices} to
separate each pair of %
crests of the graph;\ in particular,
property (a) can be easily satisfied 
for each set $\mathcal{H}$ of crests by setting
$\mathcal{S}=\mathcal{S}(G,\varphi,c)$.

\begin{lemma}\label{lem:rid1}
Let $R$ be an ridge between two vertices $s$ and $t$
in
$(G,\varphi,c)$.
Choose $v$ as a deepest inner vertex of~$R$. 
If the depth of~$R$ is
smaller than the
upper height of both $s$ and $t$,
there is a crest separator
in $\mathcal{S}(G,\varphi,c)$
that 
\begin{itemize}
\item  goes strongly between $\{s\}$ and $\{t\}$, and
\item  contains $v$ as a highest top vertex.
\end{itemize}
\end{lemma}

\begin{proof}
By Lemma~\ref{lem:ridgeCross}, there is a crest separator $X=(P_1,P_2)$ with
top vertex~$v$ that crosses $R$. Since $v$ is a deepest vertex of~$R$ and
since $P_1$ and $P_2$ are down paths whose highest vertices have upper height
$\le h^+_{\varphi}(v)$, $R$ and $X$ can cross only once. Note that this also
implies that $X$ neither contains $s$ nor $t$. Thus, $X$ goes strongly,
between $s$ and $t$.
\end{proof}

As a consequence of the last lemma,
the  tuple $(G,\varphi,c,{\mathcal H},\mathcal{S}(G,\varphi,c))$
is a mountain structure for~$(G,\varphi,c)$ and ${\mathcal H}$.
It appears that some crest separators of a mountain structure may be useless
since they split one crest into several crests or they
cut of parts of our original graph
not containing any crests.
Hence, we define the following.

\begin{definition}[good mountain structure]
A mountain structure $(G,\varphi,c,{\mathcal H},\mathcal{S})$
is {\em  good} if, in addition to property (a), also the properties
below hold:

\begin{itemize}
\item[\rm (b)] No crest separator in~$\mathcal{S}$ contains a vertex of
a crest in~$\mathcal H$.
\item[\rm (c)] Each $(\mathcal{S}\!,\varphi )$-component
      contains vertices of a crest in~$\mathcal H$.
\end{itemize}
\end{definition}

Let $(G,\varphi,c,{\mathcal H},\mathcal{S})$ be a mountain structure. 
Note that the properties (a) and (c) imply that each
$(\mathcal{S}, \varphi )$-component contains the vertices of exactly
one crest in~$\mathcal H$, and property (b) guarantees that the
crest is completely contained in one $(\mathcal{S}, \varphi )$-component.
For each ridge $R$ of a pair of crests $H_1$ and $H_2$ in~$\mathcal H$,
the set $\mathcal{S}$ of crest separators 
contains a crest separator
that strongly goes between $H_1$ and $H_2$.

We next want to show
that a good mountain structure exists and
can be computed efficiently. For that we make use of a special graph.
\begin{definition}[mountain connection tree]
The
{\em mountain connection tree $T$} of a mountain structure
$(G,\varphi,c,{\mathcal H},\mathcal{S})$ is a graph defined
as follows:
Each node of~$T$ is
identified with an $(\mathcal{S},\varphi)$-component of~$G$,
and two nodes $w_1$ and $w_2$ of~$T$ are connected if and only if
they---or more precisely the $(\mathcal{S},\varphi)$-components with which
they are identified---have a common top edge
of a crest separator in $\mathcal{S}$.
\end{definition}

Recall that
the
border edges of
a crest separator $X=(P_1,P_2)$
consist of one
top edge $e$
of~$X$
and further down edges,
and that a down edge cannot be a top edge
of any crest separator.
Since the edges of~$X$ are the only edges that are part of both
$(X,\varphi)$-components, the top edge $e$ is the only top edge of a crest separator
that is contained
in both $(X,\varphi)$-components.
Moreover, since two down paths can not cross by definition, for each crest separator $X\in\mathcal{S}$
of a good mountain structure, we can partition the set
of all $(\mathcal{S},\varphi)$-components into
a set $\mathcal{C}_1$ of~$(\mathcal{S},\varphi)$-components
completely contained in one $(X,\varphi)$-component and the
set $\mathcal{C}_2$
of  $(\mathcal{S},\varphi)$-components contained in the other
$(X,\varphi)$-component. Then, $X$ is the only crest separator
with a top edge belonging to
$(\mathcal{S},\varphi)$-components in $\mathcal{C}_1$
as well as in $\mathcal{C}_2$. Consequently, $T$ is indeed a tree.

\begin{lemma}\label{lem:mtree} The mountain connection tree
of a mountain structure
             is a tree.
\end{lemma}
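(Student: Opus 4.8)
The plan is to show the mountain connection tree $T$ of a mountain structure $(G,\varphi,\mathcal{S})$ is connected and acyclic. For connectivity, I would argue directly from the definition of $(\mathcal{S},\varphi)$-connectedness: the inner faces of $(G,\varphi)$ are partitioned into the face sets of the various $(\mathcal{S},\varphi)$-components, and two components sharing a common top edge are adjacent in $T$. Since the dual of $(G,\varphi)$ (restricted to inner faces, with the outer face removed) is connected, any two $(\mathcal{S},\varphi)$-components are linked by a sequence of faces each consecutive pair sharing a boundary edge; whenever such an edge is a border edge of some crest separator in $\mathcal{S}$, that edge is a top edge of the crest separator exactly once along its length (the remaining border edges are down edges, and a down edge of one crest separator need not be a top edge of any, but the key point is that crossing the border edge moves us between two components that still share the top edge of the same crest separator, hence are adjacent in $T$). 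Chaining these adjacencies yields a path in $T$ between the two components, so $T$ is connected.

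For acyclicity I would use the structural observation already assembled in the excerpt right before the lemma statement (the paragraph beginning ``Recall that the border edges of a crest separator $X=(P_1,P_2)$\dots''): for each crest separator $X\in\mathcal{S}$, its unique top edge $e$ is the only top edge of any crest separator that lies in both $(X,\varphi)$-components. Hence removing the edge of $T$ that corresponds to $e$ disconnects $T$ into the two parts $\mathcal{C}_1$ and $\mathcal{C}_2$ (the components lying in one $(X,\varphi)$-component and those lying in the other). This means every edge of $T$ is a bridge, which forces $T$ to be a forest; combined with the connectivity just established, $T$ is a tree.

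The main obstacle I anticipate is making the adjacency-chaining argument for connectivity fully rigorous: specifically, verifying that when a path of inner faces crosses a border edge of a crest separator $X$, the two faces on either side still belong to $(\mathcal{S},\varphi)$-components that share $X$'s top edge — equivalently, that the top edge $e$ of $X$ is incident to faces in both $(X,\varphi)$-components, so that one can reroute the face path through $e$ instead of through the forbidden border edge. This rests on the fact (from Definition~\ref{def:2}) that the top edge is not a down edge, hence is genuinely a ``cut'' edge of the embedding that both $(X,\varphi)$-components border, whereas the other border edges of $X$ are down edges shared by faces inside a single $(X,\varphi)$-component as well. Once that local claim is in hand, the rest is bookkeeping: I would set up the argument by induction on the number of border-edge crossings in a face path, using the top-edge reroute to reduce the count, and the bridge argument for acyclicity goes through verbatim using the partition into $\mathcal{C}_1$ and $\mathcal{C}_2$ described above.
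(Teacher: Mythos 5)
Your acyclicity argument is precisely the paper's proof: the paper's entire explicit reasoning for this lemma is the observation that the top edge of $X$ is the only top edge of a crest separator contained in both $(X,\varphi)$-components, so the $T$-edge arising from $X$ is the unique edge between $\mathcal{C}_1$ and $\mathcal{C}_2$ and every edge of $T$ is a bridge. Connectivity is treated as immediate in the paper and is not argued there, so the bridge argument is all you need to match the written proof.

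The connectivity half of your proposal, however, does not go through as written, and this is a genuine gap rather than mere bookkeeping. The local claim you lean on --- that the two faces on either side of a crossed border edge of $X$ lie in $(\mathcal{S},\varphi)$-components that ``still share the top edge of the same crest separator'' and are hence adjacent in $T$ --- is false in general: if the crossed edge is a down edge far below the top edge of $X$, other crest separators of $\mathcal{S}$ can cut both adjacent components off from the faces incident to $X$'s top edge, so neither component need have that top edge on its boundary. (Worse, if $X$ has a lowpoint, the two faces on either side of a border edge below the lowpoint lie in the \emph{same} $(X,\varphi)$-component, so crossing such an edge does not even switch sides of $X$.) Your fallback --- reroute the face path through $X$'s top edge and induct on the number of border-edge crossings --- lacks a decreasing measure: the detour from the crossed down edge up to the top edge and back stays inside the two $(X,\varphi)$-components and so removes all crossings of $X$'s border edges, but it may introduce arbitrarily many new crossings of border edges of \emph{other} crest separators, so the total crossing count can increase and the induction need not terminate. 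To close this you would need either a well-founded ordering of the separators under which the reroute only creates crossings of ``later'' ones, or a structurally different argument (for instance an induction on $|\mathcal{S}|$ showing that the pieces into which a newly added separator splits an old component stay joined in $T$ via that separator's top edge --- a step that itself requires care).
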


The fact that the top edge of a crest separator $X$ is the only
top edge belonging to both $(X,\varphi)$-components shows also the
correctness of the
next lemma.

\begin{lemma}\label{lem:OnlySep}Let $C_1$ and $C_2$ be two
$(\mathcal{S},\varphi$)-components that are neighbors in the
mountain connection tree of a mountain structure
$(G,\varphi,c,{\mathcal H},\mathcal{S})$. Then there is exactly one
crest separator in $\mathcal{S}$ going weakly between $C_1$ and $C_2$, which
is also the only crest separator with
a top edge belonging to both $C_1$ and $C_2$.
\end{lemma}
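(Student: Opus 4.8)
The plan is to prove Lemma~\ref{lem:OnlySep} by reusing the structural observation preceding Lemma~\ref{lem:mtree}: the top edge of a crest separator $X=(P_1,P_2)$ is the unique border edge of $X$ that is not a down edge, and a down edge is never a top edge of any crest separator, hence the top edge $e$ of $X$ is the only edge lying on the boundary of both $(X,\varphi)$-components. I would first fix notation: let $C_1$ and $C_2$ be neighbours in the mountain connection tree, so by definition of that tree they share a common top edge, say $e$, which is the top edge of some crest separator $X\in\mathcal{S}$. The goal is to show (i) $X$ goes weakly between $C_1$ and $C_2$, (ii) $X$ is the only crest separator going weakly between them, and (iii) $X$ is the only crest separator with a top edge belonging to both $C_1$ and $C_2$.

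For (i), I would argue that $e$ is a border edge of $X$, so $e$ lies on no $(\mathcal{S},\varphi)$-internal face-adjacency path; removing the border edges of $X$ from the face adjacency graph separates the inner faces into the two $(X,\varphi)$-components. Since $C_1$ and $C_2$ are distinct maximal $(\mathcal{S},\varphi)$-connected face sets, each is contained in a single $(\{X\},\varphi)$-component (because within a single $(\mathcal{S},\varphi)$-component one never crosses a border edge of $X\in\mathcal{S}$, in particular never crosses $e$). They cannot lie in the same $(X,\varphi)$-component: the two faces incident to $e$ lie on opposite sides of $P_1\circ P_2$, one in each $(X,\varphi)$-component, and since $e$ is on the boundary of both $C_1$ and $C_2$, each of $C_1,C_2$ contains one of these two faces, hence they lie in different $(X,\varphi)$-components. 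So $X$ goes weakly between $C_1$ and $C_2$.

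For (iii), and hence (ii), suppose $X'\in\mathcal{S}$ also has a top edge $e'$ belonging to the boundary of both $C_1$ and $C_2$. The key point is that $e'$, being a top edge, is the only border edge of $X'$ separating the two $(X',\varphi)$-components, and being a non-down edge it is not a border edge of $X$ (a down edge of $X'$ could coincide with a down edge of $X$, but the top edge of $X'$ cannot, since that would make a down edge of $X$ a top edge of $X'$). Therefore $e'$ is not a border edge of $X$, so the two faces incident to $e'$ are $(\{X\},\varphi)$-connected, i.e.\ lie in the same $(X,\varphi)$-component. But one of these faces lies in $C_1$ and the other in $C_2$ — which lie in different $(X,\varphi)$-components as shown above — a contradiction unless $e'=e$. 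If $e'=e$ then $X$ and $X'$ share their top edge; I would then invoke property (a) of a (good) mountain structure together with the minimality of the number of top vertices, or simply the fact that a crest separator is determined by its top edge via the deterministic choice of down vertices and neighbourhood representants, to conclude $X'=X$. Finally, any crest separator going weakly between $C_1$ and $C_2$ must separate the face incident to $e$ lying in $C_1$ from the one lying in $C_2$, so $e$ lies on its boundary on both sides and hence $e$ is among its border edges; since $e$ is a non-down edge it must be the top edge of that separator, bringing us back to the previous case and forcing it to equal $X$.

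The main obstacle I anticipate is the step identifying a crest separator from its top edge: one must make precise that the top edge determines $P_1$ and $P_2$ uniquely. If $e=\{u,v\}$ is the top edge, either $e$ connects the two first vertices of $P_1,P_2$ (both down paths, determined by following down edges, which are unique by the integer-labelling tie-break) or $e$ is the first edge of $P_2$ with $v$ a neighbourhood representant around $u$ (again then $P_1$ is the down path of $u$ and $P_2$ continues as the down path of $v$, both unique). In either subcase the down-edge structure is forced, so $X$ is the unique crest separator with top edge $e$; this also makes property~(a)'s minimality clause automatically satisfied. I would phrase this carefully but briefly, as it is essentially a consequence of Definition~\ref{def:2} and the determinism of $\down$.
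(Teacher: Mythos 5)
Your proposal is correct and follows essentially the same route as the paper: the paper derives this lemma directly from the observation that the top edge of a crest separator $X$ is the only top edge of any crest separator contained in both $(X,\varphi)$-components (since all other border edges are down edges and a down edge is never a top edge), combined with the fact that each $(\mathcal{S},\varphi)$-component lies entirely in one $(X,\varphi)$-component. You supply more detail than the paper does --- in particular the face-adjacency argument and the explicit verification that a crest separator is determined by its top edge via the determinism of $\down$ --- but the underlying idea is identical.
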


Since each of the $O(n)$ crest separators consists
of at most $O(\ell)$ edges, we can determine in $O(\ell n)$ time all 
$(\mathcal{S},\varphi)$-components and afterwards construct the mountain
connection tree by a simple breadth-first search on the 
dual graph of
$(G,\varphi,c)$.

\begin{lemma}\label{lem:mct1} Given 
a mountain structure $(G,\varphi,c,{\mathcal H},\mathcal{S})$ for a set $\mathcal{S}$ of crest separators, its mountain
connection tree
can be
determined in $O(\ell n)$ time.
\end{lemma}

We also can construct a good mountain structure.

\begin{lemma}\label{lem:mctct}
Given $(G,\varphi,c)$ and ${\mathcal H}$,
a good mountain structure $(G,\varphi,c,{\mathcal H},\mathcal{S})$
can be constructed
in $O(\ell n)$ time.
\end{lemma}

\begin{proof}
For a simpler notation, in this proof
we call a crest separator
of a set $\tilde{\mathcal{S}}$
of crest separators
to be a
{\em heaviest crest separator} of~$\tilde{\mathcal{S}}$
if it has a largest weighted length
among all crest separators in $\tilde{\mathcal{S}}$. 

We first construct the set $\mathcal{S}'\!=\mathcal{S}(G,\varphi,c)$ of all
crest separators in $O(\ell n)$ time (Lemma~\ref{lem:allcs}).
Lemma~\ref{lem:rid1} guarantees that, for each pair of crests in ${\mathcal H}$
and each ridge $R$ with endpoints in both crests,
there is
a crest separator $X$ in ${\mathcal S}(G,\varphi,c)$ strongly going between the
two crests 
with
a highest top vertex 
of~$X$ being a
deepest vertex of
$R$.
This in particular means that
all vertices of~$X$ have an upper height smaller than or equal to the depth
of the ridge, which is strictly smaller than the upper height of the two
crests and of all crests through which $R$ passes, i.e.,
no vertex of $X$ is part of a crest in $G$. Hence,
in $O(\ell n)$ time,
we can remove all crest separators from $\mathcal{S}'$
that contain a vertex of a crest and property (a) of a good mountain
structure is maintained.
Afterwards property (b)
holds.
Let $\mathcal{S}''$ be the resulting set of crest separators.
For guaranteeing property (c), we have to remove further
crest separators from $\mathcal{S}''$.
We start with
constructing the mountain connection tree $T$ of
$(G,\varphi,c,{\mathcal H},\mathcal{S}'')$ in $O(\ell n )$ time
(Lemma~\ref{lem:mct1}). We root $T$ at an arbitrary node of $T$.

In a sophisticated
 bottom-up traversal of~$T$
we dynamically update $\mathcal{S}''$
by removing superfluous crest separators in $O(\ell n)$ time.
For a better understanding, before we present a detailed description
of the algorithm,
we roughly sketch some ideas.
Our algorithms marks some
nodes as finished in such a way that the
following invariant (I) always holds:
If a node $C$ of~$T$ is marked as finished, the
$(\mathcal{S}'',\varphi)$-component $C$
contains exactly one crest in~${\mathcal H}$.
The idea of the algorithm is to
process a so far unfinished node
that has only
children
already marked as finished and that, among all such nodes,
has the largest depth in $T$.
When processing a node $w$, we possibly remove
a crest separator $X$ from the current set $\mathcal{S}''$ of
crest separators
with the top edge
of~$X$ belonging to the two
$(\mathcal{S}'',\varphi)$-components identified with $w$
and
a
neighbor~$w'$ of~$w$ in $T$.
If so, by the replacement of
$\mathcal{S}''$ by
$\mathcal{S}''\setminus\{X\}$, we merge the nodes~$w$ and~$w'$ in $T$ to a new node $w^*$.
Additionally, we mark $w^*$ as finished only if $w'$ is
a child of~$w$ since in this case we already know
that $w'$ is already marked as finished,
i.e., $w'$ contains a crest
in~${\mathcal H}$
that is now part of~$w^*$.

We now describe our algorithm in detail. We start
with some preprocessing steps.
In
$O(n)$ time,
we determine and store with each node $w$ of~$T$
  a value ${\mathrm{Crest}}(w)\in\{0,1\}$ that is set
to $1$ if and only if the
$(\mathcal{S}'',\varphi)$-component
identified with
$w$ contains a vertex that is part of a crest 
in~${\mathcal H}$.
In $O(\ell n)$ time, 
we additionally store for each crest separator in $\mathcal{S}''$ its
weighted length,
 mark each node as unfinished, and
store
with
each non-leaf $w$ of~$T$ in a variable
${\mathrm{MaxCrestSep}}(w)$ a heaviest crest separator
of the set of all crest separators
going weakly between
the
$(\mathcal{S}'',\varphi)$-component
identified with $w$ and an $(\mathcal{S}'',\varphi)$-component
identified with a child of
$w$. For each leaf $w$ of~$T$, we define
${\mathrm{MaxCrestSep}}(w)=\mathrm{nil}$.
As a last step of our preprocessing phase,
which also runs in $O(n)$ time,
for each node $w$
of~$T$,
we initialize a value ${\mathrm{MaxCrestSep}}^*(w)$ with $\mathrm{nil}$.
${\mathrm{MaxCrestSep}}^*(w)$ is defined analogously {to}
${\mathrm{MaxCrestSep}}(w)$ if we restrict the crest separators
to be considered only to those crest separators that go
weakly between
two
$(\mathcal{S}'',\varphi)$-components
identified with
$w$ and with a finished child of~$w$.
We will see that it suffices to know the correct values of
${\mathrm{MaxCrestSep}}(w)$ and ${\mathrm{MaxCrestSep}}^*(w)$
only for the unfinished nodes and therefore we do not update these
values for finished nodes.

We next
describe the processing of a node $w$ during the
traversal of~$T$ in detail.
Keep in mind that $\mathcal{S}''$
is always equal to the
current set of remaining crest separators,
which is updated dynamically.
First we exclude the case, where $w$ has a parent $\tilde{w}$ with
an unfinished child $\hat{w}$, and where ${\mathrm{MaxCrestSep}}(\tilde{w})$
is
equal to the crest separator going weakly
between the $(\mathcal{S}'',\varphi)$-components
identified with $w$ and $\tilde{w}$.
More precisely, in this case we delay the processing of~$w$
and continue with the processing of~$\hat{w}$. 

Second, we test whether the
$(\mathcal{S}'',\varphi)$-component $C$
identified with $w$ contains a vertex belonging to a crest 
in~${\mathcal H}$,
which is exactly the case if $\mathrm{Crest}(w)=1$.
In this case,
we mark $w$ as finished.
If there is a parent $\tilde{w}$ of~$w$ with $X$ being
the crest separator
going weakly between the two
$(\mathcal{S}'',\varphi)$-components
identified with $w$ and to $\tilde{w}$,
\mbox{we replace} ${\mathrm{MaxCrestSep}^*}(\tilde{w})$ by a
heaviest
crest separator in $\{X, {\mathrm{MaxCrestSep}^*}(\tilde{w})\}$.

Let us next consider the case where $\mathrm{Crest}(w)=0$.
We then remove
a heaviest crest separator $X$ of the set of
all crest separators going weakly between
the $(\mathcal{S}'',\varphi)$-component
identified with $w$ and an $(\mathcal{S}'',\varphi)$-component
identified with a neighbor of~$w$, i.e., either with
(Case i) a child $\hat{w}$ of~$w$ or (Case ii) the parent 
$\tilde{w}$ of
$w$.
$X$ can be taken as either ${\mathrm{MaxCrestSep}}(w)$ or
the crest separator going weakly between the two
$(\mathcal{S}'',\varphi)$-components identified with
$w$ and $\tilde{w}$.
\begin{description}
\item[Case i:]
We remove $X$ from $\mathcal{S}''$ and mark the node $w^*$ obtained from merging $w$ and $\hat{w}$ as finished
and set $\mathrm{Crest}(w^*)=1$.
Note that this is correct
since $\hat{w}$ is already marked as finished and therefore
$\mathrm{Crest}(\hat{w})=1$.
In addition, if
the parent~$\tilde{w}$ of~$w$ exists, we
replace
$\mathrm{MaxCrestSep}^*(\tilde{w})$
by the heaviest
crest separator contained in $\{X',{\mathrm{MaxCrestSep}^*}(\tilde{w})\}$
where $X'$ is
the crest separator
in $\mathcal{S}''$
going strongly
between the $(\mathcal{S}'',\varphi)$-components identified with
$w$ and $\tilde{w}$.
\item[Case ii:]
We
mark
the node $w^*$ obtained
from merging the unfinished nodes $w$ and $\tilde{w}$ as unfinished,
set $\mathrm{Crest}(w^*\hspace{-0.5pt})=\mathrm{Crest}(\tilde{w})$,
and define the value
$\mathrm{MaxCrestSep}^*(w^*\hspace{-0.5pt})$ as the heaviest
crest separator in $\{\mathrm{MaxCrestSep}^*(w),$
$\mathrm{MaxCrestSep}^*(\tilde{w})\}$ or $\mathrm{nil}$ if this set
contains no crest separator.
If $\tilde{w}$ beside $w$ has another unfinished
child,
we define $\mathrm{MaxCrestSep}(w^*)$ as the heaviest
crest separator in $\{\mathrm{MaxCrestSep}(w),
\mathrm{MaxCrestSep}(\tilde{w})\}$. 
Note
that $\mathrm{MaxCrestSep}(\tilde{w})\not=X$ since
otherwise the processing of~$w$ would have been delayed.
If $\tilde{w}$ has no other unfinished child,
we take $\mathrm{MaxCrestSep}(w^*)$ as the heaviest
crest separator that is contained in $\{
\mathrm{MaxCrestSep}(w),$
$\mathrm{MaxCrestSep}^*(\tilde{w})\}$ or $\mathrm{nil}$
if no crest separator is in this set.
Note that $\mathrm{MaxCrestSep}^*(\tilde{w})\neq X$ since
$w$ is unfinished before its processing.
\end{description}

\noindent We can conclude by induction that
  our algorithm correctly updates $\mathrm{Crest}(w)$ for all nodes $w$ as
 well as 
$\mathrm{MaxCrestSep}(w)$ and
$\mathrm{MaxCrestSep}^*(w)$ for all unfinished nodes $w$.
If the processing of
a node in $T$ is delayed, the processing of the next node considered
is not delayed. Hence the running time is dominated by the non-delayed
processing steps. If the processing of a node is not delayed either
two  $(\mathcal{S}'',\varphi)$-components are merged or a node $w$ in
$T$ is marked as finished.
Hence the algorithm stops after $O(n)$ processing steps, i.e., in $O(n)$ time,
with all nodes of~$T$ being marked as finished.

Note that, if during the processing of a so far unfinished
node $w$, we remove a crest separator $X$ weakly going between
the $(\mathcal{S}'',\varphi)$-components identified with
$w$ and a neighbor of~$w$, we know that $w$ itself contains
no crest in $\mathcal{H}$. Hence, if $X$ goes
strongly
between two crests in~${\mathcal H}$,  
then there is another crest separator
going strongly between
these two crests that also goes strongly
between 
{two} $(\mathcal{S}'',\varphi)$-components identified with $w$ and one 
{of} its neighbors. This together with the fact that we have chosen
 $X$ as heaviest crest separator guarantees that property (a) is 
maintained during our processing. Since no crest separators are added
 into $\mathcal{S}''$, property (b) also holds.
The fact that the
algorithm
marks a node as finished only if its
identified
$(\mathcal{S}'',\varphi)$-component contains the vertices of a crest
in~${\mathcal H}$
implies that
the invariant (I) holds before
and after each processing of a node $w$. 
At the end of the
algorithm
invariant (I) 
guarantees that property (c)
holds. 
To sum up,
$(G,\varphi,c,{\mathcal H},\mathcal{S}'')$ defines a good mountain structure at the end of
the algorithm.
\end{proof}

\section{Connection between Coast Separators\\ and Pseudo Shortcuts}\label{sec:interact}

As in the last section, 
${\mathcal H}$ is a set of crests in a weighted,
almost triangulated, and biconnected graph $(G,\varphi,c)$.
Let $(G,\varphi,c,{\mathcal H},\mathcal{S})$ be a
good mountain structure.
As part of our algorithm,
for  each $(\mathcal{S},\varphi)$-component, 
we want to compute a coast separator
that strongly disconnects its crest 
in~$\mathcal H$ 
from the coast and that is
of weighted size
$q \cdot {\mathrm{tw}}(G)$ for some constant~$q$.
In more detail, we are interested in %
minimal coast separators as already mentioned in Section~\ref{sec:ide}
and
in Section~\ref{sec:interactn2}
we choose %
$\mathcal{H}$ as a special set of crests that allow us to compute such 
coast separators.

For a cycle~$Q$ in
$(G,\varphi,c)$, 
we say
that $Q$ {\em encloses} a vertex $u$, a vertex set $U$, and
a subgraph $H$ of~$G$,
if the set of the vertices of the cycle weakly disconnects the coast
from
$\{u\}$%
, $U$, and the vertex set of~$H$, respectively.
A crest separator $X$ {\em encloses} a vertex $u$, a vertex set $U$,
and a subgraph $H$ of $G$ if it has a lowpoint and the cycle
induced by the %
edges of the essential boundary of
$X$ encloses $\{u\}$%
, $U$, and the vertex set of~$H$, respectively.  
The {\em inner graph} of a cycle~$Q$ or a coast separator $Q$ 
is the plane graph induced by
the vertices that are weakly disconnected by $Q$ from the coast.
Recall that a
coast separator
for a set $U$ is {\em minimal} for
$U$ if it has minimal weighted size among all coast separators for~$U$
and if among all such separators
its inner graph has a minimal number of faces.
\begin{observation}
The vertex set of a minimal coast separator
for a connected set of vertices $H$
in
a weighted, almost triangulated graph induces a cycle.
\end{observation}

Unfortunately, for a crest $H\in \mathcal H$ in an $(\mathcal{S},\varphi)$-component $\CC$,
there might be no minimal coast separator $Y$ for~$H$ of 
weighted size $O(tw(G))$ in $G$
that is
completely contained in $\CC$. 
In this case,
there is a crest separator $X\in\mathcal{S}$
such that the cycle induced by $Y$
contains a subpath $P$ 
with the following properties: $P$
starts and ends with %
vertices of~$X$ and, additionally,
$P$ is contained in the $(X,\varphi)$-component
not containing
$\CC$.
Then, since $P$ can not be replaced by a crest-separator path of~$X$ 
of at most the same length, 
$P$ must be a pseudo shortcut as defined next. 
(For property (1) of the following 
definition, see the remark immediately after the definition.)

\begin{definition}[%
pseudo shortcut, composed cycle, inner graph]
Let $X=(P_1,P_2)$ be a crest separator in
$(G,\varphi,c)$ with an $(X,\varphi)$-component $\DD$. 
Let $X^\mathrm{CP}$ be a crest-separator path of~$X$ connecting
vertices $s_1$ and $s_2$
part
of the essential
boundary of~$X$.
Then, a path $P$ from~$s_1$ to~$s_2$
in $\DD$ is called an
{\em ($s_1$-$s_2$-connecting) ($\DD$-)pseudo shortcut ({\em of} $X^\mathrm{CP}$)}
if the following three conditions hold: 
\begin{description}
\item[(1)] if $X$ encloses the $(X,\varphi)$-component opposite to $\DD$,
then
$X$ has an exterior lowpoint, i.e, a lowpoint belonging to the coast.
\item[(2)] $P$ has a strictly shorter weighted length than $X^\mathrm{CP}$.

\item[(3)] $P$ does not contain any vertex
of the coast.
\end{description}
We call the cycle
consisting of the edges of $P$ and 
of~$X^{\mathrm{CP}}$ the {\em composed cycle} of $(X^{\mathrm{CP}},P)$. 
Moreover, a path $P$ is a {\em pseudo shortcut of a crest separator} $X$ if it is
a pseudo shortcut for some crest-separator path of~$X$.
The {\em inner graph} of a
pseudo shortcut $P$ of a
crest-separator path $X^{\mathrm{CP}}$ is the inner graph of the
composed cycle of~$(X^{\mathrm{CP}},P )$. 
\end{definition}

Roughly speaking, for a crest separator $X$ with an interior lowpoint, we are only interested in 
pseudo shortcuts that are enclosed by $X$. These can be used for the
construction of a coast separator for a crest $H\in \mathcal H$
in the 
$(X,\varphi)$-component $D$ not enclosed by $X$. For the crests
in $\mathcal H$
 enclosed by $X$,
 we can use the essential boundary of~$X$ as a coast separator.
This is 
 the reason why we distinguish between exterior and interior lowpoints, 
and why
we restrict our definition of pseudo shortcuts by Condition 1.
In return, we so can avoid 
 complicated special cases in the use of pseudo
 shortcuts.

Note that,
for two vertices~$s_1$ and~$s_2$ with neither~$s_1$ nor~$s_2$ being equal to
the lowpoint of a crest separator $X$, 
there is only one
crest-separator path
connecting~$s_1$ and $s_2$, 
and we compare the weighted length of a weighted path $P$
connecting~$s_1$~and~$s_2$ with the unique weighted length of the
crest-separator path connecting~$s_1$ and~$s_2$.
Moreover, we restrict
the endpoints of a
pseudo shortcut $P$ to be part of the essential boundary
since,
otherwise, one of~$P_1$ and $P_2$ must
contain both endpoints. Then $P$ cannot
have a shorter length than the subpath of~$P_1$ or $P_2$
connecting the two endpoints as shown by part (a) of the next lemma.

\begin{lemma}\label{lem:ShortAlley}
For each crest separator $X=(P_1,P_2)$ the following holds:
\begin{itemize}
\item[\rm (a)]
For each $i\in\{1,2\}$,
no path with endpoints $v_1$ and $v_2$ in $P_i$ can have
a shorter
weighted length than the shortest $v_1$-$v_2$-connecting crest-separator path.
\item[\rm (b)] Let $R$ be a ridge connecting two vertices
of different crests $H_1$ and $H_2$ in $\mathcal H$. 
If the weighted length of~$X$ is
not larger than
the weighted length of
a crest separator of shortest weighted length separating
$H_1$ and $H_2$, $R$ and a pseudo shortcut $P$ of
a crest-separator path $X^{\mathrm CP}$ of~$X$
cannot 
cross.
\end{itemize}
\end{lemma}

\begin{proof}
To show part (a), we use the fact that %
every path from $v_1$ to $v_2$ has
weighted length
of at least
$\max(h^+_{\varphi}(v_1)-h^-_{\varphi}(v_2),
h^+_{\varphi}(v_2)-h^-_{\varphi}(v_1))+1$ whereas 
the crest-separator path with endpoints
$v_1$ and $v_2$ has exactly that length. Thus,
(a) holds.

To show part (b),
let $\tilde{X}$ be a crest separator with shortest weighted length
among all crest
separators separating $H_1$ and $H_2$.
See Fig.~\ref{fig:crossPS}.
\begin{figure}[b!]
    \begin{center}
       \scalebox{0.8}{
\scalebox{1} 
{
\begin{pspicture}(0,-2.1932693)(5.9471874,2.1812692)
\definecolor{color497}{rgb}{0.12941,0.129411,1.0}
\definecolor{color3}{rgb}{1.0,0.4,0.4}
\definecolor{color265}{rgb}{0.35294,0.352941,0.352941}
\usefont{T1}{ptm}{m}{n}
\rput(1.1373436,-2.130457){$\tilde{X}$}
\usefont{T1}{ptm}{m}{n}
\rput(3.3523438,-1.050332){\color{color3}$P$}
\psline[linewidth=0.056199998cm](1.1373436,2.3531692)(1.1373436,-1.7703317)
\usefont{T1}{ptm}{m}{n}
\rput(0.26234376,0.3096681){$R$}
\usefont{T1}{ptm}{m}{n}
\rput(1.8824686,0.457){$u$}
\rput(1.9824686,-2.170457){$X^*$}
\psdots[dotsize=0.16](4.3373437,0.30332437)
\usefont{T1}{ptm}{m}{n}
\rput(5.177656,0.171817){top vertex}
\psline[linewidth=0.056199998cm](4.3373437,2.3531692)(4.3373437,-1.7703317)
\usefont{T1}{ptm}{m}{n}
\rput(4.3374686,-2.170457){$X$}
\psdots[dotsize=0.16](1.1374687,0.30332437)
\psline[linewidth=0.056199998cm](3.0504687,0.3032693)(0.4554687,0.3032693)
\psline[linewidth=0.056199998cm,linecolor=color265](2.0573437,2.3331692)(2.0573437,-1.7903318)
\psbezier[linewidth=0.02,linecolor=color3](4.3304687,1.1612693)(3.9504688,1.1412693)(2.6504686,1.4612693)(2.0504687,0.2812693)(1.4504687,-0.8987307)(3.2704687,-1.3387307)(4.3304687,-1.3387307)
\psdots[dotsize=0.16](2.0574687,0.28332436)
\psdots[dotsize=0.12](4.3374686,-1.3484567)
\usefont{T1}{ptm}{m}{n}
\rput(4.5859373,-1.5085505){$s_2$}
\psdots[dotsize=0.12](4.3374686,1.1515433)
\usefont{T1}{ptm}{m}{n}
\rput(4.5859373,0.99398756){$s_1$}
\psbezier[linewidth=0.03,linecolor=color497,linestyle=dotted,dotsep=0.10cm](4.2904687,1.2062693)(3.9078298,1.1852694)(2.598802,1.5212693)(1.9946353,0.2822693)(1.3904687,-0.9567307)(3.2231076,-1.4187307)(4.2904687,-1.4187307)
\psline[linewidth=0.03cm,linecolor=color497,linestyle=dotted,dotsep=0.10cm](4.2704687,2.3532693)(4.2704687,1.2012693)
\psline[linewidth=0.03cm,linecolor=color497,linestyle=dotted,dotsep=0.10cm](4.2704687,-1.4187307)(4.2704687,-1.7707307)
\usefont{T1}{ptm}{m}{n}
\rput(3.8723438,1.7496681){\color{color497}$Q$}
\end{pspicture} 
}}%
    \end{center}
    \caption{The crest separators $X,\tilde{X},X^*$,  the ridge $R$, the pseudo shortcut
    $P$, and the path $Q$ as
    described in the proof of Lemma~\ref{lem:ShortAlley}(b).
             }
    \label{fig:crossPS}
  \end{figure}
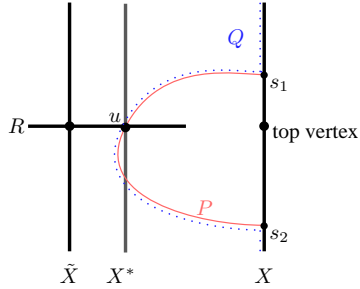
Note that $\tilde{X}$ must cross $R$.
Assume
for a contradiction that (b) does not hold, i.e., $R$ and a
pseudo shortcut $P$ of~$X^{\mathrm CP}$ cross. 
Choose $u$ among all
crossing vertices of~$R$ and $P$ with
minimal upper height.
With $s_1$ and $s_2$ being the endpoints of~$P$,
let $Q$ be the path obtained from 
$P_1\circ P_2$
by replacing 
$X^{\mathrm CP}$
by $P$.
Note that $Q$
has a smaller weighted length than
$X$ since $P$ is a pseudo shortcut of $X^{\mathrm CP}$.
Let $P^*_1$ be the down path
of~$u$, and choose $P^*_2$ as a down path starting in a neighbor of~$u$
such that
$X^*=(P^*_1,P^*_2)$ or $X^*=(P^*_2,P^*_1)$ is a crest separator 
that
crosses $R$ and that, among all possible choices, has shortest weighted length. $X^*$ exists by Lemma~\ref{lem:ridgeCross}.

The weighted length of~$P_1^*$ plus the  weighted
 length of~$P_2^*$ can not be larger than the weighted
 length of 
$Q$
 since $Q$ consists of a subpath $Q'$ from a vertex of the coast to $u$ and
 another subpath $Q''$ back to the coast.
More precisely,
assume that $P_1^*$ and~$Q'$ leave $R$ on the same side,
 whereas $P_2^*$ and $Q''$ leave $R$ on the other side. 
Let $u'$ be the first vertex of~$Q''$. 
Note that $u'$ is not the down vertex of $u$. If
$\{u,u'\}$ is   
the top edge of a crest separator
crossing $R$, then by definition of~$X^*$ as a crest separator
with smallest weighted length among all possible choices, $P^*_2$
must start in a vertex
with lower or equal upper height than that of~$u'$. 
 Otherwise, i.e., if $u$ is the down vertex of $u'$ or if $u'$
belongs to $R$ and the down
path of $u'$ leaves $R$ on the same side than the down path of $u$,
$u'$ has upper height at least
 $h^+_{\varphi}(u)$, whereas the upper height of the first vertex of $P_2^*$
 is at most $h^+_{\varphi}(u)$ by Lemma~\ref{lem:ridgeCross} and
the fact $X^*$ that was chosen as a crest separator
with smallest weighted length among all possible choices.

So far we can conclude that
$X^*$ has
weighted length smaller than or equal to the  weighted
length of~$Q$.
As long as $X^*$ and $R$ cross more than once, there is another
common vertex $\tilde{u}$ of~$R$ and $X^*$ with $h^+_{\varphi}(\tilde{u})< h^+_{\varphi}(u)$ and we 
redefine
$X^*$ as a crest separator crossing $R$ with $\tilde{u}$ being a top vertex of~$X^*$. Finally, $X^*$ and $R$ cross only once, and thus $X^*$ strongly 
disconnects $H_1$
and $H_2$. Since our iteration of choosing $X^*$ only shrinks the weighted
length of~$X^*$, $X^*$ has weighted length smaller than or equal to the  weighted
length of~$Q$, which is strictly smaller than the weighted length of~$X$.
Since the weighted length of $X$ is smaller than or equal to the
weighted
length of~$\tilde{X}$, 
the weighted length of $X^*$ is strictly
smaller than the weighted length of~$\tilde{X}$.
Contradiction.
\end{proof}

For the rest of this section, let $(G,\varphi,c,{\mathcal H},\mathcal{S})$ be a good mountain
structure of $(G,\varphi,c)$.
The endpoints of a pseudo shortcut for
a crest separator $X\in \mathcal{S}$ 
are---intuitively speaking---the vertices between which a coast separator can leave one
$(X,\varphi)$-component and later reenter the $(X,\varphi)$-component.
We describe this intuition more precisely in
Lemma~\ref{lem:StrongOptimal}.
Let $X$ be a crest separator, and let $D$ be an $(X,\varphi)$-component.
A subpath $P''$ of a path $P'$ in $G$ is called
to be a {\em \spp{\DD}{X}-subpath of} $P'$ if it is a path contained in $D$,
that
starts either with a lowpoint of~$X$ or an edge not being part of the essential
boundary of~$X$ and that also ends either with a lowpoint of~$X$ or an edge not being 
part of
the essential boundary of~$X$. 
$P''$ is called {\em maximal} if no other
\spp{\DD}{X}-subpath of~$P'$ contains $P''$ as a proper subpath.

\begin{definition}[strict pseudo shortcut] 
A $\DD$-pseudo shortcut $P$
of a crest-separator path $X^\mathrm{CP}$ 
of a crest separator $X\in\mathcal{S}$
is called {\em strict} if
it has shortest weighted length among all 
$\DD$-pseudo shortcuts of~$X^\mathrm{CP}$
and if among those 
the composed cycle $(X^\mathrm{CP},P)$ encloses a minimum number of faces.
\end{definition}

Intuitively, the next lemma shows that in some cases, the part of a coast separator
behind a crest separator is a strict pseudo shortcut and that
an analogous result holds for parts of a strict pseudo shortcut 
behind a crest separator. 
For the next lemma, see  
also the examples %
in
Fig.~\ref{fig:crestSho}.

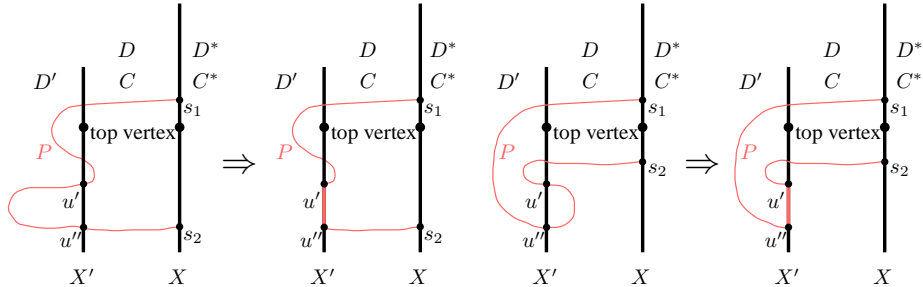
\begin{figure}[b!]
    \begin{center}
       \vspace{-4mm}
       \scalebox{0.8}{
\scalebox{1} 
{
\begin{pspicture}(0,-2.3675942)(3.1573124,2.3556943)
\definecolor{color3014}{rgb}{1.0,0.4,0.4}
\pscustom[linewidth=0.02,linecolor=color3014]
{
\newpath
\moveto(2.81,0.7359681)
\lineto(2.04,0.6959683)
\curveto(1.655,0.6759682)(1.245,0.65096825)(1.22,0.64596826)
\curveto(1.195,0.64096826)(1.13,0.6309683)(1.09,0.6259683)
\curveto(1.05,0.6209682)(1.0,0.61096823)(0.97,0.5959682)
}
\pscustom[linewidth=0.02,linecolor=color3014]
{
\newpath
\moveto(0.97,0.59596825)
\lineto(0.91,0.5559682)
\curveto(0.88,0.53596824)(0.84,0.5009683)(0.83,0.48596826)
\curveto(0.82,0.47096825)(0.795,0.43596825)(0.78,0.41596824)
\curveto(0.765,0.39596826)(0.74,0.35096824)(0.73,0.32596827)
\curveto(0.72,0.30096826)(0.71,0.25096825)(0.71,0.22596826)
\curveto(0.71,0.20096825)(0.715,0.15596825)(0.72,0.13596825)
\curveto(0.725,0.11596825)(0.745,0.07096825)(0.76,0.045968253)
\curveto(0.775,0.02096825)(0.81,-0.019031748)(0.83,-0.03403175)
\curveto(0.85,-0.04903175)(0.89,-0.07903175)(0.91,-0.09403175)
\curveto(0.93,-0.10903175)(0.97,-0.13903175)(0.99,-0.15403175)
\curveto(1.01,-0.16903175)(1.055,-0.19403175)(1.08,-0.20403175)
\curveto(1.105,-0.21403176)(1.15,-0.22903174)(1.17,-0.23403175)
\curveto(1.19,-0.23903175)(1.23,-0.25903174)(1.25,-0.27403176)
\curveto(1.27,-0.28903174)(1.31,-0.31903175)(1.33,-0.33403176)
\curveto(1.35,-0.34903175)(1.38,-0.38403174)(1.39,-0.40403175)
\curveto(1.4,-0.42403173)(1.41,-0.48403174)(1.41,-0.52403176)
\curveto(1.41,-0.5640317)(1.39,-0.61403173)(1.37,-0.6240317)
\curveto(1.35,-0.6340318)(1.305,-0.64903176)(1.28,-0.65403175)
\curveto(1.255,-0.65903175)(1.2,-0.66903174)(1.17,-0.67403173)
\curveto(1.14,-0.6790317)(1.075,-0.6940318)(1.04,-0.70403177)
\curveto(1.005,-0.71403176)(0.92,-0.72403175)(0.87,-0.72403175)
\curveto(0.82,-0.72403175)(0.73,-0.72403175)(0.69,-0.72403175)
\curveto(0.65,-0.72403175)(0.57,-0.72403175)(0.53,-0.72403175)
\curveto(0.49,-0.72403175)(0.415,-0.72403175)(0.38,-0.72403175)
\curveto(0.345,-0.72403175)(0.285,-0.73403174)(0.26,-0.7440317)
\curveto(0.235,-0.7540318)(0.185,-0.77403176)(0.16,-0.78403175)
\curveto(0.135,-0.79403174)(0.095,-0.8240318)(0.08,-0.84403175)
\curveto(0.065,-0.86403173)(0.04,-0.90903175)(0.03,-0.9340317)
\curveto(0.02,-0.95903176)(0.0050,-1.0090318)(0.0,-1.0340317)
\curveto(-0.0050,-1.0590317)(0.0,-1.1040318)(0.01,-1.1240318)
\curveto(0.02,-1.1440318)(0.045,-1.1840317)(0.06,-1.2040317)
\curveto(0.075,-1.2240318)(0.105,-1.2640318)(0.12,-1.2840317)
\curveto(0.135,-1.3040317)(0.175,-1.3290317)(0.2,-1.3340317)
\curveto(0.225,-1.3390317)(0.275,-1.3490318)(0.3,-1.3540318)
\curveto(0.325,-1.3590318)(0.375,-1.3690318)(0.4,-1.3740318)
\curveto(0.425,-1.3790318)(0.51,-1.3890318)(0.57,-1.3940318)
\curveto(0.63,-1.3990318)(0.745,-1.3990318)(0.8,-1.3940318)
\curveto(0.855,-1.3890318)(0.935,-1.3790318)(0.96,-1.3740318)
\curveto(0.985,-1.3690318)(1.04,-1.3640318)(1.07,-1.3640318)
\curveto(1.1,-1.3640318)(1.16,-1.3740318)(1.19,-1.3840318)
\curveto(1.22,-1.3940318)(1.29,-1.4090317)(1.33,-1.4140317)
\curveto(1.37,-1.4190317)(1.455,-1.4240317)(1.5,-1.4240317)
\curveto(1.545,-1.4240317)(1.63,-1.4290317)(1.67,-1.4340317)
\curveto(1.71,-1.4390317)(1.805,-1.4390317)(1.86,-1.4340317)
\curveto(1.915,-1.4290317)(2.02,-1.4240317)(2.07,-1.4240317)
\curveto(2.12,-1.4240317)(2.215,-1.4240317)(2.26,-1.4240317)
\curveto(2.305,-1.4240317)(2.395,-1.4240317)(2.44,-1.4240317)
\curveto(2.485,-1.4240317)(2.565,-1.4140317)(2.6,-1.4040318)
\curveto(2.635,-1.3940318)(2.695,-1.3840318)(2.72,-1.3840318)
\curveto(2.745,-1.3840318)(2.785,-1.3790318)(2.8,-1.3740318)
}
\usefont{T1}{ptm}{m}{n}
\rput(2.7855937,-2.1960318){$X$}
\usefont{T1}{ptm}{m}{n}
\rput(1.9454688,1.5640931){$D$}
\rput(3.2454688,1.5640931){$D^*$}
\rput(3.2454688,1.0440931){$C^*$}
\psline[linewidth=0.056199998cm](2.811875,2.3275943)(2.811875,-1.7959068)
\usefont{T1}{ptm}{m}{n}
\rput(0.57546875,-0.13575801){\color{color3014}$P$}
\psdots[dotsize=0.12](2.812,0.7259682)
\psdots[dotsize=0.12](2.812,-1.3740318)
\usefont{T1}{ptm}{m}{n}
\rput(1.9554688,1.0440931){$C$}
\psline[linewidth=0.056199998cm](1.23,1.2759682)(1.23,-1.7959068)
\usefont{T1}{ptm}{m}{n}
\rput(1.2055937,-2.1960318){$X'$}
\psdots[dotsize=0.16](1.232,0.27774936)
\usefont{T1}{ptm}{m}{n}
\rput(1.0290625,-0.92412555){$u'$}
\usefont{T1}{ptm}{m}{n}
\rput(1.0290625,-1.5841255){$u''$}
\rput(3.0290625,-1.5341255){$s_2$}
\rput(3.0290625,0.56841255){$s_1$}
\psdots[dotsize=0.12](1.232,-0.6622506)
\psdots[dotsize=0.12](1.232,-1.3822507)
\psdots[dotsize=0.16](2.812,0.27774936)
\usefont{T1}{ptm}{m}{n}
\rput(2.0460937,0.146242){top vertex}
\usefont{T1}{ptm}{m}{n}
\rput(0.60546875,1.0440931){$D'$}
\end{pspicture} 
}\hspace{0.2cm}%
       \rput(0.2,2){\scalebox{1.7}{$\Rightarrow$}}\hspace{0.2cm}%
\scalebox{1} 
{
\begin{pspicture}(0,-2.3675942)(3.2073126,2.3556943)
\definecolor{color116}{rgb}{1.0,0.4,0.4}
\pscustom[linewidth=0.02,linecolor=color116]
{
\newpath
\moveto(1.02,0.59596825)
\lineto(0.96,0.5559682)
\curveto(0.93,0.53596824)(0.89,0.5009683)(0.88,0.48596826)
\curveto(0.87,0.47096825)(0.845,0.43596825)(0.83,0.41596824)
\curveto(0.815,0.39596826)(0.79,0.35096824)(0.78,0.32596827)
\curveto(0.77,0.30096826)(0.76,0.25096825)(0.76,0.22596826)
\curveto(0.76,0.20096825)(0.765,0.15596825)(0.77,0.13596825)
\curveto(0.775,0.11596825)(0.795,0.07096825)(0.81,0.045968253)
\curveto(0.825,0.02096825)(0.86,-0.019031748)(0.88,-0.03403175)
\curveto(0.9,-0.04903175)(0.94,-0.07903175)(0.96,-0.09403175)
\curveto(0.98,-0.10903175)(1.02,-0.13903175)(1.04,-0.15403175)
\curveto(1.06,-0.16903175)(1.105,-0.19403175)(1.13,-0.20403175)
\curveto(1.155,-0.21403176)(1.2,-0.22903174)(1.22,-0.23403175)
\curveto(1.24,-0.23903175)(1.28,-0.25903174)(1.3,-0.27403176)
\curveto(1.32,-0.28903174)(1.36,-0.31903175)(1.38,-0.33403176)
\curveto(1.4,-0.34903175)(1.43,-0.38403174)(1.44,-0.40403175)
\curveto(1.45,-0.42403173)(1.46,-0.48403174)(1.46,-0.52403176)
\curveto(1.46,-0.5640317)(1.44,-0.61403173)(1.42,-0.6240317)
\curveto(1.4,-0.6340318)(1.355,-0.64903176)(1.33,-0.65403175)
\curveto(1.305,-0.65903175)(1.25,-0.66903174)(1.22,-0.67403173)
\curveto(1.19,-0.6790317)(1.125,-0.6940318)(1.09,-0.70403177)
\curveto(1.055,-0.71403176)(0.97,-0.72403175)(0.92,-0.72403175)
\curveto(0.87,-0.72403175)(0.78,-0.72403175)(0.74,-0.72403175)
\curveto(0.7,-0.72403175)(0.62,-0.72403175)(0.58,-0.72403175)
\curveto(0.54,-0.72403175)(0.465,-0.72403175)(0.43,-0.72403175)
\curveto(0.395,-0.72403175)(0.335,-0.73403174)(0.31,-0.7440317)
\curveto(0.285,-0.7540318)(0.235,-0.77403176)(0.21,-0.78403175)
\curveto(0.185,-0.79403174)(0.145,-0.8240318)(0.13,-0.84403175)
\curveto(0.115,-0.86403173)(0.09,-0.90903175)(0.08,-0.9340317)
\curveto(0.07,-0.95903176)(0.055,-1.0090318)(0.05,-1.0340317)
\curveto(0.045,-1.0590317)(0.05,-1.1040318)(0.06,-1.1240318)
\curveto(0.07,-1.1440318)(0.095,-1.1840317)(0.11,-1.2040317)
\curveto(0.125,-1.2240318)(0.155,-1.2640318)(0.17,-1.2840317)
\curveto(0.185,-1.3040317)(0.225,-1.3290317)(0.25,-1.3340317)
\curveto(0.275,-1.3390317)(0.325,-1.3490318)(0.35,-1.3540318)
\curveto(0.375,-1.3590318)(0.425,-1.3690318)(0.45,-1.3740318)
\curveto(0.475,-1.3790318)(0.56,-1.3890318)(0.62,-1.3940318)
\curveto(0.68,-1.3990318)(0.795,-1.3990318)(0.85,-1.3940318)
\curveto(0.905,-1.3890318)(0.985,-1.3790318)(1.01,-1.3740318)
\curveto(1.035,-1.3690318)(1.09,-1.3640318)(1.12,-1.3640318)
\curveto(1.15,-1.3640318)(1.21,-1.3740318)(1.24,-1.3840318)
\curveto(1.27,-1.3940318)(1.34,-1.4090317)(1.38,-1.4140317)
\curveto(1.42,-1.4190317)(1.505,-1.4240317)(1.55,-1.4240317)
\curveto(1.595,-1.4240317)(1.68,-1.4290317)(1.72,-1.4340317)
\curveto(1.76,-1.4390317)(1.855,-1.4390317)(1.91,-1.4340317)
\curveto(1.965,-1.4290317)(2.07,-1.4240317)(2.12,-1.4240317)
\curveto(2.17,-1.4240317)(2.265,-1.4240317)(2.31,-1.4240317)
\curveto(2.355,-1.4240317)(2.445,-1.4240317)(2.49,-1.4240317)
\curveto(2.535,-1.4240317)(2.615,-1.4140317)(2.65,-1.4040318)
\curveto(2.685,-1.3940318)(2.745,-1.3840318)(2.77,-1.3840318)
\curveto(2.795,-1.3840318)(2.835,-1.3790318)(2.85,-1.3740318)
}
\psframe[linewidth=0.04,linecolor=white,dimen=outer,fillstyle=solid](1.28,-0.603758)(0.0,-1.483758)
\psline[linewidth=0.056199998cm](1.28,1.2759682)(1.28,-1.7959068)
\psline[linewidth=0.056199998cm,linecolor=color116](1.28,-0.643758)(1.28,-1.403758)
\pscustom[linewidth=0.02,linecolor=color116]
{
\newpath
\moveto(2.86,0.7359681)
\lineto(2.09,0.6959683)
\curveto(1.705,0.6759682)(1.295,0.65096825)(1.27,0.64596826)
\curveto(1.245,0.64096826)(1.18,0.6309683)(1.14,0.6259683)
\curveto(1.1,0.6209682)(1.05,0.61096823)(1.02,0.5959682)
}
\usefont{T1}{ptm}{m}{n}
\rput(2.8355937,-2.1960318){$X$}
\usefont{T1}{ptm}{m}{n}
\rput(1.9954687,1.5640931){$D$}
\rput(3.2454688,1.5640931){$D^*$}
\rput(3.2454688,1.0440931){$C^*$}
\psline[linewidth=0.056199998cm](2.861875,2.3275943)(2.861875,-1.7959068)
\psdots[dotsize=0.12](2.862,0.7259682)
\psdots[dotsize=0.12](2.862,-1.3740318)
\usefont{T1}{ptm}{m}{n}
\rput(2.0054688,1.0440931){$C$}
\usefont{T1}{ptm}{m}{n}
\rput(1.2555938,-2.1960318){$X'$}
\psdots[dotsize=0.16](1.282,0.27774936)
\usefont{T1}{ptm}{m}{n}
\rput(1.0790626,-0.92412555){$u'$}
\psdots[dotsize=0.16](2.862,0.27774936)
\usefont{T1}{ptm}{m}{n}
\rput(2.0960937,0.146242){top vertex}
\usefont{T1}{ptm}{m}{n}
\rput(0.65546876,1.0440931){$D'$}
\psdots[dotsize=0.12](1.282,-0.6622506)
\psdots[dotsize=0.12](1.282,-1.3822507)
\usefont{T1}{ptm}{m}{n}
\rput(1.0790625,-1.5841255){$u''$}
\rput(3.0790625,-1.5341255){$s_2$}
\rput(3.0790625,0.56841255){$s_1$}
\usefont{T1}{ptm}{m}{n}
\rput(0.62546877,-0.13575801){\color{color116}$P$}
\end{pspicture} 
}}%
       \hspace{0.3cm}%
       \scalebox{0.8}{
\scalebox{1} 
{
\begin{pspicture}(0,-2.3675942)(2.9273126,2.3556943)
\definecolor{color202}{rgb}{1.0,0.4,0.4}
\pscustom[linewidth=0.02,linecolor=color202]
{
\newpath
\moveto(0.74,0.596242)
\lineto(0.68,0.57624197)
\curveto(0.65,0.566242)(0.6,0.541242)(0.58,0.526242)
\curveto(0.56,0.511242)(0.52,0.481242)(0.5,0.466242)
\curveto(0.48,0.451242)(0.44,0.416242)(0.42,0.396242)
\curveto(0.4,0.37624198)(0.36,0.336242)(0.34,0.31624198)
\curveto(0.32,0.296242)(0.29,0.261242)(0.28,0.24624199)
\curveto(0.27,0.231242)(0.245,0.191242)(0.23,0.16624199)
\curveto(0.215,0.141242)(0.195,0.086242)(0.19,0.056241993)
\curveto(0.185,0.026241994)(0.17,-0.033758007)(0.16,-0.06375801)
\curveto(0.15,-0.09375801)(0.14,-0.153758)(0.14,-0.183758)
\curveto(0.14,-0.213758)(0.135,-0.283758)(0.13,-0.323758)
\curveto(0.125,-0.363758)(0.12,-0.428758)(0.12,-0.453758)
\curveto(0.12,-0.478758)(0.12,-0.528758)(0.12,-0.553758)
\curveto(0.12,-0.578758)(0.125,-0.633758)(0.13,-0.663758)
\curveto(0.135,-0.693758)(0.145,-0.748758)(0.15,-0.773758)
\curveto(0.155,-0.79875803)(0.165,-0.848758)(0.17,-0.873758)
\curveto(0.175,-0.898758)(0.195,-0.948758)(0.21,-0.973758)
\curveto(0.225,-0.998758)(0.265,-1.048758)(0.29,-1.073758)
\curveto(0.315,-1.098758)(0.36,-1.1387581)(0.38,-1.153758)
\curveto(0.4,-1.168758)(0.44,-1.198758)(0.46,-1.213758)
\curveto(0.48,-1.228758)(0.53,-1.253758)(0.56,-1.2637581)
\curveto(0.59,-1.273758)(0.645,-1.293758)(0.67,-1.303758)
\curveto(0.695,-1.313758)(0.735,-1.333758)(0.75,-1.343758)
\curveto(0.765,-1.353758)(0.81,-1.368758)(0.84,-1.373758)
\curveto(0.87,-1.378758)(0.925,-1.3887581)(0.95,-1.393758)
\curveto(0.975,-1.398758)(1.005,-1.398758)(1.02,-1.383758)
}
\pscustom[linewidth=0.02,linecolor=color202]
{
\newpath
\moveto(1.0,-0.283758)
\lineto(0.95,-0.273758)
\curveto(0.925,-0.268758)(0.875,-0.268758)(0.85,-0.273758)
\curveto(0.825,-0.27875802)(0.775,-0.293758)(0.75,-0.303758)
\curveto(0.725,-0.31375802)(0.68,-0.338758)(0.66,-0.353758)
\curveto(0.64,-0.368758)(0.62,-0.408758)(0.62,-0.43375802)
\curveto(0.62,-0.458758)(0.625,-0.503758)(0.63,-0.523758)
\curveto(0.635,-0.54375803)(0.655,-0.573758)(0.67,-0.583758)
\curveto(0.685,-0.593758)(0.715,-0.613758)(0.73,-0.623758)
\curveto(0.745,-0.633758)(0.78,-0.648758)(0.8,-0.653758)
\curveto(0.82,-0.658758)(0.865,-0.663758)(0.89,-0.663758)
\curveto(0.915,-0.663758)(0.975,-0.66875803)(1.01,-0.67375803)
\curveto(1.045,-0.678758)(1.105,-0.683758)(1.13,-0.683758)
\curveto(1.155,-0.683758)(1.205,-0.688758)(1.23,-0.693758)
\curveto(1.255,-0.698758)(1.3,-0.718758)(1.32,-0.73375803)
\curveto(1.34,-0.748758)(1.375,-0.783758)(1.39,-0.803758)
\curveto(1.405,-0.823758)(1.43,-0.863758)(1.44,-0.883758)
\curveto(1.45,-0.903758)(1.46,-0.948758)(1.46,-0.973758)
\curveto(1.46,-0.998758)(1.46,-1.053758)(1.46,-1.083758)
\curveto(1.46,-1.113758)(1.455,-1.163758)(1.45,-1.183758)
\curveto(1.445,-1.203758)(1.42,-1.243758)(1.4,-1.2637581)
\curveto(1.38,-1.283758)(1.345,-1.313758)(1.33,-1.323758)
\curveto(1.315,-1.333758)(1.275,-1.353758)(1.25,-1.363758)
\curveto(1.225,-1.373758)(1.175,-1.383758)(1.15,-1.383758)
\curveto(1.125,-1.383758)(1.075,-1.383758)(1.05,-1.383758)
\curveto(1.025,-1.383758)(0.995,-1.3887581)(0.98,-1.403758)
}
\usefont{T1}{ptm}{m}{n}
\rput(2.5555937,-2.1960318){$X$}
\usefont{T1}{ptm}{m}{n}
\rput(1.7154688,1.5640931){$D$}
\rput(2.9954688,1.5640931){$D^*$}
\rput(2.9954688,1.0440931){$C^*$}
\psline[linewidth=0.056199998cm](2.581875,2.3275943)(2.581875,-1.7959068)
\usefont{T1}{ptm}{m}{n}
\rput(0.34546875,-0.13575801){\color{color202}$P$}
\usefont{T1}{ptm}{m}{n}
\rput(1.7254688,1.0440931){$C$}
\psline[linewidth=0.056199998cm](1.0,1.2759682)(1.00,-1.7959068)
\usefont{T1}{ptm}{m}{n}
\rput(0.97559375,-2.1960318){$X'$}
\psdots[dotsize=0.16](1.002,0.27774936)
\usefont{T1}{ptm}{m}{n}
\rput(0.79906255,-0.92412555){$u'$}
\usefont{T1}{ptm}{m}{n}
\rput(0.79906243,-1.5841255){$u''$}
\rput(2.8290625,-0.4341255){$s_2$}
\rput(2.8290625,0.56841255){$s_1$}
\psdots[dotsize=0.12](1.002,-0.6622506)
\psdots[dotsize=0.12](1.002,-1.3822507)
\psdots[dotsize=0.16](2.582,0.27774936)
\usefont{T1}{ptm}{m}{n}
\rput(1.8160938,0.146242){top vertex}
\usefont{T1}{ptm}{m}{n}
\rput(0.37546876,1.0440931){$D'$}
\pscustom[linewidth=0.02,linecolor=color202]
{
\newpath
\moveto(2.58,-0.283758)
\lineto(2.5273685,-0.293758)
\curveto(2.5010529,-0.298758)(2.4273684,-0.30875802)(2.38,-0.31375802)
\curveto(2.3326316,-0.318758)(2.227368,-0.328758)(2.1694736,-0.333758)
\curveto(2.111579,-0.338758)(2.0063157,-0.34375802)(1.9589472,-0.34375802)
\curveto(1.911579,-0.34375802)(1.8063159,-0.338758)(1.7484211,-0.333758)
\curveto(1.6905261,-0.328758)(1.5905261,-0.323758)(1.548421,-0.323758)
\curveto(1.506316,-0.323758)(1.4221051,-0.323758)(1.38,-0.323758)
\curveto(1.3378949,-0.323758)(1.253684,-0.323758)(1.211579,-0.323758)
\curveto(1.1694739,-0.323758)(1.1010528,-0.318758)(1.074737,-0.31375802)
\curveto(1.048421,-0.30875802)(1.011579,-0.298758)(0.98,-0.283758)
}
\pscustom[linewidth=0.02,linecolor=color202]
{
\newpath
\moveto(2.58,0.7359681)
\lineto(1.81,0.6959683)
\curveto(1.425,0.6759682)(1.015,0.65096825)(0.99,0.64596826)
\curveto(0.965,0.64096826)(0.9,0.6309683)(0.86,0.6259683)
\curveto(0.82,0.6209682)(0.77,0.61096823)(0.74,0.5959682)
}
\psdots[dotsize=0.12](2.582,0.7259682)
\psdots[dotsize=0.12](2.582,-0.2940318)
\end{pspicture} 
}\hspace{0.3cm}%
       \rput(0.1,2){\scalebox{1.7}{$\Rightarrow$}}\hspace{0.4cm}%
\scalebox{1} 
{
\begin{pspicture}(0,-2.3675942)(2.9273126,2.3556943)
\definecolor{color202}{rgb}{1.0,0.4,0.4}
\pscustom[linewidth=0.02,linecolor=color202]
{
\newpath
\moveto(1.0,-0.283758)
\lineto(0.95,-0.273758)
\curveto(0.925,-0.268758)(0.875,-0.268758)(0.85,-0.273758)
\curveto(0.825,-0.27875802)(0.775,-0.293758)(0.75,-0.303758)
\curveto(0.725,-0.31375802)(0.68,-0.338758)(0.66,-0.353758)
\curveto(0.64,-0.368758)(0.62,-0.408758)(0.62,-0.43375802)
\curveto(0.62,-0.458758)(0.625,-0.503758)(0.63,-0.523758)
\curveto(0.635,-0.54375803)(0.655,-0.573758)(0.67,-0.583758)
\curveto(0.685,-0.593758)(0.715,-0.613758)(0.73,-0.623758)
\curveto(0.745,-0.633758)(0.78,-0.648758)(0.8,-0.653758)
\curveto(0.82,-0.658758)(0.865,-0.663758)(0.89,-0.663758)
\curveto(0.915,-0.663758)(0.975,-0.66875803)(1.01,-0.67375803)
\curveto(1.045,-0.678758)(1.105,-0.683758)(1.13,-0.683758)
\curveto(1.155,-0.683758)(1.205,-0.688758)(1.23,-0.693758)
\curveto(1.255,-0.698758)(1.3,-0.718758)(1.32,-0.73375803)
\curveto(1.34,-0.748758)(1.375,-0.783758)(1.39,-0.803758)
\curveto(1.405,-0.823758)(1.43,-0.863758)(1.44,-0.883758)
\curveto(1.45,-0.903758)(1.46,-0.948758)(1.46,-0.973758)
\curveto(1.46,-0.998758)(1.46,-1.053758)(1.46,-1.083758)
\curveto(1.46,-1.113758)(1.455,-1.163758)(1.45,-1.183758)
\curveto(1.445,-1.203758)(1.42,-1.243758)(1.4,-1.2637581)
\curveto(1.38,-1.283758)(1.345,-1.313758)(1.33,-1.323758)
\curveto(1.315,-1.333758)(1.275,-1.353758)(1.25,-1.363758)
\curveto(1.225,-1.373758)(1.175,-1.383758)(1.15,-1.383758)
\curveto(1.125,-1.383758)(1.075,-1.383758)(1.05,-1.383758)
\curveto(1.025,-1.383758)(0.995,-1.3887581)(0.98,-1.403758)
}
\psframe[linewidth=0.04,linecolor=white,dimen=outer,fillstyle=solid](1.82,-0.563758)(1.0,-1.523758)
\psline[linewidth=0.056199998cm](1.0,1.2759682)(1.00,-1.7959068)
\psline[linewidth=0.056199998cm,linecolor=color202](1.0,-0.643758)(1.0,-1.403758)
\pscustom[linewidth=0.02,linecolor=color202]
{
\newpath
\moveto(0.74,0.596242)
\lineto(0.68,0.57624197)
\curveto(0.65,0.566242)(0.6,0.541242)(0.58,0.526242)
\curveto(0.56,0.511242)(0.52,0.481242)(0.5,0.466242)
\curveto(0.48,0.451242)(0.44,0.416242)(0.42,0.396242)
\curveto(0.4,0.37624198)(0.36,0.336242)(0.34,0.31624198)
\curveto(0.32,0.296242)(0.29,0.261242)(0.28,0.24624199)
\curveto(0.27,0.231242)(0.245,0.191242)(0.23,0.16624199)
\curveto(0.215,0.141242)(0.195,0.086242)(0.19,0.056241993)
\curveto(0.185,0.026241994)(0.17,-0.033758007)(0.16,-0.06375801)
\curveto(0.15,-0.09375801)(0.14,-0.153758)(0.14,-0.183758)
\curveto(0.14,-0.213758)(0.135,-0.283758)(0.13,-0.323758)
\curveto(0.125,-0.363758)(0.12,-0.428758)(0.12,-0.453758)
\curveto(0.12,-0.478758)(0.12,-0.528758)(0.12,-0.553758)
\curveto(0.12,-0.578758)(0.125,-0.633758)(0.13,-0.663758)
\curveto(0.135,-0.693758)(0.145,-0.748758)(0.15,-0.773758)
\curveto(0.155,-0.79875803)(0.165,-0.848758)(0.17,-0.873758)
\curveto(0.175,-0.898758)(0.195,-0.948758)(0.21,-0.973758)
\curveto(0.225,-0.998758)(0.265,-1.048758)(0.29,-1.073758)
\curveto(0.315,-1.098758)(0.36,-1.1387581)(0.38,-1.153758)
\curveto(0.4,-1.168758)(0.44,-1.198758)(0.46,-1.213758)
\curveto(0.48,-1.228758)(0.53,-1.253758)(0.56,-1.2637581)
\curveto(0.59,-1.273758)(0.645,-1.293758)(0.67,-1.303758)
\curveto(0.695,-1.313758)(0.735,-1.333758)(0.75,-1.343758)
\curveto(0.765,-1.353758)(0.81,-1.368758)(0.84,-1.373758)
\curveto(0.87,-1.378758)(0.925,-1.3887581)(0.95,-1.393758)
\curveto(0.975,-1.398758)(1.005,-1.398758)(1.02,-1.383758)
}
\usefont{T1}{ptm}{m}{n}
\rput(2.5555937,-2.1960318){$X$}
\usefont{T1}{ptm}{m}{n}
\rput(1.7154688,1.5640931){$D$}
\rput(2.9954688,1.5640931){$D^*$}
\rput(2.9954688,1.0440931){$C^*$}
\psline[linewidth=0.056199998cm](2.581875,2.3275943)(2.581875,-1.7959068)
\usefont{T1}{ptm}{m}{n}
\rput(0.34546875,-0.13575801){\color{color202}$P$}
\usefont{T1}{ptm}{m}{n}
\rput(1.7254688,1.0440931){$C$}
\usefont{T1}{ptm}{m}{n}
\rput(0.97559375,-2.1960318){$X'$}
\psdots[dotsize=0.16](1.002,0.27774936)
\usefont{T1}{ptm}{m}{n}
\rput(0.79906255,-0.92412555){$u'$}
\usefont{T1}{ptm}{m}{n}
\rput(0.79906243,-1.5841255){$u''$}
\rput(2.8290625,-0.4341255){$s_2$}
\rput(2.8290625,0.56841255){$s_1$}
\psdots[dotsize=0.12](1.002,-0.6622506)
\psdots[dotsize=0.12](1.002,-1.3822507)
\psdots[dotsize=0.16](2.582,0.27774936)
\usefont{T1}{ptm}{m}{n}
\rput(1.8160938,0.146242){top vertex}
\usefont{T1}{ptm}{m}{n}
\rput(0.37546876,1.0440931){$D'$}
\pscustom[linewidth=0.02,linecolor=color202]
{
\newpath
\moveto(2.58,-0.283758)
\lineto(2.5273685,-0.293758)
\curveto(2.5010529,-0.298758)(2.4273684,-0.30875802)(2.38,-0.31375802)
\curveto(2.3326316,-0.318758)(2.227368,-0.328758)(2.1694736,-0.333758)
\curveto(2.111579,-0.338758)(2.0063157,-0.34375802)(1.9589472,-0.34375802)
\curveto(1.911579,-0.34375802)(1.8063159,-0.338758)(1.7484211,-0.333758)
\curveto(1.6905261,-0.328758)(1.5905261,-0.323758)(1.548421,-0.323758)
\curveto(1.506316,-0.323758)(1.4221051,-0.323758)(1.38,-0.323758)
\curveto(1.3378949,-0.323758)(1.253684,-0.323758)(1.211579,-0.323758)
\curveto(1.1694739,-0.323758)(1.1010528,-0.318758)(1.074737,-0.31375802)
\curveto(1.048421,-0.30875802)(1.011579,-0.298758)(0.98,-0.283758)
}
\pscustom[linewidth=0.02,linecolor=color202]
{
\newpath
\moveto(2.58,0.7359681)
\lineto(1.81,0.6959683)
\curveto(1.425,0.6759682)(1.015,0.65096825)(0.99,0.64596826)
\curveto(0.965,0.64096826)(0.9,0.6309683)(0.86,0.6259683)
\curveto(0.82,0.6209682)(0.77,0.61096823)(0.74,0.5959682)
}
\psdots[dotsize=0.12](2.582,0.7259682)
\psdots[dotsize=0.12](2.582,-0.2940318)
\end{pspicture} 
}}%
    \end{center}
    \caption{The replacement of a pseudo shortcut $P$ crossing
            $X'$ more than once.
             }
    \label{fig:crestSho}
  \end{figure}

\begin{lemma}\label{lem:StrongOptimal}
Let
$D$ and $D^*$ be the $(X,\varphi)$-components for a
crest separator $X\in\mathcal{S}$ such that $D^*$ is not enclosed by $X$
or $X$ has an exterior lowpoint.
Moreover, let $P$ either 
be 
\begin{itemize}
\item[(a)] 
the part contained in $D$ of a
minimal coast separator $Y$ for
a crest $H\in \mathcal H$
 in $D^*$ or
\item[(b)] a strict
$\DD$-pseudo shortcut of a crest-separator path $X^{\mathrm{CP}}$ of
 $X$.
\end{itemize}
Then, 
for all crest separators $X'=(P'_1,P'_2) \in\mathcal{S}$ with
an $(X',\varphi)$-component $\DD'\subseteq \DD$, possibly $X'=X$,
$P$ has at most one 
maximal \spp{\DD'}{X'}-subpath. If
$P$
has such a path $P^*$, $P^*$ 
is 
a 
strict $\DD'$-pseudo shortcut of 
the crest-separator path $(X')^{\mathrm{CP}}$ of~$X'$ 
such that $(X')^{\mathrm{CP}}$ has the same endpoints as $P^*$ and is contained
\mbox{in the inner graph ${I}$ of}
\begin{itemize}
\item 
the cycle induced by the vertex set of~$Y$ (Case (a)) or
\item the composed cycle 
 $Q$ of~$(X^{\mathrm{CP}},P)$ (Case (b)).
\end{itemize}
\end{lemma}

\begin{proof}
Roughly speaking, we first want to show that, for all $j\in\{1,2\}$, 
there is at most one crossing between $P$ and $P'_j$.
In fact, we want to show something more with respect to two concerns.
\begin{enumerate}
\item If $P$ starts in a common vertex of~$X$ and $X'$, 
we also want to consider this 
entering of~$D'$ 
as a crossing vertex. Therefore, we let $\tilde{P}$ be a path obtained
from $P$ by adding two edges of~$D^*$ not being border edges of~$X$ at
the beginning and the ending of~$P$.
\item 
We
consider one of
possible several crossings of 
$\tilde{P}$ and $P'_j$ and we mark all crossing vertices that 
belong to exactly this crossing. 
More precisely, we choose 
the crossing whose crossing vertices are the last crossing
vertices on $P'_j$. 
Note, that the marked vertices induce a subpath of~$P'_j$. 
In the next paragraph, we show that no other vertex
of~$\tilde{P}$ can appear before the marked vertices on $P'_j$. This implicitly
shows that, for each $j'\in\{1,2\}$, 
there can be at most one crossing between $\tilde{P}$ and $P'_j$
and, more important, that there can be only at most one 
maximal \spp{\DD'}{X'}-subpath.
\end{enumerate}

Let $v$ be the last marked vertex on 
$P'_j$.
Since the coast is not part of~$I$, all vertices after $v$ on $P'_j$
are not part of~$I$. Hence, if $P'_j$ contains vertices of~$\tilde{P}$
before the marked vertices, then the subpath of~$P'_j$ 
between the last such vertex $u'$ and the first marked vertex
$u''$
is contained in $I$ and can be replaced by the 
crest-separator path
of~$X'$ between $u'$ and $u''$ (see Fig.~\ref{fig:crestSho} for two 
possible examples).
This replacement 
does not increase the weighted length of~$\tilde{P}$/of~$Y$
(Lemma~\ref{lem:ShortAlley}(a)), but reduces the number of inner faces of
${I}$; which is a contradiction to the definition of~$\tilde{P}$ being
a
strict
pseudo shortcut or being part of a minimal coast separator. 
Thus, our assumption that there are vertices of~$\tilde{P}$ before the 
marked vertices on $P'_j$ is wrong.

We next conclude that, if there is a 
maximal \spp{\DD'}{X'}-subpath $P'$, since it is the only one and since
$P$ is a strict pseudo shortcut or part of a minimal coast separator,
$P'$ must have a shorter weighted length than the 
crest-separator path
of~$X'$
that is part of~$I$ and between the 
endpoints of~$P'$. Note also that, if $X'$ has an interior
lowpoint, it cannot enclose the
$(X',\varphi)$-component opposite to $\DD'$ since, otherwise, $X$ must
also have an interior lowpoint and must enclose $\DD^*$.
Hence $P'$ is a pseudo-shortcut of 
$(X')^{\mathrm{CP}}$. Since 
$\tilde{P}$ is a strict pseudo shortcut or part of a minimal coast separator,
$P'$ as the only maximal \spp{\DD'}{X'}-subpath must be 
strict.
\markatright~\end{proof}

In the following we want to compute pseudo shortcuts for
the different crest separators by a bottom-up traversal
in the mountain connection tree $T$ of 
$(G,\varphi,c,{\mathcal H},\mathcal{S})$.
Therefore, let us assume that $T$ is rooted at some arbitrary node.
Let $X$ be
a crest separator
going
weakly between an $(\mathcal{S},\varphi)$-component $\CC$
and its parent in $T$, and
let $\DD$ be
the ($X,\varphi)$-component $\DD$ containing $\CC$.
The idea to construct
a
$\DD$-pseudo shortcut of~$X$ can be
described
as follows:
A $\DD$-pseudo shortcut first follows a path in $\CC$ and possibly, after
reaching a vertex $v$ of~$\CC$ that is part of a crest separator
$X'\in\mathcal{S}$ weakly going between $\CC$ and a child
$\CC'$ of~$\CC$ in $T$, it follows a (precomputed) $\DD'$-pseudo shortcut
of~$X'$
for the $(X',\varphi)$-component $D'$ containing $C'$, then it follows
again a path in $\CC$
and, after possibly containing further pseudo shortcuts, it returns
to $\CC$ and never leaves $\CC$ anymore.
Indeed, if a $D$-pseudo shortcut of~$X$, immediately after reaching a
vertex $v$ of~$\CC$
that is part of a crest separator $X'$
with the properties described above,
contains
an edge outside the $(X',\varphi)$-component containing $C$,
then we show that $P$ contains a
$\DD'$-pseudo shortcut $P'$
for the
 $(X',\varphi)$-component $D'$
containing $C'$. However,
since 
$X$ may not completely contained in $C$, a
$D$-pseudo shortcut of~$X$ may not start in a vertex of~$\CC$.
This
is the reason why, for subsets
$\mathcal{S}',\mathcal{S}''\subseteq\mathcal{S}(G,\varphi,c)$
with $\mathcal{S}'\subseteq\mathcal{S}''$
(an example for~$S'\neq S''$ can be found in Lemma~\ref{lem:ExtCrest})
and for an $(\mathcal{S'},\varphi)$-component
$C$, we define
the {\em extended
component}
$\mathrm{ext}(C,\mathcal{S''})$
as the plane graph obtained from
$C$ by adding the border edges of all
crest separators $X\in\mathcal{S''}$
with a top edge in $C$.
This should mean of course
that the endpoints of the border edges are also added
as vertices to~$C$.
As embedding of~$\mathrm{ext}(C,\mathcal{S''})$, we always take
$\varphi|_{\mathrm{ext}(C,\mathcal{S''})}$.

The next three lemmas prove
some properties of extended components that 
allows us to guarantee the existence of pseudo shortcuts with
nice properties 
in Lemma~\ref{lem:PSCComputing} from which we show
in Lemma~\ref{lem:ShortCutSet} that they can be constructed efficiently.

\begin{lemma}\label{lem:FirstEdge}
Let
$C$ be an $(\mathcal{S},\varphi)$-component
and let
$e$ be an edge with exactly one endpoint
$v$
in
$\mathrm{ext}(C,\mathcal{S})$.
Then, $v$ is part of
a crest separator in $\mathcal{S}$ with a top~edge~in~$C$.
\end{lemma}

\begin{proof}
By the definition of~$\mathrm{ext}(C,\mathcal{S})$ the
lemma holds if $v$ is not contained in $C$. It
remains to consider the case that $v$ is in $C$.
The definition of an
$(\mathcal{S},\varphi)$-component
implies that
$v$ is a vertex of a crest separator in $\mathcal{S}$.
We define the {\em boundary} of~$C$
to be the graph that consists of 
the vertices and edges of~$C$ that are incident to a
face~$f$
of~$\varphi|_C$ such that $f$ is not a face of
$\varphi$. (Roughly speaking, $f$ is the union of several faces of~$\varphi$.)

Let $u$
be a vertex of largest
upper height
such that there is a
down path
from $u$ to $v$ that
is contained in the boundary of $C$.
Since $G$ is biconnected, Obs.~\ref{obs:biconnComp}
implies that $C$ is biconnected. Thus,
$u$ is incident
to two edges $\{u,v_1\}$ and $\{u,v_2\}$ on the boundary of~$C$.
Assume for a moment that both edges are down edges.  Since each vertex is connected by
down edges to
at most one vertex of smaller upper height and since down edges only connect
vertices of different upper heights, 
either $v_1$ or~$v_2$ must have larger upper height than $u$. This is a contradiction to our
choice of~$u$.
Consequently, one of~$\{u,v_1\}$ and $\{u,v_2\}$ 
is a top edge in $C$ that belongs to  
a crest separator $X\in\mathcal{S}$ with a top edge in $C$. Since the down path from $u$ contains
$v$, it follows that 
$X$
contains $v$.
\end{proof}

For the next two lemmas, let $\DD$ and $\tilde{\DD}$ be
the two $(X,\varphi)$-components
of a crest separator $X$
in
$\mathcal{S}(G,\varphi,c)$,
and let $\mathcal{S}$ be a set of crest separators
with $\{X\} \subseteq \mathcal{S}\subseteq \mathcal{S}(G,\varphi,c)$.

\begin{lemma}\label{lem:ExtCrest}
The extended component $\mathrm{ext}(\DD,\mathcal{S})$
consists
only of
the vertices and edges of~$\DD$,
the border edges of~$X$, and their endpoints.
\end{lemma}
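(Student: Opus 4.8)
The plan is to rephrase the assertion. Since $X\in\mathcal{S}$ and the top edge of $X$, being a border edge of $X$, lies in both $(X,\varphi)$-components — in particular in $\DD$ — the graph $\mathrm{ext}(\DD,\{X\})$, which is $\DD$ together with the border edges of $X$ and their endpoints, is contained in $\mathrm{ext}(\DD,\mathcal{S})$. It therefore suffices to prove the reverse containment, that is, for every crest separator $X'=(P_1',P_2')\in\mathcal{S}$ whose top edge $e'$ lies in $\DD$, every border edge of $X'$ is an edge of $\DD$ or a border edge of $X$. If $e'$ is itself a border edge of $X$, then it is the unique non-down border edge, hence the top edge, of $X$, so $X$ and $X'$ have exactly the same border edges and we are done; hence I may assume $e'$ is strictly inside $\DD$, i.e.\ both faces incident to $e'$ lie on the $\DD$-side.

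Two ingredients are needed. First, a planar--topological fact: since $\{X\}$ splits $(G,\varphi)$ into exactly the two $(X,\varphi)$-components, any edge incident both to a face on the $\DD$-side and to a face on the opposite side must be a border edge of $X$ (otherwise the two faces would be $(\{X\},\varphi)$-connected); consequently any vertex incident both to an edge of $\mathrm{ext}(\DD,\{X\})$ and to an edge strictly inside the opposite component is an endpoint of a border edge of $X$, i.e.\ a vertex of $X$ (cf.\ Lemma~\ref{lem:FirstEdge}, which is the special case in which the edge has exactly one endpoint in $\mathrm{ext}(\DD,\{X\})$). Second, from Definition~\ref{def:2}: along $P_1'\circ P_2'$ the heights strictly decrease away from $e'$, so every border edge of $X'$ other than $e'$ is a down edge, traversed from a vertex $v$ to its down vertex $v\!\down$; and a case check in Definition~\ref{def:2} — an internal vertex of a down-path part of $X$, the lowpoint of $X$, a top vertex of $X$ — shows that whenever a vertex $v$ of $X$ has height at least $2$ the border edge of $X$ leaving $v$ downward is exactly $\{v,v\!\down\}$.

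Now assume some border edge of $X'$ lies neither in $\DD$ nor among the border edges of $X$, and walk along $P_1'\circ P_2'$ starting from $e'$ (which lies in $\mathrm{ext}(\DD,\{X\})$) toward one of its two coast endpoints. Let $f$ be the first edge not contained in $\mathrm{ext}(\DD,\{X\})=\DD\cup\{\text{border edges of }X\text{ and their endpoints}\}$, and let $v$ be the endpoint that $f$ shares with its predecessor $g$ on the walk. Then $g\in\mathrm{ext}(\DD,\{X\})$ while $f$ is neither an edge of $\DD$ nor a border edge of $X$, so both faces incident to $f$ lie on the opposite side; by the first ingredient, $v$ is a vertex of $X$. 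Since $f$ follows $e'$ along $P_1'\circ P_2'$, it is the down edge $\{v,v\!\down\}$ of $v$; a coast vertex has no down edge, so $v$ has height at least $2$, and then $f$ is precisely the downward border edge of $X$ at $v$ — contradicting $f\notin\mathrm{ext}(\DD,\{X\})$. Hence every border edge of $X'$ lies in $\DD$ or is a border edge of $X$, which is the containment we wanted, and the lemma follows.

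The step I expect to require the most care is the planar--topological one together with the degenerate positions of $v$ on $X$: the lowpoint of $X$ (where its two down paths merge), the one or two top vertices of $X$, and the interplay with the top-edge regions of both $X$ and $X'$, where the identity ``downward border edge $=$ down edge'' must be read off directly from Definition~\ref{def:2}.
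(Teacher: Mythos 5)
Your proof is correct and follows essentially the same route as the paper's: the border edges of any $X'\in\mathcal{S}$ with top edge in $\DD$ form down paths launched from inside $\DD$, a down path can exit $\DD$ only through a vertex of $X$, and at a vertex of $X$ the unique down edge is itself a border edge of $X$. You organize this as a first-bad-edge contradiction and spell out the planar separation step and the degenerate cases more explicitly than the paper's three-sentence argument, but the mechanism is identical.
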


\begin{proof}
Each vertex $v\in\mathrm{ext}(\DD,\mathcal{S})$ either belongs to $\DD$ or
is an endpoint of a border edge of a crest separator with a top
edge $\{v',v''\}$ in $\DD$, i.e., $v$ is
reachable by a down path from vertex $v'$ or $v''$.
A down path
$P$ starting in a vertex of~$D$ can
leave $\DD$ only after reaching a vertex $x \in X$. But
after reaching $x$, $P$ must follow the down path of~$x$ and therefore
all edges after $x$ on $P$ must be border edges of~$X$.
\end{proof}

\begin{lemma}\label{lem:CContained}
If $e$ is an edge
with exactly one endpoint $v$ in $\mathrm{ext}(\DD,\mathcal{S})$,
then $v$ is part of~$X$.
\end{lemma}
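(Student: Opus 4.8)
The plan is to reduce the statement to Lemma~\ref{lem:FirstEdge}, which already handles edges that leave an extended component. The key observation is that for the $(X,\varphi)$-component $\DD$ the two extended components $\mathrm{ext}(\DD,\mathcal{S})$ and $\mathrm{ext}(\DD,\{X\})$ have exactly the same vertices and edges. First I would note that $\{X\}\subseteq\mathcal{S}$, so every vertex and edge of $\mathrm{ext}(\DD,\{X\})$ also lies in $\mathrm{ext}(\DD,\mathcal{S})$. For the converse inclusion I would invoke Lemma~\ref{lem:ExtCrest}: it says that $\mathrm{ext}(\DD,\mathcal{S})$ consists only of the vertices and edges of $\DD$, the border edges of $X$, and their endpoints. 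Since the top edge of $X$ is the unique top edge of a crest separator that lies in both $(X,\varphi)$-components, it is in particular an edge of $\DD$; hence, by the definition of an extended component, $X$ does contribute all of its border edges and their endpoints to $\mathrm{ext}(\DD,\{X\})$, and of course $\DD$ itself is contained in $\mathrm{ext}(\DD,\{X\})$. Thus $\mathrm{ext}(\DD,\{X\})$ already contains everything that Lemma~\ref{lem:ExtCrest} allows $\mathrm{ext}(\DD,\mathcal{S})$ to contain, and the two coincide as subgraphs of $G$.

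Given this identification, I would finish as follows. By hypothesis $e$ has exactly one endpoint $v$ in $\mathrm{ext}(\DD,\mathcal{S})=\mathrm{ext}(\DD,\{X\})$. Now $\DD$ is an $(\{X\},\varphi)$-component and $\{X\}\subseteq\mathcal{S}(G,\varphi)$, so Lemma~\ref{lem:FirstEdge}, applied with the set $\{X\}$ in place of $\mathcal{S}$ and the component $\DD$ in place of $C$, yields that $v$ is part of a crest separator in $\{X\}$ with a top edge in $\DD$. The only crest separator in $\{X\}$ is $X$ itself, so $v\in X$, as claimed.

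The only point that needs care is the first step, i.e.\ that no crest separator in $\mathcal{S}$ other than $X$ can enlarge $\mathrm{ext}(\DD,\mathcal{S})$ beyond $\mathrm{ext}(\DD,\{X\})$; but this is precisely the content of Lemma~\ref{lem:ExtCrest}, so once that lemma is available the argument is immediate. If one prefers to avoid Lemma~\ref{lem:FirstEdge}, the same conclusion can be obtained directly: by Lemma~\ref{lem:ExtCrest}, $v$ is either a vertex of $\DD$ or an endpoint of a border edge of $X$; in the latter case $v\in X$ and we are done, while in the former case, if $v\notin X$ then no edge incident to $v$ is a border edge of $X$, so all inner faces around $v$ are $(\{X\},\varphi)$-connected and, by maximality of the face set defining $\DD$, all of them belong to $\DD$; hence every edge incident to $v$ — in particular $e$ — is an edge of $\DD\subseteq\mathrm{ext}(\DD,\mathcal{S})$, contradicting the assumption that $e$ has only one endpoint in $\mathrm{ext}(\DD,\mathcal{S})$. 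Either way the substantive work has already been done in the preceding lemmata.
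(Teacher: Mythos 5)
Your argument is correct, and your fallback argument at the end is essentially the paper's own proof: the paper deduces from Lemma~\ref{lem:ExtCrest} that $v$ lies in $\DD$ or in $X$, and then rules out $v\in \DD\setminus X$ by observing that no edge joins a vertex of $\DD\setminus X$ to a vertex of the opposite component $\tilde{\DD}$ outside $X$ (which is what your face-connectivity argument makes explicit). Your primary route is genuinely different in packaging: you first establish the identity $\mathrm{ext}(\DD,\mathcal{S})=\mathrm{ext}(\DD,\{X\})$ — the inclusion $\supseteq$ from $X\in\mathcal{S}$ and the top edge of $X$ lying in $\DD$, the inclusion $\subseteq$ from Lemma~\ref{lem:ExtCrest} — and then invoke Lemma~\ref{lem:FirstEdge} with $\mathcal{S}:=\{X\}$ and $C:=\DD$, which is legitimate since $\DD$ is an $(\{X\},\varphi)$-component and the only crest separator in $\{X\}$ is $X$ itself. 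What this buys is uniformity: the boundary-of-a-component analysis is done once, in Lemma~\ref{lem:FirstEdge}, and the present lemma becomes a corollary of the two preceding ones rather than a fresh planarity argument. What it costs is that the reduction leans on the slightly delicate identification of the two extended components, which you justify correctly but which the paper's direct one-line argument avoids. Either version is acceptable.
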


\begin{proof}
Note that there are no
direct edges from a vertex $v\in\DD$ to a vertex $\tilde{v}\in\tilde{\DD}$
with neither $v$ nor $\tilde{v}$ being part of~$X$. Hence, $v$ must be part of
$X$ by Lemma~\ref{lem:ExtCrest}.
\end{proof}

As mentioned above, we want to precompute pseudo shortcuts for the 
crest separators in $\mathcal{S}$ since we later want to use
them to construct coast separators. This only works if the pseudo
shortcuts have some nice properties 
(%
in some kind similar
to the properties of strict pseudo
shortcuts, but the strict pseudo shortcuts have the problem 
that they cannot be computed efficiently). Therefore, let us consider
a crest separator $X_0$ and an $(X_0,\varphi)$-component $\DD$ of
$X_0$. Moreover, let us root the mountain connection
tree $T$ of~$(G,\varphi,c,{\mathcal H},\mathcal{S})$ such that
for the $(\mathcal{S},\varphi)$-components $\CC$ and $\CC_0$ containing 
the top edge of~$X_0$, the $(\mathcal{S},\varphi)$-component $\CC_0$ 
not being contained in $\DD$ 
is the parent of the other $(\mathcal{S},\varphi)$-component $\CC$. 
Moreover, let   
$\CC_1,\ldots,\CC_j$ be the children of~$C$ in $T$ and,
for
$i\in\{1,\ldots,j\}$, $X_i$ be the crest separator
with a top edge in~$\CC$ and $\CC_i$. 
Finally,
let $\DD_i$ ($i\in\{1,\ldots,j\}$) be the
$(X_i,\varphi)$-component that contains $\CC_i$.
Using these definitions we define a new kind of pseudo shortcuts.

\begin{definition}[nice pseudo shortcut]
\label{def:nice}
An $s_1$-$s_2$-connecting $\DD$-pseudo shortcut $P$ 
of~$X_0$ is called {\em
nice} if it has shortest weighted length among all 
$s_1$-$s_2$-connecting $\DD$-pseudo shortcuts of~$X_0$ and if
it
consists exclusively of subpaths in
$\mathrm{ext}(C,\mathcal{S})$ and, 
for each
$i\in\{1,\ldots,j\}$, 
of at most one 
nice $\DD_i$-pseudo shortcut $P_i$ for some 
crest-separator path
$X_i^{CP}$ of~$X_i$ that is part of the inner graph of the composed cycle of
$(X_0^{CP},P)$. 
\end{definition}

The following lemma guarantees the existence
of nice pseudo shortcuts.

\begin{lemma}\label{lem:PSCComputing}
If there exists
 an
$s_1$-$s_2$-connecting
 $\DD^*$-pseudo shortcut for the
$(X^*,\varphi)$-component $\DD^*$ of a crest separator 
$X^*\in\mathcal{S}$,
then there is also such a pseudo shortcut that is nice.
\end{lemma}

\begin{proof}
First, observe that with an $s_1$-$s_2$-connecting $\DD^*$-pseudo shortcut of~$X^*$,
there is also a strict $s_1$-$s_2$-connecting $\DD^*$-pseudo shortcut
of~$X^*$.
We next show that, for each crest separator $X_0$
and each $(X_0,\varphi)$-component $\DD$ of~$X_0$, each 
strict $\DD$-pseudo shortcut $P$ is nice. This is shown
by
induction 
over the number of crest separators $X\in\mathcal{S}$ for which there is a crossing of~$X$ and $P$.
Let us define $\CC,\CC_1,\ldots, \CC_j$, $X_0,X_1,\ldots,X_j$, and
$\DD,\DD_1,\ldots,\DD_j$ as strictly before Definition~\ref{def:nice}.
If there are no 
crossings between $P$ and crest separators $X\neq X_0$ in $\mathcal{S}$,
$P$ is just a
pseudo shortcut of shortest weighted length connecting the endpoints of~$P$
and is contained in $\mathrm{ext}(C,\mathcal{S})$. Then
$P$ is clearly nice.
Otherwise,
let us consider the first edge $e$
of~$P$ not contained in $\mathrm{ext}(\CC,\mathcal{S})$.
Hence, one
endpoint $v'$
of~$e$ must be part of
a crest separator $X_i$ with $i\in\{1,\ldots,j\}$%
---here we use 
Lemma~\ref{lem:FirstEdge} and the fact that
$X_0,X_1,\ldots,X_j$ are the only
crest separators with a
top edge in $C$; $i\neq 0$ since $e$ is in $D$---%
and
the other endpoint
is contained in the $(X_i,\varphi)$-component opposite to $\DD_i$.
Thus, there must be a vertex $v''$ after $v'$ on $P$ that is contained in
$\mathrm{ext}(C,\mathcal{S})$.
W.l.o.g., let $v''$ be the first such vertex. 
Because of 
Lemma~\ref{lem:CContained},
$v''$
must be also part
of~$X_i$. 
Since
$P$ is a
strict
$\DD$-pseudo shortcut of~$X_0$,
by Lemma~\ref{lem:StrongOptimal}, it has
at most one maximal $\DD_i$-subpath $P_i$
from
$v'$ to $v''$, and it
must be
a strict pseudo shortcut of the 
crest-separator path
$X_i^{\mathrm{CP}}$ of~$X_i$ with
endpoints~$v'$ and $v''$ such that $X_i^{\mathrm{CP}}$
is
contained in the composed cycle $(X_0^{\mathrm{CP}},P)$. Since the number of crest separators for
which there is a crossing of the crest separator and 
$P_i$
is smaller than 
the 
corresponding number for the whole path
$P$, we can conclude
that the subpath from $v'$ to $v''$ is nice.
If there are further parts of~$P$ not contained in
$\mathrm{ext}(\CC,\mathcal{S})$,
they can similarly shown to be
the only
 and nice
pseudo shortcut
for one of the other crest separators in
$\{X_1,\ldots,X_j\}\setminus\{X_i\}$. Together with the fact
that $P$ as a strict pseudo shortcut is a $\DD$-pseudo shortcut of 
shortest weighted length between its endpoints, we can conclude that $P$ is nice.
\end{proof}

For an $(X,\varphi)$-component
$\DD$ of a crest separator $X\in\mathcal{S}$,
let us define a {\em $d$-bounded $\DD$-pseudo shortcut set for} $X$
 to be a set
consisting of an 
$s_1$-$s_2$-connecting nice $\DD$-pseudo shortcut
 $P$ of weighted length at most $d$
for each pair $s_1$ and $s_2$
of vertices of~$X$
for which such an
$s_1$-$s_2$-connecting 
$\DD$-pseudo shortcut exists. The main aim of 
Lemma~\ref{lem:ShortCutSet} is to efficiently
construct
 a $d$-bounded $\mathcal{M}$-shortcut~set.

\begin{definition}[$d$-bounded $\mathcal{M}$-shortcut set]
For a mountain structure 
$\mathcal{M}=(G,\varphi,c,\mathcal{H},\mathcal{S})$, 
a {\em $d$-bounded $\mathcal{M}$-shortcut set} is a set consisting
of the union of a $d$-bounded $D$-pseudo shortcut set for each
crest separator $X$ and each $(X,\varphi)$-component $D$.
\end{definition}

 The pseudo shortcuts
in such a %
$d$-bounded $\mathcal{M}$-shortcut set
are later used for the construction of coast separators. However,
in order to avoid an ``overlapping'' of coast separators, we will remove some
of the constructed pseudo shortcuts. For making such a removal more efficient,
we have to store some additional informations and to introduce some
further definitions. 

An $(\mathcal{S},\varphi)$-component $C$ is called the {\em root component} of
a $\DD$-pseudo shortcut $P$ of a crest separator $X\in\mathcal{S}$ if
$C$ is the one of the two
$(\mathcal{S},\varphi)$-components containing a top edge of $X$ that is
contained in $\DD$. 
A set $\mathcal{P}$ of pseudo shortcuts  of crest separators in $\mathcal{S}$
is 
{\em consistent w.r.t.\ $\mathcal{S}$} if, for
each %
pseudo shortcut $P$ %
contained in $\mathcal{P}$ the
following holds: 
If $P$ has a subpath $P'$ such that (1) $P'$ is a pseudo shortcut
of a crest separator $X'\in\mathcal{S}$ and (2) the vertex sets
of $P'$ and $X$ are weakly disconnected by the vertex set of $X'$, 
then $P'$ is also contained in $\mathcal{P}$.
Intuitively speaking, condition (2) guarantees that $P'$
is not a pseudo shortcut on the wrong side of $X'$.
If $\mathcal{S}$ is clear from the context, we call
$\mathcal{P}$ simply consistent.
The {\em consistence graph} of a consistent set $\mathcal{P}$ 
of pseudo shortcuts is a directed graph whose node set 
consists of the pseudo 
shortcuts $P\in\mathcal{P}$ %
and whose edge set has
an edge %
from a pseudo shortcut $P\in \mathcal{P}$ 
to a pseudo shortcut 
$P'\in\mathcal{P}$ of a crest separator $X'\neq X$ %
if and only if (1) $P'$ is
a subpath of $P$ and (2) the root component of $P'$ is a neighbor of the root
component of $P$ in the mountain connection tree.

For the next lemma 
keep in mind that $G$ is %
weighted $\ell$-outerplanar;
in particular,
all
crest separators
have weighted length at most $2\ell$.

\begin{lemma}\label{lem:ShortCutSet}
Let $d\le\ell$ and $\mathcal{M}=(G,\varphi,c,\mathcal{H},\mathcal{S})$. Then 
a consistent $d$-bounded $\mathcal{M}$-shortcut 
set $\mathcal{P}$ and the consistence graph of $\mathcal{P}$ can
be constructed in $O(|{\mathcal{H}}|\ell^3 + |V|\ell )$-time.
Within the same time  one can determine the root component of each pseudo
shortcut in $\mathcal{P}$.
\end{lemma}

\begin{proof}
We already know that we can compute the mountain connection tree $T$
of~$(G,\varphi,c,{\mathcal H},\mathcal{S})$ in $O(|V| {\ell})$ time (Lemma~\ref{lem:mct1}). 
We then root $T$ and
compute the pseudo shortcut sets in
a bottom-up traversal of~$T$ followed by a top-down
traversal. Roughly speaking, the bottom-up traversal computes the
pseudo shortcut sets for the $(X,\varphi)$-components below
crest separators $X$ whereas the top-down traversal computes
the pseudo shortcut sets for the $(X,\varphi)$-components above crest separators $X$.
More precisely, let us assume that in the bottom-up traversal of 
the mountain connection tree, we want to compute a $\DD$-pseudo shortcut set
for a crest-separator $X_0$ such that $\DD$ consists of the union of 
$(\mathcal{S},\varphi)$-components of a complete subtree of the
mountain connection tree. Let us define 
$\CC,\CC_0,\CC_1,\ldots,\CC_j$,
$X_0,\ldots,X_j$,
and
 $\DD, \DD_1,\ldots,\DD_j$ in the same way as 
as strictly before Definition~\ref{def:nice}.
Then
we
can assume that we have already computed a $\DD_i$-pseudo shortcut set named
$\mathcal{L}({\DD_i})$ for~$X_i$ for all $i\in\{1,\ldots,j\}$.

If $X_0$ has an interior lowpoint and does not enclose $C$, by definition of the
pseudo shortcuts, there are no $D$-pseudo shortcuts of $X_0$, i.e., the $\DD$-pseudo
shortcut set $\mathcal{L}(\DD)$ for~$X_0$ is empty.
Otherwise, we
determine the set by
using
a single-source shortest path algorithm for each vertex $s$ of~$X_0$ as
source vertex
on the vertex-and-edge-weighted graph
$\CC'$ described in the next paragraph.
In contrast to the rest of the paper, in this proof we consider 
also graphs with both, vertex and edge weights.
We define
the {\em weighted length} of a path 
in a graph with vertex and edge weights
as the total sum of the weights of the vertices and edges of the path.
In addition, the {\em distance} of two vertices in a graph (with or
without edge weights) is the minimal weighted length of a path
connecting the vertices.
Note that the
classical Dijkstra-algorithm \cite{EWD:NumerMath59} can be easily
modified to compute the distances from a source vertex to all other
vertices in a graph with vertex and edges weights
in time linear to the size of the given graph plus the maximal
number of different
considered distances during the computation.
In the same time, the algorithm can additionally compute a so-called {\em shortest-path tree}
that, given a vertex $t$,
allows us to find a path 
of shortest weighted length
connecting the source vertex and $t$
in a time linear to the
number of vertices of the path.

The graph $\CC'$ is obtained from
$\mathrm{ext}(\CC,\mathcal{S})$
by deleting all vertices 
of the coast
and, for each $i\in \{1,\ldots,j\}$ and each nice pseudo
shortcut $P$ in $\mathcal{L}(\DD_i)$,
inserting  an edge~$e_P$
connecting the endpoints of~$P$.
We say that $e_P$
{\em represents} $P$.
The weight of a vertex is equal to its {\cost} in $G$.
Whereas the weight of each edge of~$\mathrm{ext}(\CC,\mathcal{S})$ is zero, the
weighted length for an edge representing
a pseudo shortcut is equal to the weighted length of the pseudo shortcut
plus $\epsilon$
minus
the {\cost}s of its endpoints, where $\epsilon=1/(2\ell)$ is a small penalty
that makes a shortest weighted path with many pseudo shortcuts a little bit
more expensive than a shortest weighted path with fewer pseudo shortcuts.
We subtract the weight of its endpoints since they are already 
taken into account as weights of the endpoints.
Note that, for each $D$-pseudo shortcut
$P$ of~$X_0$,
the integer parts of the distances of the two endpoints of~$P$
in $C'$ and in $D$ are the same (Lemma~\ref{lem:PSCComputing})
since $P$ is of length $\le (2\ell-1)$ so that 
the sum of the fractional parts is 
of size $\le (2\ell-1) \epsilon<1$.

Hence, after running 
the modified Dijkstra-algorithm $O(\ell)$ times, we know the distance
in $D$ of each pair of
vertices $s$ and $t$ of~$X_0$. 
Thus, we
can test 
if
their distance 
is shorter
than %
the weighted length of a
crest-separator path of~$X_0$ from~$s$ to~$t$ and
if the integer part of their distance is at most $d$.
If both is true, there is a pseudo shortcut connecting $s$ and $t$ of length at most
$d$. 
Then, using the shortest path tree we compute such a pseudo shortcut
$P^*$
from a path of shortest
weighted length in $C'$ by
replacing each edge $e_P=\{x,y\}$ representing a
pseudo shortcut $P$ by $P$ itself. Thereby we also add an edge $(P^*,P)$ into
the consistence graph of $\mathcal{P}$.
Finally, we add
$P^*$ to
the pseudo shortcut set $\mathcal{L}(\DD)$
and store $C$ as the root component of $P^*$.

Before analyzing the running time,
let us define $m_{C'}$ to be
the number of edges in $\CC'$ and 
$n_C$ to be the number of vertices in $\CC$.
Since all pseudo shortcuts have a weighted length less than 
$2{\ell}$---they are shorter than the weighted length of 
their crest separators---we can terminate every single-source shortest path computation
after the computation of
all
vertices for which the distance to the source vertex is smaller
than
$2{\ell}$. Since
there are
$(2\ell)^2$
possible distances values to consider, namely
$0,\epsilon,\ldots,(2\ell-1)\epsilon,1,1+\epsilon,\ldots,2\ell-1+(2\ell-1)\epsilon$,
and since there are
at most ${\ell}^2$ edges introduced for the pseudo shortcuts
of each crest separator in $\mathcal{S}$, each of the $O({\ell})$
single source-shortest paths problems on $C'$ can be solved
in 
$O(m_{C'}+\ell^2)=O(n_C+(j+1){\ell}^2)$ 
where $j$ is the degree $\deg_T(w)$ for
the node $w$ in $T$ identified with $C$. Note that
the number of nodes in $T$ is $O(|{\mathcal{H}}|)$.
Therefore,
the whole bottom-up traversal
runs in
$O(|{\mathcal{H}}|\ell^3 + |V|\ell)$
time.

Afterwards in a top-down traversal, we consider 
each node
$C$ of~$T$ with
its crest separators $X_0,\ldots,X_j$ defined as before.
Let $\DD'$ be the $(X_0,\varphi)$-component not containing $C$, which is
opposite to $\DD$, and
let $\DD'_i$ ($i\in \{1,\ldots,j\}$) be the $(X_i,\varphi)$-component containing
$C$, which is opposite to $\DD_i$. 
For each such crest separator $X_i \in \{X_1,\ldots,X_j\}$,
 we can compute a
$\DD'_i$-pseudo shortcut set
as follows. 
We first determine $q\in\{1,\ldots,j\}$ such that $X_q$ is a crest separator 
with the largest weighted length among all crest
separators in $\{ X \in \{X_1,\ldots,X_j\} \mid (X $
has no lowpoint) or $(X$ has an exterior lowpoint) or $(X$ encloses $C)\}$.
Intuitively, $X_q$ is the crest separator of largest weighted
length among the crest separators in $\{X_1,\ldots,X_j\}$ for which we 
need to compute
pseudo shortcuts.

Then, we compute the pseudo shortcut set $\mathcal{L}(\DD'_q)$
  for~$X_q$ 
analogously as described in the bottom-up traversal.
A similar
computation for the remaining crest separators 
$X_i\in \{X_1,\ldots,X_j\}/\{X_q\}$ would be correct, however,  
we have to proceed
differently to guarantee the running time of the lemma. 
Let $C'$ be the graph obtained from
$\mathrm{ext}(\CC,\mathcal{S})$ by 
deleting all vertices of the coast and
adding edges $e_P$
for all pseudo shortcuts $P$ in 
$\mathcal{L}(D')\,\cup\, \bigcup_{i\in\{1,\ldots,j\} \setminus \{q\}}
\mathcal{L}(D_i)$ such that $e_P$ connects the endpoints of~$P$, where $\mathcal{L}(D')$ is the empty set if $C$ is the
root of the mountain connection tree.
Note that a non-empty pseudo shortcut set $\mathcal{L}(D')$ was computed in
the previous step of the top-down traversal whereas all remaining 
pseudo shortcut sets 
were already computed
during the bottom-up traversal. 
Assign a weight to each such edge $e_P$ that is equal to the
weighted length 
of~$P$ plus $\epsilon$
minus
the {\cost}s of its endpoints, where $\epsilon=1/(2\ell)$.
All other edges of~$C'$ have weight $0$ and the vertices $v$ of~$C'$ have weight
$c(v)$; thus, equal to their weights in $G$.
In $C'$, determine 
the distance $d(x,y)$ of~$x$ and $y$ 
for
all vertices 
$x\in X_q$~and~%
$y$~of~$C'$.

After the computation of these distances we can determine
the $\DD'_i$-pseudo shortcut set in a reduced subgraph $G_i$ for
each $i\in
\{1,\ldots,j\}\setminus \{q\}$.
For its definition, let us consider the $(\{X_i,X_q\},\varphi)$-component $C^*_i$
that contains the top edges of~$X_i$ and $X_q$. 
If we remove all vertices of the coast from
the plane graph 
$\mathrm{ext}(C^*_i,\mathcal{S})$, the resulting graph is
divided 
into two sides by
two 
ridges 
connecting the crest
$H\in {\mathcal H}$
contained in $C$ with the crests 
$H_i\in {\mathcal H}$
and 
$H_q\in {\mathcal H}$
of
$C_i$ and $C_q$,
respectively. 
Vertices of the ridge belong to both sides. More formally, a {\em side}
is the graph induced by a maximal set of vertices in
$\mathrm{ext}(C^*_i,\mathcal{S})$ that do not belong to the coast and that
are not weakly
separated by the vertices part of the two
ridges.
For an illustration, see Fig.~\ref{fig:pshortcuts}.
$G_i$ contains the vertices and the edges of~$X_i$ and $X_q$ except the vertices belonging
to the coast.
In addition, $G_i$ has
edges between each vertex $x$ on $X_q$ 
and each vertex $y$ of
$X_i$
with $x$ and $y$ being part of the same side of~$\mathrm{ext}(C^*_i,\mathcal{S})$.
We assign to each such edge $\{x,y\}$ a weight 
$d_{G_i}(x,y)=d(x,y)-c(x)-c(y)$.
Recall that $D_q$ is the $(X_q,\varphi)$-component opposite to $D'_q$.
Finally for each
pair of vertices $x$ and $y$ of~$X_q$, 
$G_i$ has an edge $\{x,y\}$ with weighted
length $d_{G_i}(x,y)=d_{X_q}(x,y)-c(x)-c(y)$,
where
 $d_{X_q}(x,y)$ is either
$\epsilon$ plus 
the weighted length of a 
$D_q$-pseudo shortcut of~$X_q$,
if it exists, or otherwise, the weighted length of the
crest-separator path
of~$X_q$ from~$x$~to~$y$.

  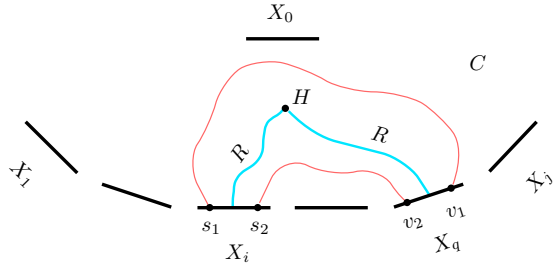
\begin{figure}[b!]
    \begin{center}
      \scalebox{0.8}{
\scalebox{1} 
{
\begin{pspicture}(0,-1.8092187)(9.607312,1.9092187)
\definecolor{color740}{rgb}{0.0,0.9019607843137255,1.0}
\definecolor{color3014}{rgb}{1.0,0.4,0.4}
\pscustom[linewidth=0.02,linecolor=color3014]
{
\newpath
\moveto(3.5282407,-1.2226175)
\lineto(3.3282406,-0.8386173)
\curveto(3.2282407,-0.64661777)(3.2282407,-0.31061774)(3.3282406,-0.1666178)
\curveto(3.4282405,-0.02261755)(3.5782406,0.3133822)(3.6282406,0.5053823)
\curveto(3.6782405,0.69738245)(3.8782406,0.9853826)(4.0282407,1.0813828)
\curveto(4.178241,1.1773825)(4.5282407,1.2253823)(4.7282405,1.1773825)
\curveto(4.928241,1.1293826)(5.2782407,0.9853823)(5.428241,0.8893826)
\curveto(5.5782404,0.7933825)(5.9782405,0.69738245)(6.2282405,0.69738245)
\curveto(6.4782405,0.69738245)(6.8782406,0.6013824)(7.0282407,0.5053823)
\curveto(7.178241,0.40938258)(7.428241,0.16938257)(7.5282407,0.025382327)
\curveto(7.6282406,-0.11861761)(7.678241,-0.4066175)(7.5282407,-0.83861756)
}
\pscustom[linewidth=0.02,linecolor=color3014]
{
\newpath
\moveto(4.348241,-1.1426176)
\lineto(4.5450406,-0.7292844)
\curveto(4.6434407,-0.5226176)(4.9386406,-0.41928443)(5.1354403,-0.5226176)
\curveto(5.332241,-0.6259513)(5.7258406,-0.67761755)(5.922641,-0.6259513)
\curveto(6.1194406,-0.57428443)(6.414641,-0.67761755)(6.5130405,-0.8326175)
\curveto(6.61144,-0.98761755)(6.7590404,-1.1426176)(6.8082404,-1.1426176)
\curveto(6.857441,-1.1426176)(6.857441,-1.1426176)(6.709841,-1.1426176)
}
\pscustom[linewidth=0.04,linecolor=color740]
{
\newpath
\moveto(3.9082406,-1.2426176)
\lineto(3.9082406,-1.1491996)
\curveto(3.9082406,-1.1024905)(3.9132407,-1.0298331)(3.9182405,-1.0038834)
\curveto(3.9232407,-0.9779337)(3.9282405,-0.9260355)(3.9282405,-0.9000858)
\curveto(3.9282405,-0.8741361)(3.9382405,-0.8222379)(3.9482405,-0.7962882)
\curveto(3.9582407,-0.7703385)(3.9882407,-0.7184403)(4.0082407,-0.6924906)
\curveto(4.0282407,-0.6665415)(4.0732408,-0.62502295)(4.098241,-0.6094535)
\curveto(4.1232405,-0.5938834)(4.1732407,-0.5627445)(4.1982408,-0.54717445)
\curveto(4.223241,-0.531605)(4.2682405,-0.49527565)(4.2882404,-0.47451636)
\curveto(4.3082404,-0.45375678)(4.3432407,-0.40704778)(4.3582406,-0.3810984)
\curveto(4.3732405,-0.3551493)(4.3932405,-0.3032505)(4.3982406,-0.27730107)
\curveto(4.4032407,-0.25135168)(4.408241,-0.19426306)(4.408241,-0.16312383)
\curveto(4.408241,-0.13198462)(4.4132404,-0.06970617)(4.4182405,-0.03856695)
\curveto(4.4232407,-0.007427731)(4.4332404,0.054850712)(4.4382405,0.08598993)
\curveto(4.4432406,0.117129155)(4.4682407,0.17421778)(4.4882407,0.20016718)
\curveto(4.5082407,0.22611658)(4.5682406,0.27282557)(4.6082406,0.29358485)
\curveto(4.6482406,0.31434444)(4.6982408,0.35067347)(4.7282405,0.39738244)
}
\pscustom[linewidth=0.04,linecolor=color740]
{
\newpath
\moveto(7.1482406,-1.0226176)
\lineto(6.954907,-0.72261757)
\curveto(6.8582406,-0.57261753)(6.5682406,-0.37261754)(6.374907,-0.32261756)
\curveto(6.181574,-0.27261755)(5.7949076,-0.17261755)(5.6015744,-0.12261755)
\curveto(5.408241,-0.07261755)(5.118241,0.07738245)(4.8282404,0.37738246)
}
\usefont{T1}{ptm}{m}{n}
\rput(3.5527718,-1.5542188){$s_1$}
\psline[linewidth=0.056199998cm](4.1282406,1.5773824)(5.3282404,1.5773824)
\usefont{T1}{ptm}{m}{n}
\rput(4.692772,2.0057812){$X_0$}
\usefont{T1}{ptm}{m}{n}
\rput(7.942772,1.1857812){$C$}
\psdots[dotsize=0.12](4.7746468,0.4257815)
\usefont{T1}{ptm}{m}{n}
\rput(5.0387717,0.6201538244){$H$}
\psline[linewidth=0.056199998cm](3.3282406,-1.2226175)(4.5282407,-1.2226175)
\psline[linewidth=0.056199998cm](4.928241,-1.2226175)(6.1282406,-1.2226175)
\psline[linewidth=0.056199998cm](6.5590305,-1.2123542)(7.6974506,-0.8328809)
\psline[linewidth=0.056199998cm](1.7590307,-0.8328809)(2.8974507,-1.2123542)
\usefont{T1}{ptm}{m}{n}
\rput{20.5}(7.427717,-1.8242187){$X_q$}
\psline[linewidth=0.056199998cm](1.3525047,-0.64688164)(0.5039766,0.20164652)
\psline[linewidth=0.056199998cm](8.128241,-0.62261754)(8.903976,0.20164652)
\usefont{T1}{ptm}{m}{n}
\rput(3.98827717,-1.9742187){$X_i$}
\usefont{T1}{ptm}{m}{n}
\rput{-14.5}(0.29825303,1.6627085){\rput(6.252772,-0.09261755){$R$}}
\usefont{T1}{ptm}{m}{n}
\rput{55.84552}(1.491236,-3.3418628){\rput(3.912772,-0.39261755){$R$}}
\psdots[dotsize=0.12](4.317194,-1.2226175)
\psdots[dotsize=0.12](3.5282407,-1.2226175)
\psdots[dotsize=0.12](6.777194,-1.1366178)
\psdots[dotsize=0.12](7.497194,-0.89661777)
\usefont{T1}{ptm}{m}{n}
\rput(4.3527718,-1.5542188){$s_2$}
\usefont{T1}{ptm}{m}{n}
\rput{17.5}(6.882772,-1.485542187){$v_2$}
\usefont{T1}{ptm}{m}{n}
\rput{17.5}(7.592772,-1.302542188){$v_1$}
\usefont{T1}{ptm}{m}{n}
\rput{-45}(0.63010305,0.116012305){\rput(0.42277172,-0.6742188){$X_1$}}
\usefont{T1}{ptm}{m}{n}
\rput{50}(2.635408,-7.300693){\rput(9.052772,-0.65421873){$X_j$}}
\end{pspicture} 
}}%
    \end{center}
    \caption{Computing a pseudo shortcut for~$X_i$.}
    \label{fig:pshortcuts}
  \end{figure}

For each pair of
vertices $s_1$ and $s_2$ on the
essential boundary of~$X_i$ belonging to different sides of $C'$,
we then determine
a path $P$ of shortest weighted length
from $s_1$ to $s_2$ in $G_i$ and 
test whether its weighted length is shorter than 
the weighted length
of the crest-separator 
path of~$X_i$ from~$s_1$ to~$s_2$ and if
the integer part of their distance is at most $d$. If both is true, 
we replace each edge $\{x,y\}$ of~$P$ 
of weighted length $d_{G_i}(x,y)$
by a path in $D'_i$ with
length $d_{G_i}(x,y)$ and endpoints $x$ and
$y$. 
Finally, we add the so modified path $P^*$ 
to the pseudo shortcut set $\mathcal{L}(\DD'_i)$ for~$X_i$ 
and store $C$ as root component of $P^*$.
Moreover, analogously to the bottom-up traversal, 
if some subpaths of $P^*$ result from replacing
an edge $e_{P'}$ representing a pseudo shortcut $P'$
by the pseudo shortcut $P'$ itself, 
we add an edge $(P^*,P')$ into the consistence graph.

We next show the correctness of those steps of the top-down traversal
that are different from those of the bottom-up traversal.
As mentioned above
the concatenation $R$ of the ridges between $H$ and $H_i$ and 
between $H$ and $H_q$ divides
$C^*_i$ into two sides. 
If there is a $D'_i$-pseudo shortcut $P$ connecting $s_1$ with
$s_2$, this path must lead from the side containing $s_1$ to the
other side containing $s_2$. 
It cannot cross $R$
by Lemma~\ref{lem:ShortAlley}(b).
Hence the only possibility for a pseudo shortcut to change sides
is 
to use a pseudo shortcut of~$X_q$,
but not 
several times because
of Lemma~\ref{lem:ShortAlley}(a). 
To sum up, $P$
consists of the concatenation of
\begin{itemize} 
\item a path from $s_1$ to a vertex $v_1$ of~$X_q$ such that the path is completely contained in the same side 
as~$s_1$, 
\item a pseudo shortcut from $v_1$ to a vertex $v_2$ of~$X_q$ that is  on the
other side 
\item a path from $v_2$ to $s_2$ being completely contained in the same
side
as~$s_2$. 
\end{itemize}
Therefore, the construction of~$G_i$ implies that,
for each  $\DD'_i$-pseudo shortcut $P$ for~$X_i$
connecting a vertex $s_1$ and a vertex $s_2$, there is a path from $s_1$ to
$s_2$ of the same length as $P$ in $G_i$. Note also that the
distance of two vertices in $G_i$ is never smaller than the distance of that
vertices in $\DD'_i$ since each path in $G_i$ can be replaced by a path in
$\DD'_i$ with the same endpoints and the same length. Thus, 
it is correct to use the graph $G_i$ for
our  
computation of a pseudo shortcut set 
$\mathcal{L}(\DD'_i)$ for~$X_i$.

Concerning the running time, note that
the distances for the edges $\{x,y\}$ of~$C'$ or of~$G_i$ with
$x$ and $y$ being endpoints of a pseudo shortcut
are already computed during the bottom-up traversal
or in a previous step of the top-down traversal (as already
mentioned). 
Recall that $j=\deg_T(w)$.
All other distances 
needed for the construction of all graphs $G_1,\ldots,j$
can be computed by $O(\ell)$ single-source shortest
paths computations in $C'$, one for each vertex 
of~$X_q$ as source vertex.
If once again, $m_{C'}$ and $n_C$ are the number of 
edges of~$C'$ and vertices
of~$C$, respectively, each of the $O(\ell)$ single-source shortest path computations
runs in
$O(m_{C'}+\ell^2)=O(n_C+(j+1){\ell}^2)$
time.
Each subgraph $G_i$ consists of 
$O(\ell^2)$ edges, and we have to consider only 
$4\ell^2$
distances values. Thus, 
the $O(\ell)$ single-source shortest
path computations on all graphs 
$G_1,\ldots,G_{j}$ run in total time 
$O(j \cdot \ell (\ell^2+4\ell^2) )=O(j \cdot \ell^3 )$. 
Consequently, the whole top-down traversal can be done
in  $O(|{\mathcal H}| \ell^3 + |V|\ell)$
time.
\end{proof}

\section{Computing Coast Separators}\label{sec:interactn2}

In this section we want to show how one can construct
a set {of 
pairwise non-crossing       
coast} separators
such that
the inner graphs of the 
coast separators contain
all crests
of `large'  height.
For this purpose, we use coast separators of three different types and for
each such type we present at least
one  central technical lemma that
helps us to guarantee the disjointness of the constructed
coast separators.
Recall that $c_{\mathrm{max}}$ is the maximum weight over all vertices.

Take $k\in \Nat$.
Let %
$(G^+,\varphi^+,c^+)$
be a %
weighted, almost triangulated,
and biconnected plane graph
with treewidth $k'\le k$,
where $\varphi^+$ is a
$(2k+2c_{\mathrm{max}})$-weighted-outerplanar embedding. Moreover,
let $(G,\varphi,c)$ be the %
weighted, almost triangulated,
and biconnected plane 
graph obtained from
$(G^+,\varphi^+,c^+)$ as follows: 
each maximal connected set 
of vertices 
whose height interval contains a value of size  
at least
$k+2c_{\mathrm{max}}$ 
is merged to one vertex %
with a weight equal to $k+2c_{\mathrm{max}}+1$ minus the smallest lower height
among the merged vertices. This means that each merged vertex has a
lower height equal to the smallest lower height among the vertices
in $G^+$ merged to this
vertex and upper height $k+2c_{\mathrm{max}}$. 
Let $\mathcal H$ be the set
of those merged vertices that 
are 
obtained from merging 
a set of vertices that
contains a vertex of of $G^+$ with upper height $2k+2c_{\mathrm{max}}$.
Note that $\mathcal H$ is a set of different crests in $(G,\varphi,c)$.

In the following, assume that we are given the integer $k$
and, beside the graph $(G,\varphi,c)$ and the crest set $\mathcal{H}$,
a set $\mathcal{S}$ of crest separators such that 
$(G,\varphi,c,{\mathcal H},\mathcal{S})$ is a good mountain structure 
as well as the mountain connection tree $T$ of this good mountain structure.

{\bf Enclosing crest separators.}
The
first type of coast separators 
that we use is a cycle induced by the essential boundary 
of a crest separator with an interior lowpoint. The following lemma describes an important
property for these kind of coast separators:

\begin{lemma}\label{lemma:enclose}
An $(\mathcal{S},\varphi)$-component $C$ can not be enclosed by more than one
crest separator with a top edge in $C$.
\end{lemma}

\begin{proof}
Assume that there are two crest separators $X_1$ and $X_2$ 
enclosing $C$
that have a top edge in $C$.
Let $I_i$ ($i\in\{1,2\}$) be the inner graph of the essential boundary of~$X_i$.
If $X_1$ encloses $C$, then the top
edge $e_2$ of~$X_2$ is part of~$I_1$, but not part of the essential boundary
of~$X_1$.
By definition of the 
down paths
the essential boundary
of~$X_2$ cannot cross the essential boundary of~$X_1$.
Thus, 
$I_2$ 
is a subgraph of~$I_1$. Moreover, since the top edge $e_1$ of~$X_1$ can not be an edge of
$X_2$, edge $e_1$ can not be part of~$I_2$.
Then $X_2$ can not enclose $C$ since C contains~$e_1$.
\end{proof}

{\bf Composed cycles.} 
The next two lemmas describe 
important properties of pseudo shortcuts and composed cycles.
A composed cycle is the second type of coast separators that we
 use.
\begin{lemma}\label{lem:enclosev}
Let $P$ be a $D$-pseudo shortcut for a 
crest-separator path $X^{\mathrm{CP}}$ of a crest separator
$X\in\mathcal{S}$ with an $(X,\varphi)$-component $D$. 
For the $(\mathcal{S},\varphi)$-component $C$ containing
the top edge of~$X$ and being contained in $D$, the 
composed cycle of $(X^{\mathrm{CP}},P)$ encloses
the crest in~${\mathcal H}$ contained in $C$.
\end{lemma}

\begin{proof}
We define $C'$ to be the
$(\mathcal{S},\varphi)$-component different from $C$ 
containing the top edge of~$X$.
$H$ and $H'$ should denote the
crests 
in~${\mathcal H}$
contained in $C$ and $C'$, respectively.
Let 
$R$ be a ridge between $H$ and $H'$.
Since $P$ and $R$ cannot
cross (Lemma~\ref{lem:ShortAlley}(b)),
the composed cycle of~$(X^{\mathrm{CP}},P)$ encloses $H$.
\end{proof}

\begin{lemma}\label{lem:SinglePS} 
Let $C_0, C_1,\ldots,C_r$ consecutive nodes of a path
in the mountain decomposition tree of 
$(G,\varphi,c,\mathcal{H},\mathcal{S})$, and, for $i\in\{1,\ldots,r\}$,
let $X_i$ be the crest separator with a top edge in $C_{i-1}$ and $C_i$ (compare
Fig.~\ref{fig:pathMCT2}). 
\begin{figure}[b!]
    \begin{center}
         \scalebox{1}{
\scalebox{1} 
{
\begin{pspicture}(0,-1.9439106)(9.61,1.9320107)
\definecolor{color303}{rgb}{0.0,0.8,0.6}
\psline[linewidth=0.04cm,linestyle=dotted,dotsep=0.10583334cm](2.0,0.27890182)(4.2,0.27890182)
\psline[linewidth=0.04cm,linestyle=dotted,dotsep=0.10583334cm](4.2,0.27890182)(5.4,0.27890182)
\psline[linewidth=0.02cm](0.6,0.27890182)(1.8,0.27890182)
\psline[linewidth=0.04cm,linestyle=dotted,dotsep=0.10583334cm](5.6,0.27890182)(7.8,0.27890182)
\psline[linewidth=0.02cm](7.8,0.27890182)(9.0,0.27890182)
\psline[linewidth=0.02cm](0.6,0.27890182)(0.0,0.67890185)
\psline[linewidth=0.02cm](0.6,0.27890182)(0.0,-0.12109817)
\psline[linewidth=0.02cm](0.6,0.27890182)(0.4,-0.32109815)
\psline[linewidth=0.02cm](0.6,0.27890182)(0.0,0.27890182)
\psline[linewidth=0.02cm](1.8,0.27890182)(1.6,-0.32109815)
\psline[linewidth=0.02cm](1.8,0.27890182)(2.0,-0.32109815)
\psline[linewidth=0.02cm](4.2,0.27890182)(4.0,-0.32109815)
\psline[linewidth=0.02cm](4.2,0.27890182)(4.4,-0.32109815)
\psline[linewidth=0.02cm](9.0,0.27890182)(9.6,0.27890182)
\psline[linewidth=0.02cm](9.0,0.27890182)(9.6,0.67890185)
\psline[linewidth=0.02cm](9.0,0.27890182)(9.6,-0.12109817)
\psline[linewidth=0.02cm](9.0,0.27890182)(9.2,-0.32109815)
\psline[linewidth=0.02cm](7.8,0.27890182)(8.0,-0.32109815)
\psline[linewidth=0.02cm](7.8,0.27890182)(7.6,-0.32109815)
\psbezier[linewidth=0.02,linecolor=red](1.2,-1.3210982)(3.4,-1.3210982)(4.62625,-0.669497)(4.62625,0.29050297)(4.62625,1.250503)(3.4,1.878902)(1.2,1.878902)
\psbezier[linewidth=0.02,linecolor=color303](8.4,1.4789017)(8.0,1.4789017)(4.952,1.4589018)(4.94625,0.25050297)(4.9405,-0.9578958)(8.0,-0.92109823)(8.4,-0.92109823)
\pscircle[linewidth=0.019199999,dimen=outer,fillstyle=solid](0.6,0.27890182){0.2975}
\usefont{T1}{ptm}{m}{n}
\rput(0.58140624,0.27890182){$C_0$}
\pscircle[linewidth=0.02,dimen=outer,fillstyle=solid](1.8,0.27890182){0.2915}
\pscircle[linewidth=0.02,dimen=outer,fillstyle=solid](4.2,0.27890182){0.2915}
\pscircle[linewidth=0.02,dimen=outer,fillstyle=solid](7.8,0.27890182){0.2915}
\pscircle[linewidth=0.02,dimen=outer,fillstyle=solid](9.0,0.27890182){0.2915}
\usefont{T1}{ptm}{m}{n}
\rput(8.981406,0.27890182){$C_r$}
\usefont{T1}{ptm}{m}{n}
\rput(1.1195314,-1.9210982){$X_1$}
\usefont{T1}{ptm}{m}{n}
\rput(8.3695305,-1.9210982){$X_r$}
\psline[linewidth=0.056199998cm](1.2,2.1039107)(1.2,-1.5210983)
\psline[linewidth=0.056199998cm](2.4,2.1039107)(2.4,-1.5210983)
\psline[linewidth=0.056199998cm](3.6,2.1039107)(3.6,-1.5210983)
\psline[linewidth=0.056199998cm](8.4,2.1039107)(8.4,-1.5210983)
\psline[linewidth=0.056199998cm](7.2,2.1039107)(7.2,-1.5210983)
\usefont{T1}{ptm}{m}{n}
\rput(1.8214062,0.27890182){$C_1$}
\psline[linewidth=0.056199998cm](6.0,2.1039107)(6.0,-1.5210983)
\psline[linewidth=0.056199998cm](4.8,2.1039107)(4.8,-1.5210983)
\psline[linewidth=0.02cm](5.4,0.27890182)(5.2,-0.32109815)
\psline[linewidth=0.02cm](5.4,0.27890182)(5.6,-0.32109815)
\pscircle[linewidth=0.02,dimen=outer,fillstyle=solid](5.4,0.27890182){0.2915}
\usefont{T1}{ptm}{m}{n}
\rput(5.3814063,0.27890182){$C_i$}
\usefont{T1}{ptm}{m}{n}
\rput(4.7695313,-1.9210982){$X_i$}
\usefont{T1}{ptm}{m}{n}
\rput(1.8195313,-1.05210982){\color{red}$P_1$}
\usefont{T1}{ptm}{m}{n}
\rput(7.8195314,-1.1210982){\color{color303}$P_2$}
\end{pspicture} 
}
         }%
    \end{center}
    \vspace{-2mm}
    \caption{%
        The $(S,\varphi)$-components and crest separators described in Lemma~\ref{lem:SinglePS}.
    }
    \label{fig:pathMCT2}
  \end{figure}
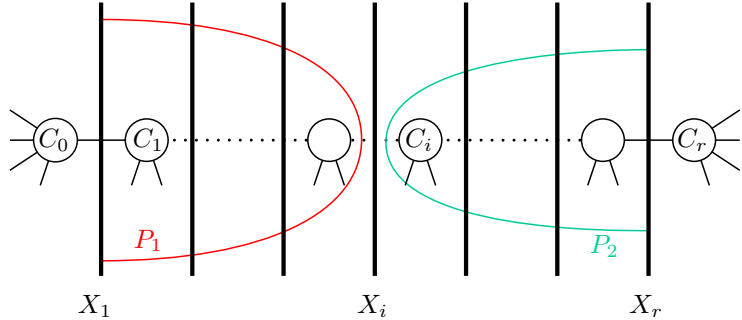%
Moreover, let $\DD_1$ be the $(X_1,\varphi)$-component containing
$C_1$, and let $\DD'_r$ be the $(X_r,\varphi)$-component containing
$C_{r-1}$. Then, for a nice $\DD_1$-pseudo shortcut $P_1$ of $X_1$ that
is completely contained in $\DD'_r$ and for a nice $\DD'_r$-pseudo shortcut
$P_r$ of $X_r$ that is completely contained in $\DD_1$, there exists
an $i\in\{2,\ldots,r-1\}$ such that $P_1$ is completely contained
in the $(X_i,\varphi)$-component containing $C_1$ and $P_r$ is
completely contained in the $(X_i,\varphi)$-component containing
$C_{r-1}$.
\end{lemma}

\begin{proof}
For $i\in\{2,\ldots,r-1\}$,
let $\DD_i$ be the $(X_i,\varphi)$-component containing $C_i$, and let
$\DD'_{i}$ be the $(X_{i},\varphi)$-component containing $C_{i-1}$.
Assume that the lemma does not hold, i.e., for at least one $i\in\{1,\ldots,r-1\}$,
$P_1$ must have a maximal $(\DD_i,X_i)$-subpath $P_1^*$ and $P_r$ a maximal
$(\DD'_{i+1},X_{i+1})$-subpath $P_r^*$. Let us assume w.l.o.g.\ that
we have chosen $i$ as large as possible.  
Since $P_1$ and 
$P_r$ are 
nice, $P_1^*$ and 
$P_r^*$ also
 define nice pseudo shortcuts 
and together with the crest separator 
paths between their endpoints enclose 
$H$ (Lemma~\ref{lem:enclosev}).
Hence, $P_1$ together with the crest separator path between its 
endpoints also encloses $H$ and, 
since we have chosen $i$ as large as 
possible, $P_1$ is completely contained in  
$D'_{i+1}$ whereas $P_r^*$ together with the crest separator path between its 
endpoints encloses
$H$ (Lemma~\ref{lem:enclosev}) and is completely contained in $D_1$ (see
also Fig.~\ref{fig:two_intersect}).
Hence $P_1$ and $P_r^*$ intersect.
As 
illustrated in Fig.~\ref{fig:two_intersect}.
one can interchange
subpaths of~$P_1$ with subpaths of~$P_r^*$ of the same length such
that afterwards the  inner graphs of the new composed cycles
do not intersect anymore except in some common vertices
of the old paths $P_1$ and $P_r^*$ and such that
the new composed cycles do not
enclose the crest $H$ anymore. In the case of the modified version
of $P_r^*$, this
 is a contradiction
to Lemma~\ref{lem:enclosev}.
\end{proof}

\begin{figure}[h]
\begin{center}
\scalebox{1}{
\scalebox{1} 
{
\begin{pspicture}(0.1,-0.8939063)(10.8079062,1.00939063)
\psset{xunit=1.3cm,yunit=0.8cm,runit=1cm}
\definecolor{color1079}{rgb}{0.0,0.8,0.6}
\definecolor{color1082}{rgb}{0.0,0.0,0.8}
\psbezier[linewidth=0.02,linecolor=color1079,linestyle=dashed,dash=0.16cm 0.16cm](7.900156,1.1736325)(7.5001564,1.1736325)(6.512156,0.99029917)(6.521875,0.17363244)(6.531594,-0.6430343)(7.5001564,-0.82636756)(7.900156,-0.82636756)
\psbezier[linewidth=0.035,linecolor=color1082,linestyle=dotted,dotsep=0.15cm](5.361875,-0.82636756)(5.781875,-0.82636756)(6.7956333,-0.612084)(6.759087,0.2038225)(6.7225413,1.019729)(5.748478,1.17114)(5.348694,1.1579975)
\psframe[linewidth=0.0020,linecolor=white,dimen=outer,fillstyle=solid](6.801875,0.6336324)(6.461875,-0.26636755)
\psbezier[linewidth=0.02,linecolor=color1079,linestyle=dashed,dash=0.16cm 0.16cm](3.1001563,1.1736325)(2.7001562,1.1736325)(1.7121563,0.99029917)(1.721875,0.17363244)(1.7315937,-0.6430343)(2.7001562,-0.82636756)(3.1001563,-0.82636756)
\usefont{T1}{ptm}{m}{n}
\rput(1.2373438,1.2820313){$P_1$}
\usefont{T1}{ptm}{m}{n}
\rput(2.4073438,1.2820313){$P_r^*$}
\usefont{T1}{ptm}{m}{n}
\rput(0.45546874,-1.3179687){$X_1$}
\usefont{T1}{ptm}{m}{n}
\rput(3.1073437,-1.3179687){$X_{i+1}$}
\psbezier[linewidth=0.035,linecolor=color1082,linestyle=dotted,dotsep=0.15cm](0.561875,-0.82636756)(0.981875,-0.82636756)(1.9956332,-0.612084)(1.9590873,0.2038225)(1.9225411,1.019729)(0.948478,1.17114)(0.54869395,1.1579975)
\usefont{T1}{ptm}{m}{n}
\rput(6.077344,1.2820313){$P'_1$}
\usefont{T1}{ptm}{m}{n}
\rput(7.1454687,1.2820313){$P'_r$}
\usefont{T1}{ptm}{m}{n}
\rput(5.255469,-1.3179687){$X_1$}
\usefont{T1}{ptm}{m}{n}
\rput(7.9073434,-1.3179687){$X_{i+1}$}
\usefont{T1}{ptm}{m}{n}
\rput(4.235469,0.18363245){\scalebox{1.8}{$\Rightarrow$}}
\psarc[linewidth=0.02,linecolor=color1079,linestyle=dashed,dash=0.16cm
0.16cm](6.02199875,0.17363244){0.97}{-27.5}{27.5}
\rput{180}(14.75,0.33857426){\psarc[linewidth=0.035,linecolor=color1082,linestyle=dotted,dotsep=0.15cm](7.499875,0.17363244){0.97}{-27.5}{27.5}}
\psline[linewidth=0.056199998cm](0.521875,1.3736324)(0.521875,-1.0263675)
\psline[linewidth=0.056199998cm](3.121875,1.3736324)(3.121875,-1.0263675)
\psline[linewidth=0.056199998cm](5.321875,1.3736324)(5.321875,-1.0263675)
\psline[linewidth=0.056199998cm](7.921875,1.3736324)(7.921875,-1.0263675)
\psdots[dotsize=0.12](1.831875,0.17363244)
\psdots[dotsize=0.12](6.643875,0.17363244)
\usefont{T1}{ptm}{m}{n}
\rput(2.1354687,0.16363245){$H$}
\usefont{T1}{ptm}{m}{n}
\rput(6.9354687,0.16363245){$H$}
\end{pspicture}
}}
\end{center}
\caption{The 
interchange 
of a subpath of~$P_1$ with a subpath of
$P_r^*$.}
\label{fig:two_intersect}
\end{figure}
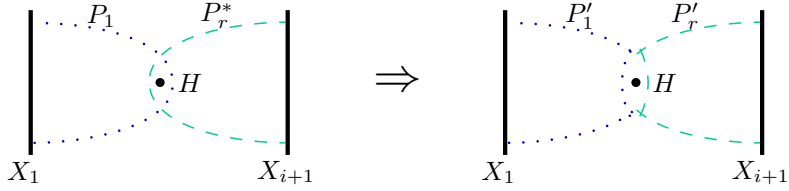%

Let $\mathcal{M}=(G,\varphi,c,\mathcal{H},\mathcal{S})$.
For a $d$-bounded $\mathcal{M}$-shortcut set 
$\mathcal{P}$, let
us define a {\em valid $d$-bounded $\mathcal{M}$-shortcut set}
to be a subset $\mathcal{P}'$ of $\mathcal{P}$ such that, for 
each $(\mathcal{S},\varphi)$-component $C$,
for which there is 
at least one crest separator $X\in\mathcal{S}$ 
with a top edge in 
$C$ that has a $\DD$-pseudo shortcut in $\mathcal{P}$ 
for the $(X,\varphi)$-component $\DD$ 
containing $C$, $\mathcal{P'}$ contains
at least one $\DD'$-pseudo shortcut 
for a  crest separator $X'\in\mathcal{S}$ 
with a top edge in 
$C$ %
where $\DD'$ is
the $(X',\varphi)$-component
containing $C$.
Roughly speaking, each $(\mathcal{S},\varphi)$-component $C$ with a
$d$-bounded pseudo
shortcut has a $d$-bounded pseudo shortcut in $\mathcal{P'}$.
If ``at least one'' in the above definition
is replaced by ``exactly one'',
the valid $d$-bounded $\mathcal{M}$-shortcut set is also 
called {\em non-overlapping}. 
We next show that such a shortcut set can be computed efficiently.

\begin{lemma}\label{lem:non-overlapping} Given a consistent $d$-bounded $\mathcal{M}$-shortcut
set, the consistence graph of $\mathcal{P}$ as well as, for each pseudo shortcut
$P\in\mathcal{P}$, its root component, a consistent non-overlapping
$d$-bounded $\mathcal{M}$-shortcut set 
can be constructed in $O(|\mathcal{H}|k^2)$ time.
\end{lemma}
\begin{proof} Let us call a vertex of a directed graph to be a {\em source}
if it has no incoming edge. If $F$ is the consistence graph of 
$\mathcal{P}$, we repeatedly apply the following step.
\begin{enumerate}%
\item[] Take a not-yet-considered source $P$ of $F$, and let $C$ be the root component of $P$.
    If there is another pseudo shortcut with root component $C$,
    then remove $P$ from $\mathcal{P}$ and $F$. \smallskip
\end{enumerate}%
\noindent The existence of $P'$ guarantees that the new set $\mathcal{P}$ is still
valid. %
Since we only remove
sources of $F$,
the new set is also consistent. Thus, apart from the running time, we only
have to show that, after having applied the above removal step as long as 
possible, we obtain a non-overlapping set of pseudo shortcuts. Therefore,
let us assume that after the removal steps there is still one $(S,\varphi)$-component $C$ for which there are two different crest separators $X'_1$ and $X'_2$ 
with top edges in $C$ such that, for the $(X'_1,\varphi)$-component $\DD'_1$ 
and the $(X'_2,\varphi)$-component $\DD'_2$ containing $C$, there exist
a $\DD'_1$-pseudo shortcut $P'_1$ of $X'_1$ and a $\DD'_2$-pseudo shortcut $P'_2$ 
of $X'_2$ in $\mathcal{P}$. Then neither $P'_1$ nor $P'_2$ can be a source of $F$
since, otherwise,
the removal step could be applied once more. Hence, let $P_1$ and $P_2$
be sources of $F$ for which there are paths from $P_1$ to $P'_1$ and from
$P_2$ to $P'_2$ in $F$. In particular, this means that $P'_1$ is a subpath
of $P_1$ and $P'_2$ a subpath of $P_2$ in $G$. Let $X_1$ and $X_2$ be the crest
separators such that $P_1$ and $P_2$ are pseudo shortcuts of $X_1$ and $X_2$,
respectively. Since the removal step
was  applied as long as possible, for the root component of $P_1$ and 
equivalently for that of $P_2$, there is exactly one crest separator $X$
with a pseudo shortcut in the $(X,\varphi)$-component containing the root
component of $P_1$, namely $X=X_1$ and equivalently---concerning the root component
of $P_2$---we have $X=X_2$. Hence $P_i$ ($i=1,2$) is completely contained in 
the $(X_{3-i},\varphi)$-component that contains $C$. %
Since $P_i$ %
with its subpath $P'_i$ %
 encloses 
the crest in $C$ (Lemma~\ref{lem:enclosev}),
there is no crest separator disconnecting $P_1$ and $P_2$; 
a contradiction to Lemma~\ref{lem:SinglePS}. 

Note that no source has to be considered a second time; if a source can
not be removed because of a missing second pseudo-shortcut in an
$(\mathcal{S},\varphi)$-component, then this is the case until the end of
the algorithm. Consequently,
the running time is linear in the number of pseudo shortcuts 
if, 
for each source of $F$, we can decide whether
it has to be removed in constant time. 
For that we store initially with each $(\mathcal{S},\varphi)$-component $C$
the number $n_C$ of pseudo shortcuts
with root component $C$.
If such a pseudo shortcut is a source in $F$ and
removed by the algorithm,
we decrease %
$n_C$
by one. 
The update caused by a removal of a %
pseudo shortcut
then can be done in
constant time. The decision whether a pseudo shortcut $P$ has to be removed
reduces to the question whether $n_C$ is greater than one for the root component
$C$ of $P$. This is correct because of the following: If there is another pseudo shortcut $P'$ that allows us to
remove $P$, then a subpath of $P'$ is a pseudo shortcut with root component
$C$, i.e., $n_C$ is indeed greater than one.
The initialization of all numbers stored with the components
can be done in $O(|\mathcal{P}|)$ time. So the whole running time is bounded
by $O(|\mathcal{P}|)=O(|\mathcal{H}|k^2)$. 
\end{proof}

\medskip
{\bf Minimal coast separators}.
The third type of coast separators that we use are
minimal
coast separators.
Such coast separators are used in
a special kind of~$(S,\varphi)$-components
defined now. 
An
$(\mathcal{S},\varphi)$-com\-po\-nent
$\CC$ is called
{\em pseudo shortcut free} if (1) 
$C$ is not enclosed by a crest separator in $\mathcal{S}$ with a top edge in $C$ 
and an interior lowpoint and %
(2)
for all crest
separators $X\in\mathcal{S}$
with a top edge
in $\CC$ and the
$(X,\varphi)$-component $\DD$ containing $\CC$,
there is no $\DD$-pseudo shortcut
of weighted length $\le k-1$ for~$X$.

Moreover, let us define a crest separator to be {\em pseudo shortcut free}
if it has neither a pseudo shortcut of weighted length $\le k-1$
nor an interior lowpoint.
Intuitively speaking, a pseudo shortcut is 
of interest if %
it is possibly
part of a coast
separator of weighted size at most~$k$.
This is the reason why we only consider pseudo shortcuts
of weighted length $\le k-1$.
As the next lemma shows,
pseudo shortcut free components are separated by
pseudo shortcut free crest separators.

\begin{lemma}\label{lcor:interactPath2}
Let $H',H''\in {\mathcal H}$
be crests of different
pseudo shortcut free $({\mathcal S},\varphi)$-components.
Let ${\mathcal S'} \subseteq {\mathcal S}$ be the set of pseudo shortcut
free crest separators.
Then there is a crest separator in ${\mathcal S'}$
strongly going between the crests.
In particular, a minimal coast separator for~$H'$ 
of weighted size $\le k$
can only enclose
those crests in $\mathcal H$ that are  part of the $({\mathcal S'},\varphi)$-component that contains $H'$.
\end{lemma}

\begin{figure}[b!]
    \vspace{3mm}
    \begin{center}
         \scalebox{1}{
\scalebox{1} 
{
\begin{pspicture}(0,-1.9697193)(8.947187,2.0578193)
\definecolor{color1600}{rgb}{0.0,0.8,0.6}
\psline[linewidth=0.02cm](0.901875,0.0062181647)(2.101875,0.0062181647)
\psline[linewidth=0.02cm](2.101875,0.0062181647)(3.301875,0.0062181647)
\psline[linewidth=0.04cm,linestyle=dotted,dotsep=0.10583334cm](3.301875,0.0062181647)(4.501875,0.0062181647)
\psline[linewidth=0.04cm,linestyle=dotted,dotsep=0.10583334cm](4.501875,0.0062181647)(6.701875,0.0062181647)
\psline[linewidth=0.02cm](6.701875,0.0062181647)(7.901875,0.0062181647)
\psline[linewidth=0.02cm](0.901875,0.0062181647)(0.301875,0.40621817)
\psline[linewidth=0.02cm](0.901875,0.0062181647)(0.301875,-0.39378184)
\psline[linewidth=0.02cm](0.901875,0.0062181647)(0.701875,-0.5937818)
\psline[linewidth=0.02cm](0.901875,0.0062181647)(0.301875,0.0062181647)
\psline[linewidth=0.02cm](2.101875,0.0062181647)(1.901875,-0.5937818)
\psline[linewidth=0.02cm](2.101875,0.0062181647)(2.301875,-0.5937818)
\psline[linewidth=0.02cm](3.301875,0.0062181647)(3.101875,-0.5937818)
\psline[linewidth=0.02cm](3.301875,0.0062181647)(3.501875,-0.5937818)
\psline[linewidth=0.02cm](4.501875,0.0062181647)(4.301875,-0.5937818)
\psline[linewidth=0.02cm](4.501875,0.0062181647)(4.701875,-0.5937818)
\psline[linewidth=0.02cm](7.901875,0.0062181647)(8.501875,0.0062181647)
\psline[linewidth=0.02cm](7.901875,0.0062181647)(8.501875,0.40621817)
\psline[linewidth=0.02cm](7.901875,0.0062181647)(8.501875,-0.39378184)
\psline[linewidth=0.02cm](7.901875,0.0062181647)(8.101875,-0.5937818)
\psline[linewidth=0.02cm](6.701875,0.0062181647)(6.901875,-0.5937818)
\psline[linewidth=0.02cm](6.701875,0.0062181647)(6.501875,-0.5937818)
\psbezier[linewidth=0.02,linecolor=red](1.501875,-1.5937818)(3.701875,-1.5937818)(5.24,-0.9419068)(5.24,-0.0019068355)(5.24,0.9380932)(3.701875,1.6062182)(1.501875,1.6062182)
\psbezier[linewidth=0.02,linecolor=red](2.701875,1.2062181)(3.601875,1.0062182)(4.973875,0.60621816)(4.973875,0.0062181647)(4.973875,-0.5937818)(3.601875,-0.99378186)(2.701875,-1.1937819)
\psbezier[linewidth=0.02,linecolor=red](3.901875,1.6062182)(4.801875,1.4062182)(5.489875,0.80621815)(5.489875,0.0062181647)(5.489875,-0.7937818)(4.801875,-1.3937818)(3.901875,-1.5937818)
\psbezier[linewidth=0.02,linecolor=color1600](7.301875,1.2062181)(6.901875,1.2062181)(6.173875,0.60621816)(6.173875,0.0062181647)(6.173875,-0.5937818)(6.901875,-1.1937819)(7.301875,-1.1937819)
\pscircle[linewidth=0.019199999,dimen=outer,fillstyle=solid](0.901875,0.0062181647){0.2975}
\usefont{T1}{ptm}{m}{n}
\rput(0.86546874,0.0062181647){$C_0$}
\pscircle[linewidth=0.02,dimen=outer,fillstyle=solid](2.101875,0.0062181647){0.2915}
\usefont{T1}{ptm}{m}{n}
\rput(2.0654688,0.0062181647){$C_1$}
\pscircle[linewidth=0.02,dimen=outer,fillstyle=solid](3.301875,0.0062181647){0.2915}
\usefont{T1}{ptm}{m}{n}
\rput(3.2654688,0.0062181647){$C_2$}
\pscircle[linewidth=0.02,dimen=outer,fillstyle=solid](4.501875,0.0062181647){0.2915}
\pscircle[linewidth=0.02,dimen=outer,fillstyle=solid](6.701875,0.0062181647){0.2915}
\pscircle[linewidth=0.02,dimen=outer,fillstyle=solid](7.901875,0.0062181647){0.2915}
\usefont{T1}{ptm}{m}{n}
\rput(4.504688,0.0062181647){\scalebox{.8}{$C_{i\mbox{-}1}$}}
\usefont{T1}{ptm}{m}{n}
\rput(7.865469,0.0062181647){$C_r$}
\usefont{T1}{ptm}{m}{n}
\rput(1.4654688,-2.1937819){$X_1$}
\usefont{T1}{ptm}{m}{n}
\rput(2.6654687,-2.1937819){$X_2$}
\usefont{T1}{ptm}{m}{n}
\rput(3.8654687,-2.1937819){$X_{i-1}$}
\usefont{T1}{ptm}{m}{n}
\rput(5.7654687,-2.1937819){$X_i$}
\usefont{T1}{ptm}{m}{n}
\rput(7.2654686,-2.1937819){$X_r$}
\psline[linewidth=0.056199998cm](1.501875,2.3297193)(1.501875,-1.7937819)
\psline[linewidth=0.056199998cm](2.701875,2.3297193)(2.701875,-1.7937819)
\psline[linewidth=0.056199998cm](3.901875,2.3297193)(3.901875,-1.7937819)
\psline[linewidth=0.056199998cm](5.801875,2.3297193)(5.801875,-1.7937819)
\psline[linewidth=0.056199998cm](7.301875,2.3297193)(7.301875,-1.7937819)
\end{pspicture} 
}
         }%
    \end{center}
    \vspace{2mm}
    \caption{%
     A path $\tilde{P}$ in the mountain connection tree $T$         
      of the mountain structure 
    $(G,\varphi,c,{\mathcal H},{\mathcal S})$        
    with the
    crest separators $X_1,\ldots,X_r$ between the
    $(\mathcal{S},\varphi)$-components     
    corresponding to the
    nodes of~$\tilde{P}$.                                            
    If in the figure a curve representing a pseudo 
    shortcut $P$ together with its crest-separator path $X^{\mathrm CP}$
    encloses a node
    representing an $(\mathcal{S},\varphi)$-component $C$, this should
    mean that at least some vertices of~$C$ are enclosed by the
    composed cycle of~$(X^{\mathrm CP},P)$. 
    }
    \label{fig:pathMCT}
  \end{figure}
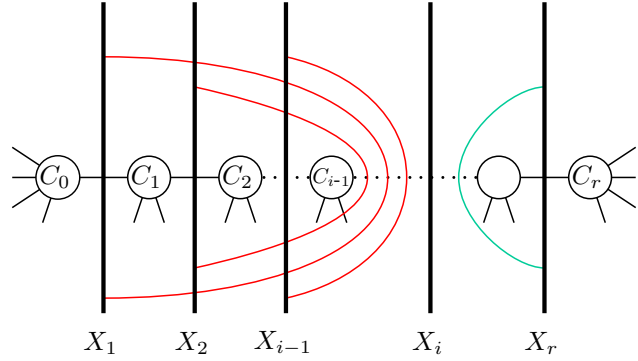

\begin{proof}
As illustrated in Fig.~\ref{fig:pathMCT}, let 
 $C_0,\ldots,C_r$ be the $({\mathcal S},\varphi)$-components
on the path in the mountain connection tree $T$ of
$(G,\varphi,c,{\mathcal H},\mathcal{S})$
from the 
$(\mathcal{S},\varphi)$-component~$C_0$
containing $H'$ to the
$(\mathcal{S},\varphi)$-component $C_r$
containing $H''$. For $i\in\{1,\ldots,r\}$, let $X_i \in \mathcal{S}$ be
 the crest separator with a
top edge part of both $C_{i-1}$ and $C_i$. 
First note that no crest separator $X\in\{X_1,\ldots,X_r\}$
can have an interior lowpoint.
Otherwise, such a crest separator
 would enclose either $C_0$ or $C_r$.
This would imply that either $X_1$ encloses
$C_0$ and has an interior lowpoint or
$X_r$ encloses $C_r$ and  has an interior lowpoint.
This contradicts
our choice of~$C_0$ and $C_r$ as
pseudo 
shortcut free $({\mathcal S},\varphi)$-components.

For $i\in \{1,\ldots,r\}$, let $D_i$ be the $(X_i,\varphi)$-component that contains
$C_r$. 
Take $X_i$ as the first 
crest separator
in $X_1,\ldots,X_r$ 
with no $D_i$-pseudo shortcut
of weighted length $\le k-1$.
$X_i$ exists
since $X_r$ has no $D_r$-pseudo shortcut
of weighted length $\le k-1$.
Then $X_i$ also has no pseudo shortcut
for the $(X_i,\varphi)$-component opposite to $D_i$. This follows
either from the fact that $C_0$ is
pseudo shortcut free if $i=1$ and
from 
Lemma~\ref{lem:SinglePS} otherwise. Hence $X_i$ is
pseudo shortcut free.
Finally note that a minimal coast separator of weighted size 
$\le k$
for
$H'$  
can only enclose
those crests in $\mathcal H$ that are
part of the $({\mathcal S'},\varphi)$-component that contains $H'$,
since, otherwise, it must cross a crest~separator~$X$ in ${\mathcal S'}$ and 
$X$ then has a pseudo shortcut of weighted length $\le k-1$
by Lemma~\ref{lem:StrongOptimal}.
\end{proof}

We next show that minimal coast separators can be computed
efficiently.

\begin{lemma}\label{lem:generalMaxFlowMinCut}
There is an algorithm that
computes a minimal coast separator $Y$ for a given crest $H$ in
a weighted, almost triangulated graph $(G,\varphi,c)$
 in $O(n w(Y))$ time where $n$ is the
 number of vertices of~$G$ and $w(Y)$ is the total weight of the vertices in
 $Y$.
\end{lemma}

\begin{proof}
In an unweighted graph $G'=(V',E')$, by network flow theory, one can 
construct a
separator $Y'$ of minimal size for
two 
connected sets $S$ and $T$ 
as follows:
Construct a graph $G^+=(V^+,E^+)$ obtained from the following
steps and sketched in Fig.\ref{fig:transormation}.
First, merge the vertices of~$S$ and $T$ to two single vertices 
$s$ and $t$,
respectively.
Second, replace each undirected edge $\{u,v\}$ by two directed edges,
one leading from $u$ to $v$ and the other from $v$ to $u$. Third, split
each {\em original} vertex $v$ of $G'$ into an {\em invertex} $v^{\mathrm{i}}$
and an {\em outvertex} $v^{\mathrm{o}}$ such that each directed edge originally ending in $v$ 
afterwards ends in $v^{\mathrm{i}}$ and each directed edge originally starting in $v$ 
afterwards starts in $v^{\mathrm{o}}$, and add an edge from $v^{\mathrm{i}}$
to~$v^{\mathrm{o}}$.

Afterwards, compute a maximal set ${\mathcal P}$ of edge-disjoint
paths from $s^{\mathrm{o}}$ to $t^{\mathrm{i}}$ in $G^+$. %
\begin{figure}[b!]
   \centering
     \scalebox{1.25}{\input{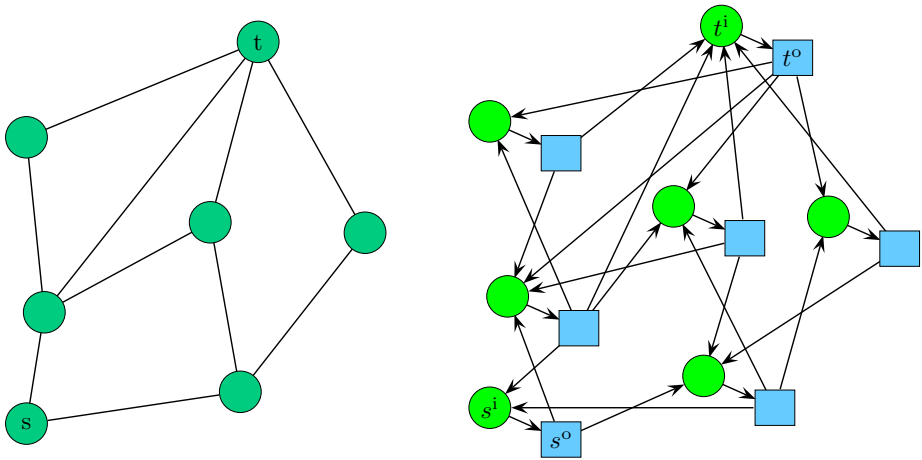}}%
      \caption{Vertex-disjoint path from $s$ to $t$
               in the left graph can be found
       by finding edge-disjoint path from $s^{\mathrm{o}}$ to $t^{\mathrm{i}}$ in the right
 graph. The squared vertices represent
           outvertices.}
      \label{fig:transormation}
\end{figure}
Let $E({\mathcal P})$ %
be the set of the
edges 
that are used by the paths in ${\mathcal P}$.
Then construct the so-called {\em residual graph} $G^R=(V^R,E^R)$ of
$G^+$ defined as the graph obtained from $G^+$ by replacing 
each edge $e\in E({\mathcal P})$ 
by an edge in reverse direction.
Define $U^+$ to be the set that consists, for each path ${P}\in {\mathcal P}$,
of the last vertex of ${P}$ that is reachable from
$s^{\mathrm{o}}$ in $G^R$. Note that
$U^+$ must be a set of invertices
because of the following Fact 1---consider
also Fig.~\ref{fig:minSep}: the only incoming edge of 
an outvertex $v^{\mathrm{o}}$ in $G^R$
belonging to a path ${P}\in {\mathcal P}$ is a reverse edge of ${P}$ 
that starts in an invertex appearing after $v^{\mathrm{o}}$ on ${P}$. By network flow theory it
is well known that the set $U$ obtained from $U^+$ by replacing
each %
vertex
in $U^+$ by its
original vertex is a separator of minimal size in $G'$. Indeed,
every
separator must contain at least one original vertex for 
every path in ${\mathcal P}$ 
and it is easy to see 
that after the removal of the vertices in $U^+$ from $G^+$, 
there
is no path from $s^{\mathrm{o}}$ to $t^{\mathrm{i}}$ in $G^+$ anymore, and that 
this implies 
that
$G'$ has no path from %
a vertex in $S$ to a vertex in $T$
after the removal of the 
vertices in $U$.

For each separator $W$ of minimal size for the two 
connected sets $S$ and $T$,
let us define $\mathrm{CC}(W)$ as the set of vertices 
that are in the same connected component as 
the vertices of $S$ in $G'[V\setminus W]$.
We now show that $U$ is an %
$S$-{\em minimal separator}, i.e.,
$\mathrm{CC}(U) \subseteq \mathrm{CC}(W)$ for all separators $W$ of
minimal size 
for %
$S$ and $T$. 
Hence, assume for a contradiction that this does not hold for such a
separator $W$.
Let $W^+$ be the vertex set obtained from $W$
after replacing each vertex $w\in W$ by the invertex $w^{\mathrm{i}}$.
Note that $W^+$ must contain %
exactly one
vertex on each path of~${\mathcal P}$ since $|{\mathcal P}|=|W|$.
By our choice of $W$, there must be at least one path
${P}'\in {\mathcal P}$ for which the invertex $u^{\mathrm{i}}$ of $U^+$ on ${P}'$ appears
after the invertex $w^{\mathrm{i}}$ in $W^+$ on ${P}'$. 
Let $\tilde{{P}}$ be a path in $G^R$ %
from $s^{\mathrm{o}}$ to 
$u^{\mathrm{i}}$,
and
let ${P}$ be the subpath of
$\tilde{{P}}$ 
ending in the first vertex $v_1$ of $\tilde{{P}}$ that
appears on a path in ${\mathcal P}$ after a vertex of $W^+$. (Possibly,
$v_1=u^{\mathrm{i}}$.)
See also Fig.~\ref{fig:minSep}.
\begin{figure}[b!]
   \centering
     \vspace{4mm}
     \input{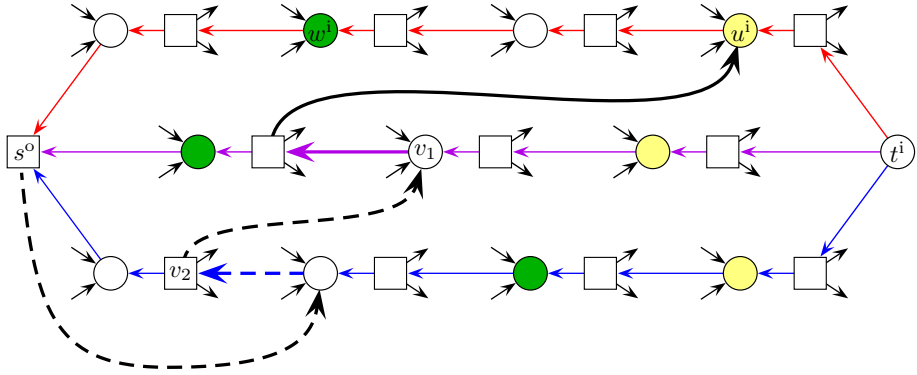}%
     \vspace{3mm}
      \caption{%
A part of the residual graph $G^R$. Each straight line
           between two vertices of the figure represents a reverse edge
           of a path in $\mathcal{P}$. 
The squared vertices represent outvertices.
           The light or dark colored vertices are the
           vertices of $U^+$ and $W^+$, respectively. The bold curves
           together with the bold reverse edges define the
           path $\tilde{P}$, where the dashed curves and edges represent the
           subpath
           $P$ of $\tilde{P}$.}
      \label{fig:minSep}
\end{figure}
Note that $v_1$ must be an invertex (Fact 1).

Let us now consider the
last vertex $v_2$ on ${P}$ appearing before $v_1$ on one of the
paths in ${\mathcal P}$. (Possibly, $v_2=s^{\mathrm{o}}$.) Vertex $v_2$ must be an outvertex 
because of a reason similar to Fact 1.
Let ${P}_1$
and ${P}_2$ be the paths of ${\mathcal P}$ containing $v_1$ and $v_2$, 
respectively.
Since
the only incoming edge of $v_2$ in $G^R$ is a reverse edge of ${P}_2$,
$v_2$
must appear
strictly before the only vertex of $W^+$ on ${P}_2$.
Consequently, the subpath of ${P}_2$ from $s^{\mathrm{o}}$ to $v_2$ and
the subpath of ${P}_1$ from $v_1$ to $t^{\mathrm{i}}$ are paths 
in $G^+$ that %
do not contain any vertices of $W^+$.
The same is true for the subpath of ${P}$ from $v_2$
to $v_1$ since the inner vertices of this subpath are not
part of any path in ${\mathcal P}$. Hence, there is a path
from $s^{\mathrm{o}}$ to $t^{\mathrm{i}}$ in $G^+$ not containing any vertex of
$W^+$.
By replacing the in- and outvertices of these paths by 
the original vertices, we obtain a path from a vertex in $S$ to a vertex in $T$ in
$G'$ not
containing any vertex of $W$. This is a contradiction to our
choice of $W$ as a separator and means that
$U$ is indeed $S$-minimal.

If we are given a weighted, almost triangulated graph $(G,\varphi,c)$
with $G=(V,E)$,
we can generalize the approach above to find a minimal coast separator for a
set $H$: First, replace each vertex $v$ of~$G$
by $c(v)$ copies. For each edge $\{u,v\}$ of~$G$, add an edge between each copy of
$u$ with each copy of~$v$. Let $G'=(V',E')$ be the unweighted graph obtained.
Define $S$ to be the vertex set consisting of the copies of the vertices of 
$H$, and let $T$ be the vertex set consisting of the copies of the vertices of
the coast.
Then construct
a separator $Y'$ strongly disconnecting $S$ and $T$ as above.
Let $Y$ be
the set of vertices of~$G$ whose copies are all in $Y'$. 
Then $Y$ is an $H$-minimal separator.
By
the fact that $G$ is almost triangulated,
$Y$ is a minimal coast separator for~$H$ in~$G$.

For an efficient implementation, there is no need to replace 
$(G,\varphi,c)$ by an unweighted graph. Instead, 
we use classical network flow techniques to construct
a maximum
number of paths
connecting a vertex of~$H$ and a vertex of~$H'$ such that each vertex $v \notin
H\cup H'$
is part
of at most
$c(v)$ paths 
and then read of a separator from these paths in a similar
way as described above. Since we can construct each path in $O(n)$ time, 
and since we have to construct $O(w(Y))$ 
paths,
we can
compute a minimal coast separator in $O(n w(Y))$ time.
\end{proof}

\def\ellk{%
k%
}

\medskip\medskip
{\bf Algorithm.} 
 Recall that 
$(G^+,\varphi^+,c^+)$
is $(2k+2c_{\mathrm{max}})$-weighted-outerpla\-nar.
We next present an algorithm to
construct a set $\mathcal{Y}$ of
coast separators of size $O(k)$ for~$\mathcal H$
in $(G,\varphi,c)$. As we will see, these sets 
 also define coast separators
for the crests 
of
height $2k+2c_{\mathrm{max}}$ in $(G^+,\varphi^+,c^+)$.
By Observation~\ref{obs:enclosed} and 
Lemma~\ref{lem:mdisjoint}, we will conclude that each crest 
in $\mathcal H$
is enclosed by
exactly one coast separator in $\mathcal{Y}$.
In Lemma~\ref{lem:mconnected}, we show that the 
$(S,\varphi)$-components of the crests enclosed by a coast
separator in $\mathcal{Y}$ induce a connected 
subtree of
$T$.
\medskip\medskip

\noindent {\bf Step~1:} Run the algorithm from Lemma~\ref{lem:ShortCutSet}
to compute a consistent $(k-1)$-bounded $\mathcal{M}$-shortcut set $\mathcal{P}$
for $\mathcal{M}=(G,\varphi,c,\mathcal{H},\mathcal{S})$, the consistence
graph of $\mathcal{P}$ and for each pseudo shortcut in $\mathcal{P}$
its root component. This subsequently allows us
to
determine the set $\mathcal{S'}$ of all 
pseudo shortcut free crest separators 
in $\mathcal{S}$
as well as all $(\mathcal{S}',\varphi)$-components. 

For each crest
$H \in {\mathcal H}$
contained in a
pseudo shortcut free 
$(\mathcal{S},\varphi)$-component contained in an
$(\mathcal{S}',\varphi)$-component $C$,  
compute a minimal coast
separator~$Y_C$ as follows:
Compute first 
the plane graph $(C',\psi)$ obtained from the
plane graph
$(C,\varphi|_C)$ as follows. For each subsequence of
vertices $v_0,v_1,\ldots,v_q$ appearing clockwise on the coast of 
$\varphi|_C$ where only the vertices $v_0$ and $v_q$ belong
to the coast of $\varphi$, add
an extra vertex $v^*$ into the outer face
of~$(C,\varphi|_C)$ and insert edges
from $v^*$ to each of the vertices $v_0,\ldots,v_{q}$
so that the outer face of~$\psi$ consists of the newly inserted
vertices together with the vertices of~$C$
that belong to the coast
of $\varphi$.
More precisely, in the special case where the coast of
$\varphi|_C$ contains no vertex of the coast of $\varphi$, add
a cycle consisting of 
three new vertices $v_1^*, v_2^*, v_3^*$ of weight 1 into the outer face of 
$\varphi|_C$ and afterwards connect these new vertices to
vertices of the coast of $\varphi|_C$ until the
graph obtained is almost triangulated.
Then, use the algorithm of Lemma~\ref{lem:generalMaxFlowMinCut}
to construct a 
minimal coast separator~$Y_C$ for~$H$ in $(C',\psi)$. The existence of $Y_C$
is shown in
Lemma~\ref{lem:coastSepExist}.
After having found $Y_C$, determine the inner graph of~$Y$ and all crests 
in ${\mathcal H}$ that are
part
of the inner graph.
Add $Y_C$
to the initial empty set $\mathcal{Y}$.
\medskip\medskip

\noindent {\bf Step~2:} Use Lemma~\ref{lem:non-overlapping} to construct a 
non-overlapping
consistent $(k-1)$-bounded $\mathcal{M}$-shortcut set $\mathcal{Z}\subseteq\mathcal{P}$.
Initialize $E'$ as an empty set of directed edges.
For each crest
$H \in {\mathcal H}$
that is 
not
enclosed by a coast separator constructed 
(possibly for some crest $H'\neq H$)
in Step~1, take $C$ as the $(\mathcal{S},\varphi)$-component
containing $H$ and test whether 
there is a crest separator $X\in\mathcal{S}$ with a top edge
in $C$ that has an interior lowpoint and encloses $C$.%
\begin{itemize}
\item[A:] If so, let $X_C$ be this crest separator and $Y_C$ be
the cycle induced by edges of the essential boundary of~$X_C$.

\item[B:]  Otherwise, define $X_C$ to be the crest separator with a top
edge in $C$ that has a nice
pseudo shortcut
$P$ in $\mathcal Z$.
$X_C$ exists since $C$ is not pseudo shortcut 
free and is unique because %
$\mathcal Z$ is non-overlapping.
Take $X^{\mathrm CP}_C$ as the crest-separator path of~$X_C$ such that $P$ is a pseudo
shortcut for~$X^{\mathrm CP}_C$ and define 
$Y_C$ as the composed cycle of~$(X^{\mathrm CP}_C,P)$.%
\end{itemize}
Let $C^*$ be 
the $(\mathcal{S},\varphi)$-component that contains the top edge
of~$X_C$, but is different to $C$. 
Finally add a directed  edge $(C,C^*)$
to $E'$ if the following conditions are all satisfied.%
\begin{description}
\item[Condition 1:] $C^*$ has a 
crest
$H \in {\mathcal H}$
that is 
not
enclosed by a coast separator constructed in Step~1.
\item[Condition 2:] Either $C^*$ is not enclosed by $X_C$ %
or $X_C$ has not
an interior lowpoint.
\item[Condition 3:] $\mathcal{Z}$ does not contain a $D^*$-pseudo 
shortcut for~$X_C$ %
where $D^*$ is the
$(X_C,\varphi)$-compo\-nent containing $C^*$.%
\end{description}
\noindent Intuitively, an edge $(C,C^*)$ indicates the possibility that the
coast separator constructed for the crest in $C^*$ also encloses the crest
in $C$ (compare also Lemma~\ref{lem:directedEdges}).
\medskip\medskip

\noindent {\bf Step~3:} Let $\tilde{F}$ be 
a graph consisting of that vertices of
the mountain connection tree
that represent
$(\mathcal{S},\varphi)$-components
that have a crest part of ${\mathcal H}$ such that the crest
is not enclosed by the coast separators constructed
in Step~1. The edges of $\tilde{F}$ are the edge set ${E'}$ constructed in Step~2. 
This is a directed forest of intrees since we assigned at most
one parent to each $(\mathcal{S},\varphi)$-component since 
the unweighted version of the edges are part of a tree, namely
the mountain connection tree, and since because of the Conditions 2 and 3,
for each pair $C_1$ and $C_2$ of $(\mathcal{S},\varphi)$-components,
there exists at most one of the edges $(C_1,C_2)$ and $(C_2,C_1)$.
See also Fig.~\ref{fig:alg_init2} and~\ref{fig:algPH}.
Then, for each tree $\tilde{T}$ of~$\tilde{F}$ run the following substeps:
Let
$H \in {\mathcal H}$
be the crest contained in the 
$(\mathcal{S},\varphi)$-component $C$
 being the root of 
$\tilde{T}$.
Put $Y_C$ into ${\mathcal{Y}}$,
determine the inner graph of~$Y_C$ and all crests in $\mathcal H$
that are part
of the inner graph, and
remove all nodes from $\tilde{T}$ that are
$(\mathcal{S},\varphi)$-components containing
those crests.
Then recursively proceed in the same way for each  
remaining intree being a subgraph of~$\tilde{T}$.%
\medskip\medskip

\noindent {\bf Step~4:} Return the set ${\mathcal{Y}}$.

  \begin{figure}[]
    \centering{
       {
\scalebox{1.1} 
{
\begin{pspicture}(0,-2.07175)(10.1511,2.57185)
\definecolor{greenblue}{rgb}{0.0,0.8,0.6}
\definecolor{color306b}{rgb}{1.0,0.8,0.6}
\definecolor{color296b}{rgb}{0.4,1.0,1.0}
\psline[linewidth=0.02cm](3.8757,1.45705)(2.6757,0.05704997)
\psline[linewidth=0.02cm](5.8757,1.45705)(3.8757,1.45705)
\psline[linewidth=0.02cm](2.6757,0.05704997)(1.9287,-1.54295)
\psline[linewidth=0.02cm](2.6757,0.05704997)(3.4167,-1.54295)
\psline[linewidth=0.02cm](3.8757,1.45705)(4.8757,0.05704997)
\psline[linewidth=0.02cm](7.8757,1.45705)(6.6757,0.05704997)
\psline[linewidth=0.02cm](7.8757,1.45705)(9.0757,0.05704997)
\psline[linewidth=0.02cm](6.6757,0.05704997)(5.9287,-1.54295)
\psline[linewidth=0.02cm](9.0757,0.05704997)(9.0757,-1.54295)
\psline[linewidth=0.16cm,linecolor=red](3.0457,1.00305)(3.4927,0.46104997)
\psline[linewidth=0.16cm,linecolor=red](7.0117,1.07905)(7.5287,0.44904998)
\psline[linewidth=0.06cm](2.5947,-0.95295006)(2.0067,-0.55195004)
\psline[linewidth=0.06cm](2.7637,-0.96395004)(3.3287,-0.55695003)
\psline[linewidth=0.06cm](5.9557,-0.52295005)(6.5887,-0.94095004)
\psline[linewidth=0.06cm](4.8757,1.05705)(4.8757,1.85705)
\psline[linewidth=0.16cm,linecolor=red](4.0857,0.47104996)(4.6397,1.0070499)
\psline[linewidth=0.06cm](2.8757,1.05705)(2.8757,1.85705)
\psline[linewidth=0.06cm](4.4007,-0.65495)(5.1447,-0.78295004)
\psline[linewidth=0.06cm](8.7347,1.0530499)(8.2487,0.44804996)
\psline[linewidth=0.06cm](8.7177,-0.74895)(9.4297,-0.74395)
\psline[linewidth=0.0207cm](3.8757,1.45705)(1.8757,1.45705)
\psline[linewidth=0.0207cm](6.6757,0.05704997)(4.8757,0.05704997)
\psline[linewidth=0.0207cm](6.6757,0.05704997)(7.4757,-1.54295)
\psline[linewidth=0.0207cm](4.8757,0.05704997)(4.6757,-1.54295)
\psline[linewidth=0.06cm](6.7597,-0.93695)(7.3827,-0.47095004)
\psline[linewidth=0.06cm](5.7697,-0.35995004)(5.7697,0.50504994)
\psbezier[linewidth=0.0207,linecolor=greenblue](2.8727,1.75105)(3.7594,1.82095)(4.3394,2.2209501)(4.3394,1.46095)(4.3394,0.70095)(3.7594,1.10095)(2.8794,1.18095)
\psbezier[linewidth=0.0216,linecolor=greenblue](9.1247,-2.00795)(8.5867,-2.26095)(8.2527,-1.29595)(8.8897,-1.0539501)
\psbezier[linewidth=0.0218,linecolor=greenblue](9.1247,-2.00795)(9.8247,-1.8959501)(9.5647,-0.78195006)(8.8897,-1.0539501)
\psbezier[linewidth=0.0218,linecolor=greenblue](2.8794,1.18095)(2.2394,1.24095)(1.4794,0.58095)(1.4794,1.42095)(1.4794,2.26095)(2.2594,1.70095)(2.9194,1.7609501)
\psbezier[linewidth=0.0218,linecolor=greenblue](4.5207,-0.66495)(4.6317,-0.48495004)(4.2677,0.08304997)(4.4597,0.32904997)(4.6517,0.57505)(5.1794,0.44095004)(5.7594,0.40095004)
\psbezier[linewidth=0.0218,linecolor=greenblue](5.7907,0.40004995)(6.2794,0.42095003)(6.7747,0.90405)(7.0594,0.28095004)(7.3441,-0.3421499)(6.4194,-0.47904998)(6.4594,-0.85905)
\psbezier[linewidth=0.0218,linecolor=greenblue](4.5207,-0.66495)(4.2137,-1.54995)(4.2387,-1.79995)(4.5817,-1.90895)(4.9247,-2.01795)(5.1394,-1.69905)(5.0394,-0.77905)
\psbezier[linewidth=0.0218,linecolor=greenblue](6.4697,-0.84995)(6.3677,-0.98895)(6.8637,-1.8449501)(6.1347,-1.9619501)(5.4057,-2.07895)(5.3994,-1.15905)(5.9994,-0.59905)
\psbezier[linewidth=0.0218,linecolor=greenblue](5.0467,-0.76395005)(5.0437,-0.50995004)(5.4987,-0.19895004)(5.7367,-0.16395003)(5.9747,-0.12895003)(6.1517,-0.39195004)(5.9997,-0.62095004)
\psline[linewidth=0.02cm](2.6757,0.05704997)(1.0757,0.05204997)
\psline[linewidth=0.02cm](1.0757,0.05204997)(0.2757,1.45205)
\psline[linewidth=0.02cm](1.0757,0.05204997)(0.2757,-1.54795)
\psline[linewidth=0.06cm](1.8757,0.45204997)(1.8757,-0.34795004)
\psline[linewidth=0.02cm](7.8757,1.45705)(9.8757,1.45205)
\psline[linewidth=0.06cm](8.8757,1.85205)(8.8757,1.05205)
\psline[linewidth=0.06cm](0.3594,-0.55284995)(0.9394,-1.0128499)
\psline[linewidth=0.06cm](0.9394,1.00715)(0.3594,0.52715003)
\pscircle[linewidth=0.019199999,dimen=outer,fillstyle=solid,fillcolor=color296b](5.8757,1.45705){0.2782}
\psframe[linewidth=0.02,dimen=outer,fillstyle=solid,fillcolor=green](4.1417,1.72305)(3.6097,1.1910499)
\pscircle[linewidth=0.02,dimen=outer,fillstyle=solid,fillcolor=blue](2.6757,0.05704997){0.2782}
\pscircle[linewidth=0.0225,dimen=outer,fillstyle=solid,fillcolor=blue](1.9287,-1.54295){0.2782}
\pscircle[linewidth=0.0225,dimen=outer,fillstyle=solid,fillcolor=blue](3.4167,-1.54295){0.2782}
\psframe[linewidth=0.0225,dimen=outer,fillstyle=solid,fillcolor=green](5.1417,0.32304996)(4.6097,-0.20895003)
\pscircle[linewidth=0.0225,dimen=outer,fillstyle=solid](6.6757,0.05704997){0.2782}
\pscircle[linewidth=0.0225,dimen=outer,fillstyle=solid](5.9287,-1.54295){0.2782}
\psframe[linewidth=0.0225,dimen=outer,fillstyle=solid,fillcolor=green](9.3417,-1.27695)(8.8097,-1.8089501)
\pscircle[linewidth=0.0185,dimen=outer,fillstyle=solid](1.8757,1.45705){0.2782}
\pscircle[linewidth=0.0185,dimen=outer,fillstyle=solid,fillcolor=color306b](7.4757,-1.54295){0.2782}
\pscircle[linewidth=0.0185,dimen=outer,fillstyle=solid](4.6757,-1.54295){0.2782}
\pscircle[linewidth=0.020299999,dimen=outer,fillstyle=solid,fillcolor=blue](1.0757,0.05204997){0.2757}
\pscircle[linewidth=0.020299999,dimen=outer,fillstyle=solid,fillcolor=blue](0.2757,1.45205){0.2757}
\pscircle[linewidth=0.020299999,dimen=outer,fillstyle=solid,fillcolor=blue](0.2757,-1.54795){0.2757}
\pscircle[linewidth=0.0225,dimen=outer,fillstyle=solid,fillcolor=yellow](7.8754,1.44895){0.2782}
\pscircle[linewidth=0.0225,dimen=outer,fillstyle=solid,fillcolor=yellow](9.0754,0.04895003){0.2782}
\pscircle[linewidth=0.020299999,dimen=outer,fillstyle=solid,fillcolor=yellow](9.8754,1.44395){0.2757}
\end{pspicture} 
}}%
    }
    \caption{
         A mountain connection tree
         $T$ of the mountain structure $(G,\varphi,c,{\mathcal H},\mathcal{S})$
         and the
         crest separators in $\mathcal{S}$; the latter
         denoted by straight lines
         crossing the edges of the tree. 
         Thick straight lines denote crest separators that are pseudo shortcut free. 
         Squared
         vertices represent
         pseudo shortcut free components.
         The curves denote coast separators
         constructed for the crests in $\mathcal H$ that are in the pseudo shortcut free
         $(S,\varphi)$-components.
         If such a curve in the figure encloses some nodes representing
         an $(S,\varphi)$-component $C$, this should mean that the
         corresponding
         coast separator encloses the
         crest in $\mathcal H$ contained in $C$. 
         The non-white round vertices are the nodes of the forest $\tilde{F}$.         
         }
    \label{fig:alg_init2}
\vspace{8mm}
     \centering{
\scalebox{1.2} 
{
\begin{pspicture}(0,-2.615025)(8.52995,2.6149749)
\definecolor{brightorange}{rgb}{1.0,0.7,0.0}
\definecolor{brightblue}{rgb}{0.7,0.7,1.0}
\definecolor{color3142}{rgb}{0.0,0.0,0.6}
\psline[linewidth=0.02cm]{<-}(2.4263,0.8463)(1.5353,-0.1932)
%
\psline[linewidth=0.02cm]{<-}(4.3495,1.5642)(2.8043,1.1006)
\psline[linewidth=0.02cm]{<-}(4.8283,1.5642)(6.3735,1.1006)
\psline[linewidth=0.02cm]{<-}(1.2832,-0.5904)(0.7370,-1.7602)
\psline[linewidth=0.02cm]{<-}(1.4939,-0.5908)(2.0354,-1.7599)
\psline[linewidth=0.02cm]{<-}(2.7342,0.8326)(3.4582,-0.1810)
\psline[linewidth=0.02cm]{<-}(4.5889,1.3861)(4.5889,0.5079)
\psline[linewidth=0.02cm]{<-}(6.4263,0.8463)(5.5353,-0.1932)
\psline[linewidth=0.02cm]{<-}(6.5889,0.7860249)(6.5889,-0.6411)
\psline[linewidth=0.02cm]{<-}(6.7515,0.8463)(7.6425,-0.1932)
\psline[linewidth=0.02cm]{<-}(5.2832,-0.5904)(4.7370,-1.7602)
\psline[linewidth=0.02cm]{<-}(7.7889,-0.6139)(7.7889,-1.7391)
%
%
\psbezier[linewidth=0.02,linecolor=brightblue](0.815875,-1.028975)(0.282875,-1.460975)(0.0,-2.165025)(0.417875,-2.4119751)(0.83575,-2.6589253)(1.213875,-2.064975)(1.211875,-1.304975)
\psbezier[linewidth=0.02,linecolor=brightblue](1.815875,0.50202495)(0.54,-0.14502494)(0.695875,-0.41097507)(1.26,-0.94502497)(1.824125,-1.4790748)(1.6,-2.245025)(1.838875,-2.468975)
\psbezier[linewidth=0.02,linecolor=color3142](1.629875,-1.271975)(1.78,-1.7450249)(1.72,-2.605025)(2.26,-2.3450248)(2.8,-2.0850248)(2.4,-1.685025)(1.894875,-1.0789751)
\psbezier[linewidth=0.02,linecolor=brightblue](2.143875,0.11802494)(1.386875,-0.88697505)(2.307875,-1.1919751)(2.651875,-1.886975)
\psbezier[linewidth=0.02,linecolor=brightblue](1.838875,-2.468975)(2.14,-2.705025)(2.82,-2.3250248)(2.651875,-1.886975)
\psbezier[linewidth=0.02,linecolor=color3142](3.510875,1.6380249)(2.7,1.6349751)(2.129875,1.693025)(2.08,0.91497505)(2.030125,0.13692518)(1.36,0.41497505)(1.013875,-0.25997508)
\psbezier[linewidth=0.02,linecolor=color3142](1.013875,-0.25997508)(0.945875,-0.64997506)(1.403875,-0.93297505)(1.640875,-0.67297506)
\psbezier[linewidth=0.02,linecolor=color3142](1.640875,-0.67297506)(2.199875,0.23702493)(1.409875,0.03302494)(3.640875,1.016025)
\psbezier[linewidth=0.02,linecolor=brightblue](2.872875,0.10702494)(3.250875,-0.7089751)(3.56,-1.085025)(3.88,-0.70502496)(4.2,-0.32502493)(3.925875,0.008024939)(3.273875,0.49702495)
\psbezier[linewidth=0.02,linecolor=color3142](4.329875,0.94302493)(4.030875,0.07002494)(4.04,-0.18502493)(4.54,-0.18502493)(5.04,-0.18502493)(5.16,0.03497506)(4.854875,0.93702495)
\psbezier[linewidth=0.0199,linecolor=brightblue](5.549875,1.124025)(6.371875,1.124025)(5.68,0.6949751)(5.18,0.13497506)(4.68,-0.42502493)(4.993875,-0.6899751)(4.611875,-1.169975)
\psbezier[linewidth=0.0199,linecolor=brightblue](5.651875,1.575025)(6.96,1.434975)(7.1,1.554975)(6.92,0.53497505)(6.74,-0.48502493)(7.26,-0.90502495)(6.9,-1.305025)
\psbezier[linewidth=0.0199,linecolor=color3142](6.419875,-0.25497505)(6.08,-0.9850249)(6.26,-1.305025)(6.6,-1.305025)(6.94,-1.305025)(7.06,-1.0050249)(6.803875,-0.25997508)
\psbezier[linewidth=0.0199,linecolor=brightblue](6.9,-1.305025)(6.4,-1.7650249)(6.0,-1.1050249)(6.2,-0.60502493)(6.4,-0.10502494)(6.58,0.83497506)(5.984875,-0.23197506)
\psbezier[linewidth=0.0199,linecolor=color3142](5.78,0.55497503)(5.2,0.09497506)(4.7633305,-0.44310445)(5.16,-0.78502494)(5.5566697,-1.1269454)(5.9,-0.12502494)(6.1,0.21497506)
\psbezier[linewidth=0.0199,linecolor=brightblue](5.984875,-0.23197506)(5.781875,-0.66397506)(5.372875,-1.0759751)(5.228875,-1.376975)(5.084875,-1.677975)(5.26,-2.105025)(4.849875,-2.474975)
\psbezier[linewidth=0.0199,linecolor=brightblue](4.849875,-2.474975)(4.66,-2.625025)(4.439875,-2.5529752)(4.24,-2.3050249)(4.040125,-2.0570748)(4.28,-1.565025)(4.611875,-1.169975)
\psbezier[linewidth=0.0199,linecolor=brightorange](2.583875,1.4340249)(4.18,1.194975)(3.996875,2.705025)(5.662875,1.6710249)
\psbezier[linewidth=0.0199,linecolor=brightorange](2.583875,1.4340249)(2.44,1.454975)(2.226875,1.254025)(2.219875,1.0480249)(2.212875,0.8420249)(2.36,0.5749751)(2.555875,0.60902494)
\psbezier[linewidth=0.0199,linecolor=brightorange](2.555875,0.60902494)(2.9,0.6949751)(2.969875,0.98002493)(3.444875,1.103025)(3.919875,1.226025)(4.22,1.3549751)(4.453875,1.101025)
\psbezier[linewidth=0.0199,linecolor=brightorange](4.453875,1.101025)(4.54,0.9349751)(4.36,0.7549751)(4.304875,0.5920249)(4.24975,0.42907482)(4.156875,0.18402494)(4.3,0.014975061)(4.443125,-0.15407482)(5.272875,-0.21497506)(4.803875,0.6940249)
\psbezier[linewidth=0.0199,linecolor=brightorange](5.521875,1.0330249)(4.996875,1.220025)(4.637875,1.050025)(4.803875,0.6940249)
\psbezier[linewidth=0.0199,linecolor=color3142](4.832875,-1.0459751)(4.6,-1.4850249)(4.5,-1.425025)(4.332875,-1.753975)(4.16575,-2.082925)(4.38,-2.3450248)(4.48,-2.365025)
\psbezier[linewidth=0.0199,linecolor=color3142](4.48,-2.365025)(4.68,-2.425025)(5.04,-2.3050249)(5.02,-1.965025)(5.0,-1.6250249)(5.072875,-1.482975)(5.142875,-1.2659751)
\psbezier[linewidth=0.0199,linecolor=brightblue](7.035875,0.12402494)(7.38,-0.64502496)(7.72,-1.165025)(8.12,-0.6850249)(8.52,-0.20502494)(7.94,0.17497507)(7.351875,0.50202495)
\psbezier[linewidth=0.0199,linecolor=brightblue](7.526875,-1.1639751)(7.314875,-1.9689751)(7.3,-2.405025)(7.8,-2.405025)(8.3,-2.405025)(8.289875,-1.8959751)(8.040875,-1.1639751)
\pscircle[linewidth=0.019199999,dimen=outer,fillstyle=solid,fillcolor=red](4.588875,1.636025){0.2499}
\usefont{T1}{ptm}{m}{n}
\rput(4.588875,1.636025){\scalebox{0.8333}{$R$}}
\pscircle[linewidth=0.02,dimen=outer,fillstyle=solid](2.588875,1.0360249){0.2499}
\pscircle[linewidth=0.02,dimen=outer,fillstyle=solid](1.388875,-0.36397505){0.2499}
\usefont{T1}{ptm}{m}{n}
\rput(1.388875,-0.36397505){\scalebox{0.8333}{$C_2$}}
\pscircle[linewidth=0.0225,dimen=outer,fillstyle=solid](6.588875,1.0360249){0.2499}
\pscircle[linewidth=0.0225,dimen=outer,fillstyle=solid](0.641875,-1.9639751){0.2499}
\usefont{T1}{ptm}{m}{n}
\rput(0.641875,-1.9639751){\scalebox{0.8333}{$C_3$}}
\pscircle[linewidth=0.0225,dimen=outer,fillstyle=solid](2.129875,-1.9639751){0.2499}
\pscircle[linewidth=0.0225,dimen=outer,fillstyle=solid](3.588875,-0.36397505){0.2499}
\usefont{T1}{ptm}{m}{n}
\rput(3.588875,-0.36397505){\scalebox{0.8333}{$C_5$}}
\pscircle[linewidth=0.0225,dimen=outer,fillstyle=solid](4.588875,0.28302494){0.2499}
\usefont{T1}{ptm}{m}{n}
\rput(4.588875,0.28302494){\scalebox{0.8333}{$C_6$}}
\pscircle[linewidth=0.0225,dimen=outer,fillstyle=solid](5.388875,-0.36397505){0.2499}
\usefont{T1}{ptm}{m}{n}
\rput(5.388875,-0.36397505){\scalebox{0.8333}{$C_8$}}
\pscircle[linewidth=0.0225,dimen=outer,fillstyle=solid](6.588875,-0.8659751){0.2499}
\pscircle[linewidth=0.0225,dimen=outer,fillstyle=solid](7.788875,-0.36397505){0.2499}
\pscircle[linewidth=0.0225,dimen=outer,fillstyle=solid](4.641875,-1.9639751){0.2499}
\pscircle[linewidth=0.0225,dimen=outer,fillstyle=solid](7.788875,-1.9639751){0.2499}
\usefont{T1}{ptm}{m}{n}
\rput(2.129875,-1.9639751){\scalebox{0.8333}{$C_4$}}
\usefont{T1}{ptm}{m}{n}
\rput(4.641875,-1.9639751){\scalebox{0.8333}{$C_9$}}
\usefont{T1}{ptm}{m}{n}
\rput(6.588875,-0.8659751){\scalebox{0.8333}{$C_{0}$}}
\usefont{T1}{ptm}{m}{n}
\rput(7.788875,-1.983975){\scalebox{0.8333}{$C_{B}$}}
\usefont{T1}{ptm}{m}{n}
\rput(7.788875,-0.36397505){\scalebox{0.8333}{$C_{A}$}}
\psline[linewidth=0.06cm](3.657875,0.93702495)(3.493875,1.7340249)
\psline[linewidth=0.06cm](1.758875,0.58202493)(2.205875,0.04102494)
\psline[linewidth=0.06cm](2.798875,0.050024938)(3.352875,0.5870249)
\psline[linewidth=0.06cm](4.244875,0.94302493)(4.922875,0.94302493)
\psline[linewidth=0.06cm](5.510875,0.93102497)(5.685875,1.751025)
\psline[linewidth=0.06cm](5.671875,0.71102494)(6.274875,-0.016975062)
\psline[linewidth=0.06cm](1.476875,-1.383975)(2.041875,-0.9779751)
\psline[linewidth=0.06cm](4.668875,-0.94397503)(5.301875,-1.3619751)
\psline[linewidth=0.06cm](6.255875,-0.25497505)(6.961875,-0.25497505)
\psline[linewidth=0.06cm](7.447875,0.63202494)(6.961875,0.02802494)
\psline[linewidth=0.06cm](7.430875,-1.169975)(8.142875,-1.1639751)
\psline[linewidth=0.06cm](1.307875,-1.3729751)(0.719875,-0.9719751)
\usefont{T1}{ptm}{m}{n}
\rput(1.2735938,1.3849751){\scalebox{0.8333}{$T^*:$}}
\usefont{T1}{ptm}{m}{n}
\rput(6.588875,1.0360249){\scalebox{0.8333}{$C_7$}}
\usefont{T1}{ptm}{m}{n}
\rput(2.588875,1.0360249){\scalebox{0.8333}{$C_1$}}
\end{pspicture} 
}%
    }
    \caption{%
         A bold straight line represents a crest separator
         and a curve together with a straight line represents
         a cycle: either a composed cycle or
         an induced cycle (by the edges of the essential
          boundary of a crest separator).
          If
         such a cycle
         encloses a node being
         an
         $(\mathcal{S},\varphi)$-component $C$,
         this should mean that the cycle  encloses at least
         the crest in $\mathcal H$ that is contained in $C$. 
         Running the algorithm from above, 
          the cycles %
         $Y_R, Y_{C_2}, Y_{C_3}, Y_{C_5}, Y_{C_7}, Y_{C_A},$ and  $Y_{C_B}$
         corresponding to the curves starting in the 
         $(\mathcal{S},\varphi)$-components $R$, $C_2$, $C_3$, $C_5$,
         $C_7$, $C_A$, and $C_B$, respectively,
         are added to ${\mathcal{Y}}$ in Step~3. 
           Afterwards,
          each node of~$\tilde{T}$ is enclosed by
          a cycle in ${\mathcal{Y}}$.
             }
    \label{fig:algPH}
  \end{figure}%

\medskip\medskip

{\bf Analysis of the Algorithm.}  We now prove properties 
of the coast separators constructed by the algorithm above.

\begin{observation}\label{obs:enclosed}
Each crest
in ${\mathcal H}$
is enclosed by a coast separator
of~${\mathcal{Y}}$.
\end{observation}

\begin{lemma}\label{lem:coastSepExist}
For all crests
$H \in {\mathcal H}$ that are
considered in Step~1, there exists a minimal coast
separator in the plane graph $(C',\psi)$ constructed for~$H$.
\end{lemma}

\begin{proof}
Let $C$ be the $(\mathcal{S}',\varphi)$-component that contains $H$.
Since the vertices of the coast have upper height at most
$c_{\mathrm{max}}$ and since $H$ has lower height at least
$k+c_{\mathrm{max}}+1$, 
by Theorem~\ref{the:Sep}, there is
a coast separator for~$H$ in $G$ of weighted length at most~$k$.
Thus, 
there is also such a coast separator $Y$ that is minimal. 
Since no crest separator $X\in\mathcal{S'}$ encloses
$C$, the $(X,\varphi)$-component $\DD$ containing $C$ contains
at least one vertex of the coast of $G$. Hence 
$Y$ and $\DD$ are not
 vertex disjoint. Moreover, by Lemma~\ref{lem:StrongOptimal}, $Y$ can not cross 
$X$. Since this holds for all choices of $X$ in $\mathcal{S}'$,
$Y$ is completely contained in
 $C$. However, $Y$
may contain vertices of the boundary of~$C$ 
(for some appropriate definition of the boundary),
which is the reason
for running the algorithm of Lemma~\ref{lem:generalMaxFlowMinCut} on
the plane graph
$(C',\psi)$. 
\end{proof}

\begin{lemma}\label{lem:mconnected} For a coast separator $Y\in\mathcal{Y}$,
let $\mathcal C$ be the $(S,\varphi)$-components with the crests in
${\mathcal H}$ that are
enclosed by $Y$. Then, $\mathcal C$ 
induces a connected subtree of the mountain connection tree of
$(G,\varphi,c,{\mathcal H},\mathcal{S})$.
\end{lemma}

\begin{proof}
The lemma clearly holds if 
 $Y$ is the essential boundary of a crest separator with an interior
lowpoint. So from now on, we only consider the remaining kinds of coast
 separators.

Let $Y$ be a minimum coast separator constructed for a crest $H$ in Step
1.
By Lemma~\ref{lem:StrongOptimal}, whenever
$Y$
crosses a crest
separator $X$, the part $P'$ of~$Y$ contained in the
$(X,\varphi)$-component not containing $H$ is a pseudo shortcut for
a 
crest-separator path of $X$
contained in the inner graph of~$Y$.

Analogously, let $P$ be a nice pseudo shortcut taken for the construction of 
a composed cycle $Y$ as a coast separator
for a crest $H'$ in Step~2.
Since $P$ is nice,  whenever $P$ crosses a crest
separator $X$, the part $P'$ of~$Y$ contained in the  
$(X,\varphi)$-component not containing $H'$ is a pseudo shortcut.

The lemma now follows from Lemma~\ref{lem:enclosev}
and the fact that each coast separator
enclosing two crests in $\mathcal H$ contained in different $(\mathcal{S},\varphi)$-components
$C_1$ and $C_2$ must cross all crest separators separating two consecutive
$(\mathcal{S},\varphi)$-components on the path from~$C_1$ to $C_2$ in
the mountain connection tree.
\end{proof}

We next want to show that no crest $H \in \mathcal H$ is enclosed by more than one coast
separator in~$\mathcal Y$. We start with two auxiliary lemmas.

\begin{lemma}\label{lem:interact_eins_zwei}
Let $C^*$ be an $(\mathcal{S},\varphi)$-component with a crest
$H^* \in \mathcal H$. If $Y_{C^*}$ is added to $\mathcal Y$
in Step~3 and if $Y_{C^*}$ encloses a crest $H'\in \mathcal H$ in an
$(\mathcal{S},\varphi)$-component $C'$,
then $H'$ is neither part of a pseudo shortcut free
$(\mathcal{S},\varphi)$-component nor
enclosed by a coast separator constructed for a crest
in Step~1.
\end{lemma}

\begin{proof}
The fact that $Y_{C^*}$ encloses $H'$ implies that $H'$ can not be part of a pseudo shortcut free
$(\mathcal{S},\varphi)$-component.
Assume now that a coast separator $Y_C$ constructed for a crest $H\in \mathcal H$ in a
pseudo shortcut free component ${C}$ encloses $H'$ and assume w.l.o.g.~that
$H'$ is chosen as a crest enclosed by $Y_{C^*}$ with this property such
that the distance between the $(\mathcal{S},\varphi)$-components $C'$ and $C^*$ is as small as possible. 
Then, $H^*\neq H'$ since, otherwise, $Y_{C^*}$ would not be added to $\mathcal Y$
in Step~3. 
Moreover, $Y_{C^*}$ can not be the essential boundary of a crest separator with an interior lowpoint. 
Otherwise, 
for the subtree $T'$ of~$T$ with
root $C^*$ that contains $C'$, all its 
$(\mathcal{S},\varphi)$-components are enclosed by a crest
 separator with an interior lowpoint.
Since $Y_C$ enclosing $H'$ but not
$H^*$ implies that $C$ is in $T'$, %
we then obtain
a contradiction to the fact that
$C$ is pseudo shortcut free. 
Thus, $Y_{C^*}$ is a cycle consisting of some nice
pseudo shortcut $P'$ with its crest-separator path. See
Fig.~\ref{fig:enclosed12}. 
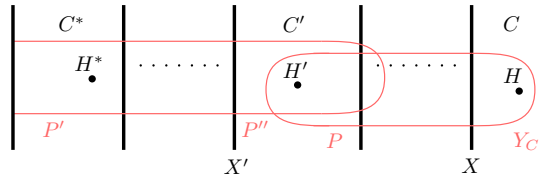
\begin{figure}[b!]
    \begin{center}
       \scalebox{0.8}{
\scalebox{1} 
{
\begin{pspicture}(0,-1.4572693)(7.1095314,1.4453692)
\psset{xunit=1.3cm,yunit=1.0cm,runit=1cm}
\definecolor{color511}{rgb}{1.0,0.4,0.4}
\usefont{T1}{ptm}{m}{n}
\rput(2.8248124,-1.2544568){$X'$}
\usefont{T1}{ptm}{m}{n}
\rput(6.3046875,1.1056681){$C$}
\psline[linewidth=0.056199998cm](5.8,1.4172693)(5.8,-0.9827307)
\psline[linewidth=0.04cm,linestyle=dotted,dotsep=0.16cm](4.6,0.4172693)(5.6,0.4172693)
\usefont{T1}{ptm}{m}{n}
\rput(5.7948126,-1.2544568){$X$}
\psbezier[linewidth=0.02,linecolor=color511](4.0,0.6172693)(3.6,0.6172693)(3.2,0.6172693)(3.2,0.0172693)(3.2,-0.5827307)(3.6,-0.5827307)(4.0,-0.5827307)
\psline[linewidth=0.056199998cm](2.8,1.4172693)(2.8,-0.9827307)
\usefont{T1}{ptm}{m}{n}
\rput(4.0646877,-0.83418304){\color{color511}$P$}
\psdots[dotsize=0.12](6.4,-0.008172693)
\usefont{T1}{ptm}{m}{n}
\rput(6.3246875,0.2456681){$H$}
\psline[linewidth=0.056199998cm](4.4,1.4172693)(4.4,-0.9827307)
\psdots[dotsize=0.12](3.6,0.0922693)
\usefont{T1}{ptm}{m}{n}
\rput(3.5746875,0.3456681){$H'$}
\psbezier[linewidth=0.02,linecolor=color511](5.8,0.6172693)(6.2,0.6172693)(6.6,0.6172693)(6.6,0.0172693)(6.6,-0.5827307)(6.2,-0.5827307)(5.8,-0.5827307)
\psline[linewidth=0.02cm,linecolor=color511](5.8,0.6172693)(3.8,0.6172693)
\psline[linewidth=0.02cm,linecolor=color511](5.8,-0.5827307)(3.8,-0.5827307)
\usefont{T1}{ptm}{m}{n}
\rput(6.4946876,-0.83418304){\color{color511}$Y_C$}
\psline[linewidth=0.056199998cm](1.4,1.4172693)(1.4,-0.9827307)
\psline[linewidth=0.04cm,linestyle=dotted,dotsep=0.16cm](1.6,0.4172693)(2.6,0.4172693)
\usefont{T1}{ptm}{m}{n}
\rput(3.5546875,1.1056681){$C'$}
\usefont{T1}{ptm}{m}{n}
\rput(0.74468756,1.1056681){$C^*$}
\rput(0.7,0){%
\psbezier[linewidth=0.02,linecolor=color511](3.2,0.8172693)(3.6,0.8172693)(4.0,0.8172693)(4.0,0.2172693)(4.0,-0.3827307)(3.6,-0.3827307)(3.2,-0.3827307)
}%
\psdots[dotsize=0.12](1.0,0.1922693)
\usefont{T1}{ptm}{m}{n}
\rput(0.9646875,0.4456681){$H^*$}
\psline[linewidth=0.02cm,linecolor=color511](3.9,0.8172693)(0.0,0.8172693)
\psline[linewidth=0.02cm,linecolor=color511](3.9,-0.3827307)(0.0,-0.3827307)
\psline[linewidth=0.056199998cm](0.0,1.4172693)(0.0,-0.9827307)
\usefont{T1}{ptm}{m}{n}
\rput(0.5146875,-0.634183){\color{color511}$P'$}
\usefont{T1}{ptm}{m}{n}
\rput(3.0646875,-0.634183){\color{color511}$P''$}
\end{pspicture} 
}}
    \end{center}
    \caption{Coast separators constructed in Step~1 and in Step~2 enclosing a crest
    $H'$.}
    \label{fig:enclosed12}
  \end{figure}%
Let $X$ be the crest separator with a top edge in~$C$
disconnecting $H$ and $H'$, and let $D$ be the
$(X,\varphi)$-component containing $C'$. $Y_C$ as a minimal coast separator
has a $D$-pseudo shortcut $P$ of~$X$ for a 
crest-separator path
of~$X$ 
contained 
in the inner graph of %
$Y_C$
as a subpath (Lemma~\ref{lem:StrongOptimal}). Moreover, let
$X'$ be the crest separator with a top edge in $C'$ disconnecting $H^*$ and
$H'$, and let $D'$ be the 
$(X',\varphi)$-component containing $C'$. Note that $X'\neq X$ since
we have chosen $H'$ in such a way that $C'$ and $C^*$ have minimal distance.
Since $P'$ is nice, a subpath $P''$ of~$P'$ is a $D$-pseudo shortcut for~$X'$
for a crest-separator path of~$X'$ contained in the inner graph of~$Y_{C^*}$.
The existence of~$P$  and $P''$ is a contradiction to
Lemma~\ref{lem:SinglePS}. For that also note that
$P$ does not cross $X'$ since $H'$ was chosen such that the
distance between $C'$ and $C^*$ is as small as possible, and
$P''$ does not cross $X$ since $C$ is pseudo shortcut free.
\end{proof}

\begin{lemma}\label{lem:directedEdges}
Let $C$ and $C^*$ be two $(\mathcal{S},\varphi)$-components whose
crests in $\mathcal H$ are not enclosed  
by a coast separator constructed in Step~1. If the coast separator $Y_{C^*}$
for the crest $H^*\in \mathcal H$ in $C^*$
encloses the crest $H\in \mathcal H$ in $C$, then the directed forest $\tilde{F}$ constructed by our algorithm has a
directed path from~$C$ to~$C^*$.
\end{lemma}

\begin{proof}
Assume that $Y_{C^*}$ encloses $H$.
Let $C_0=C,C_1,\ldots,C_r=C^*$ be the path in the mountain connection tree
connecting $C$ and $C^*$. 
For $i\in\{1,\ldots,r\}$, let $X_i \in \mathcal{S}$ be
 the crest separator with a
top edge part of both $C_{i-1}$ and $C_i$. The situation is sketched in
 Fig.~\ref{fig:pathMCT}, but this time, ignore the shown pseudo shortcuts.
Then, the crest~$H_i\in \mathcal H$ in~$C_i$ ($i=\{0,\ldots,r\}$) is enclosed
by $Y_{C^*}$ (Lemma~\ref{lem:mconnected}), but are neither 
part of a pseudo shortcut free
$(\mathcal{S},\varphi)$-component nor
enclosed by a coast separator
constructed 
in Step~1 (Lemma~\ref{lem:interact_eins_zwei}). Thus, Condition 1 in
Step~2 holds for all edges $(C_i,C_{i+1})$ ($i=\{0,\ldots,r-1\}$).
Let $D$ be the $(\mathcal{S},\varphi)$-component of~$X_{C^*}$ containing
$C^*$, where $X_{C^*}$ is the crest separator defined for $C^*$ in Step~2.

We first consider the case where $Y_{C^*}$ is defined in Step~2.A, i.e.,
$Y_{C^*}$ is the essential boundary of
the crest separator $X_{C^*}$. 
Then, $X_{i+1}$ ($i\in \{0,\ldots,r-1\}$) has an interior lowpoint and encloses
$C_i$, but not $C_{i+1}$ (Condition 2 holds). Moreover, the 
$D^*$-pseudo shortcut set for~$X_{i+1}$ is empty by the
definition of pseudo shortcuts, where $D^*$
is the $(X_{i+1} , \varphi)$-component containing $C_{i+1}$ (Condition 3
holds).
As a consequence, our algorithm adds
edges $(C_i,C_{i+1})$ to $\tilde{F}$ for all $i=\{0,\ldots,r-1\}$.

Next, we consider the case where $Y_{C^*}$ is defined in Step~2.B, i.e.,
$Y_{C^*}$ is the cycle consisting of some nice $D$-pseudo shortcut $P'$ with its
crest-separator path of a crest separator $X_{r+1}$ with a top edge in $C_r$.
Note also that either $C_i$ ($i=\{1,\ldots,r\}$) is not enclosed
by $X_i$ or $X_i$ has no interior lowpoint (Condition 2 holds) since, otherwise,
${C^*}$ would be 
also enclosed by $X_i$ and $X_r$ and hence defined in Step~2.A.
For each
 crest
separator $X_{i+1}$
 ($i=\{1,\ldots,r\}$), a subpath $P$ of~$P'$ is a
$D'$-pseudo shortcut 
 where $D'$ is the $(X_{i+1},\varphi)$-component containing $C_i$.
Since $P'$ is a pseudo
 shortcut %
in ${\mathcal Z}$ and since 
 ${\mathcal Z}$ is consistent, the subpath of $P'$ starting and
 ending on $X_{i+1}$ is a pseudo shortcut in %
${\mathcal Z}$. Since 
 ${\mathcal Z}$ is also non-overlapping, 
$X_{i+1}$ is the only crest separator with a
top edge in $C_i$ that has pseudo shortcuts, i.e., Condition 3 is satisfied
for the edge $(C_{i-1},C_{i})$, i.e.,
our algorithm adds
edges $(C_i,C_{i+1})$ to $\tilde{F}$ for all $i=\{0,\ldots,r-1\}$.
\end{proof}

\begin{lemma}\label{lem:mdisjoint}
No crest $H\in \mathcal H$ is enclosed by more than one coast separator in~$\mathcal{Y}$.
\end{lemma}

\begin{proof}
A coast separator  constructed in Step~1 can not enclose a crest 
in ${\mathcal H}$
that is
also enclosed by another coast separator added to $\mathcal{Y}$ in Step~1
(Lemma~\ref{lcor:interactPath2}) or in Step~3
(Lemma~\ref{lem:interact_eins_zwei}).
By Lemma~\ref{lem:directedEdges}, a crest 
in ${\mathcal H}$
can not be enclosed by
two coast
separators in~$\mathcal{Y}$ that are constructed 
for~$(\mathcal{S},\varphi)$-components part of two different trees
of~$\tilde{F}$.

By our choice of 
directing the edges of~$E'$ in Step~2,
each coast separator $Y_C$  for a crest $H\in \mathcal H$ contained
in an $(\mathcal{S},\varphi)$-component~$C$ of
a tree $\tilde{T}$ of~$\tilde{F}$ can only enclose $(S,\varphi)$-components
below $C$ in $\tilde{T}$. By Lemma~\ref{lem:mconnected}
the $(\mathcal{S},\varphi)$-components of~$\tilde{T}$ with the crests 
in~$\mathcal{H}$ that 
are
enclosed by $Y_C$ induce a connected subtree of~$\tilde{T}$.
Therefore, 
two coast separators constructed in Step~2 for the same
tree $\tilde{T}$ and added to $\tilde{\mathcal{Y}}$
in Step~3 cannot enclose the same crest. 
\end{proof}

Finally, we analyze the 
running time. Take $G=(V,E)$. Recall that $G$ is $O(k)$-weighted
outerplanar and that $(G,\varphi,c,{\mathcal H},\mathcal{S})$ is a good
mountain structure for~$(G,\varphi,c)$.

\begin{lemma}
The algorithm for computing the set~$\mathcal Y$ runs in 
$O(|{\mathcal{H}}|k^3 + |V|k )$
time.
\end{lemma}

\begin{proof}
By Lemma~\ref{lem:ShortCutSet}, we can construct 
a consistent $(k-1)$-bounded $\mathcal{M}$-shortcut set
in
$O(|{\mathcal{H}}|k^3 + |V|k )$
time.
Clearly within the same time,
for each $(X,\varphi)$-component $\DD$, we can decide whether 
the $\DD$-pseudo shortcut set is empty and whether
the essential boundary of~$X$ can be used as a coast separator, i.e.,
whether $X$ has
an interior lowpoint.
Hence, again within in the same
time we can determine the subset $\mathcal{S}'$ of
all pseudo shortcut free crest separators
in $\mathcal{S}$, all pseudo shortcut free
$(\mathcal{S},\varphi)$-components 
as well as all $(\mathcal{S}',\varphi)$-components.
A minimal coast separator for a crest $H\in \mathcal H$
of a pseudo shortcut free
$(\mathcal{S},\varphi)$-component contained in an
$(\mathcal{S}',\varphi)$-component~$C$ 
can be computed
in $O(n'\ellk)$ time (Lemma~\ref{lem:generalMaxFlowMinCut})
where
$n'$ is the number of vertices
of~$C$.
In the same time, we can
determine the inner graph of~$Y_C$, all crests in $\mathcal H$ that are part of the inner graph, and the
set of~$(\mathcal{S},\varphi)$-components containing them.
Therefore the running time of Step~1 is bounded by 
$O(|{\mathcal{H}}|k^3 + |V|k )$.

To test whether an $(X,\varphi)$-component has an interior lowpoint
can be done in $O(k)$ time for each of the $O(|V|)$ crest separators.
Since the set $\mathcal{Z}$ can be constructed in $O(|\mathcal{H}|k^2)$
time 
and since $\tilde{F}$ consists of~$O(|V|)$ nodes, it is easy to
run Step~2 in $O(|V|k+|\mathcal{H}|k^2)$ time. 

The construction of the inner graphs of all coast separators $Y_C$ added to
$\mathcal{Y}$ in Step~3 as well as the
removal of all $(S,\varphi)$-components $C$ with
crests in $\mathcal H$ that are part of such an inner graph
can be done by a depth-first search in the inner graph of~$Y_C$ and
the running time can be bounded by the number of edges of the
 $(S,\varphi)$-components removed. Since each removal of 
crest separator removes a disjoint set of 
$(S,\varphi)$-components (Lemma~\ref{lem:mdisjoint}), the
running time of Step~3 is bounded by $O(|V|k)$.
\end{proof}

Note that the coast separators constructed in Step~1 have weighted
size at most~$k$, whereas the coast separators taken
in Step~3 can have a weighted size of at most $3k+4c_{\mathrm{max}}-5$ since the 
down paths
of each crest separator $X\in\mathcal{S}$ can have a weighted length
of at most $k+2c_{\mathrm{max}}-1$, since a pseudo shortcut does not start and end
with a vertex of the coast (which reduces the length of a 
subpath of a down path
that can
be
part of a coast separator by $1$), and since the pseudo shortcuts that we
use have weighted length $\le k-1$.

Let $n$ be the number of vertices of $G^+$. 
Since each crest
$H\in \mathcal H$ of $G$ is obtained from a merge of at least $\lceil
k/c_{\mathrm{max}}\rceil$ vertices
in $G^+$, and since $k\ge c_{\mathrm{max}}$ if $n\ge 2$, we have
$|\mathcal H|\le n \cdot 
c_{\mathrm{max}} /k$ if $n\ge 2$. 

For each $H \in {\mathcal H}$, let us choose one crest of height
$2k+2c_{\mathrm{max}}$ in
$(G^+,\varphi^+,c^+)$ that is merged into $H$. Define ${\mathcal H}^+$ as
the set of chosen crests. Note that with $(G,\varphi,c,\mathcal{H},\mathcal{S})$ being a good mountain structure $(G^+,\varphi^+,c^+,{\mathcal H}^+,\mathcal{S})$ must also be a good mountain structure.
Then, the next corollary summarizes the results of the current section, where
the function $m$ simply maps each coast separator $Y$ to the $(\mathcal{S},\varphi)$-components whose crests are enclosed by $Y$.

\begin{corollary}\label{cor:StrongSCut}
Assume that we are given the integer $k$, the 
$(2k+2c_{\mathrm{max}})$-weighted-outerplanar graph
$(G^+,\varphi^+,c^+)$,
and the set ${\mathcal H}^+$ of crests of height $2k+2c_{\mathrm{max}}$ as
defined before.
Then, in  
$O(n k^2 %
c_{\mathrm{max}} 
)$ time,
one can construct a good mountain structure
$\mathcal{M}=(G^+,\varphi^+,c^+,{\mathcal H}^+,\mathcal{S})$,
the mountain connection tree $T$ for~$\mathcal{M}$,
a set $\mathcal{Y}$ of coast separators
in $G^+$,
for each $Y\in\mathcal{Y}$, the inner graph
$I$ of~$Y$ and the corresponding embedding $\varphi|_I$
such that the properties below hold.%

\begin{itemize}

\item[\rm (i)] For each crest $H$ of
height
$2k+2c_{\mathrm{max}}$ in $G^+$%
---with $H$ not necessarily being contained in $\mathcal{H}^+$---%
there is exactly one
coast separator $Y\in\mathcal{Y}$
for~$H$
of weighted size
at most $3k+4c_{\mathrm{max}}-5$.
\item[{\rm (ii)}] For each pair of crests of height $2k+2c_{\mathrm{max}}$ in
$G^+$, the crests are %
either part of the inner graph of one $Y\in\mathcal{Y}$ or 
there is a crest separator $X\in \mathcal{S}$ strongly going 
between the crests.
\item[{\rm (iii)}]
There is
a function $m$
mapping each $Y\in\mathcal{Y}$ to a non-empty set of
$(\mathcal{S},\varphi^+)$-components
such that
\begin{itemize}
\item the elements of~$m(Y)$ considered as nodes of~$T$
induce a connected subgraph of~$T$,
\item
the subgraph of~$G^+$ obtained from the union of all
$(\mathcal{S},\varphi^+)$-components
in $m(Y)$
contains
the inner graph $I_Y$ of~$Y$ as a subgraph,
\item the set $m(Y)$ does not have any 
$(\mathcal{S},\varphi^+)$-component
as an element
that is also an element in $m(Y')$ for a coast separator $Y'\in\mathcal{Y}$
with $Y'\not=Y$.%
\end{itemize}
\end{itemize}
\end{corollary}

\section{A Tree Decompositions for the Components}\label{sec:bod}

   We first describe an 
   algorithm %
   for
   constructing a tree decomposition of width $3\ell-1$
   for a
   weighted almost triangulated
   $\ell$-outerplanar 
graph.
Then we modify the
   algorithm such that, %
  given a weighted graph $(G,\varphi,c)$, 
    $\mathcal{S} \subseteq \mathcal{S}({G},{\varphi},c)$,
   and
   an $(\mathcal{S},\varphi)$-component $C$,
it
   constructs a tree decomposition
   for~$\mathrm{ext}(C,\mathcal{S})$ of width at most $3\ell-1$
   with the following %
   property:
   For each crest separator $X\in\mathcal{S}$
   with a top edge in $C$, there is a bag containing all vertices of~$X$.

 It is easier to construct such a tree decomposition 
if the following {\it neighborhood property} holds.

\begin{itemize}
\item[(N)]
For each vertex $v$, there is at most one vertex
$u$ with $u\downvertex=v$. %
\end{itemize}

Starting with 
an almost triangulated weighted 
$\ell$-outerplanar graph
$(\tilde{G},\tilde{\varphi},\tilde{c})$
as an intermediate goal, we want to transform it in such a way
into an $\ell$-outerplanar weighted plane graph $(G,\varphi,c)$ 
with the neighborhood property (N)
 that a tree decomposition for $\tilde{G}$ can be easily obtained from
a tree decomposition for $G$.
The idea is to compute $G$ as a {\em reverse
minor} of~$\tilde{G}$, i.e., $\tilde{G}$ can be obtained from~$G$ by
iteratively merging
adjacent vertices and/or by removing vertices.

In order to guarantee property (N), we start  our transformation
by splitting
each vertex $v$ 
into a path %
as it is sketched in
Fig.~\ref{fig:bd}.
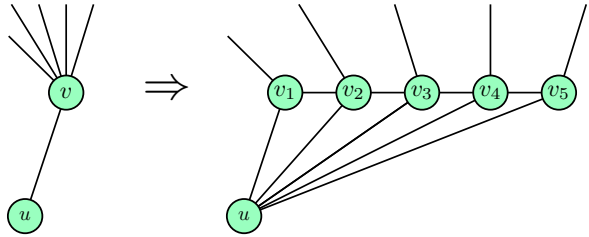
\begin{figure}[b!]
    \begin{center}
       \vspace{1mm}
       \scalebox{0.9}{
\scalebox{1} 
{
\begin{pspicture}(0,-1.844)(9.054375,1.8565)
\definecolor{lightgreen}{rgb}{0.6,1,0.75}
\psset{xunit=1.0cm,yunit=1.3cm,runit=1cm}
\psline[linewidth=0.025cm](4.241875,0.344)(5.241875,0.344)
\psline[linewidth=0.025cm](5.241875,0.344)(6.241875,0.344)
\psline[linewidth=0.025cm](6.241875,0.344)(7.241875,0.344)
\psline[linewidth=0.025cm](7.241875,0.344)(8.241875,0.344)
\psline[linewidth=0.025cm](1.041875,0.344)(0.441875,-1.056)
\psline[linewidth=0.025cm](4.241875,0.344)(3.641875,-1.056)
%
\psline[linewidth=0.025cm](1.041875,0.344)(0.201875,0.9844)
%
\psline[linewidth=0.025cm](4.241875,0.344)(3.401875,0.9844)
\psline[linewidth=0.025cm](1.041875,0.344)(0.241875,1.344)
\psline[linewidth=0.025cm](5.241875,0.344)(4.441875,1.344)
\psline[linewidth=0.025cm](1.041875,0.344)(0.641875,1.344)
\psline[linewidth=0.025cm](6.241875,0.344)(5.841875,1.344)
\psline[linewidth=0.025cm](1.041875,0.344)(1.041875,1.344)
\psline[linewidth=0.025cm](7.241875,0.344)(7.241875,1.344)
\psline[linewidth=0.025cm](8.241875,0.344)(8.641875,1.344)
\psline[linewidth=0.025cm](1.041875,0.344)(1.441875,1.344)
\psline[linewidth=0.025cm](5.241875,0.344)(3.641875,-1.056)
\psline[linewidth=0.025cm](6.241875,0.344)(3.641875,-1.056)
\psline[linewidth=0.025cm](6.241875,0.344)(3.641875,-1.056)
\psline[linewidth=0.025cm](7.241875,0.344)(3.641875,-1.056)
\psline[linewidth=0.025cm](8.241875,0.344)(3.641875,-1.056)
\pscircle[linewidth=0.026300002,dimen=outer,fillstyle=solid,fillcolor=lightgreen](1.041875,0.344){0.2652611}
\usefont{T1}{ptm}{m}{n}
\rput(1.041875,0.344){$v$}
\pscircle[linewidth=0.026300002,dimen=outer,fillstyle=solid,fillcolor=lightgreen](4.241875,0.344){0.2652611}
\usefont{T1}{ptm}{m}{n}
\rput(4.241875,0.344){$v_1$}
\pscircle[linewidth=0.026300002,dimen=outer,fillstyle=solid,fillcolor=lightgreen](5.241875,0.344){0.2652611}
\usefont{T1}{ptm}{m}{n}
\rput(5.241875,0.344){$v_2$}
\pscircle[linewidth=0.026300002,dimen=outer,fillstyle=solid,fillcolor=lightgreen](6.241875,0.344){0.2652611}
\usefont{T1}{ptm}{m}{n}
\rput(6.241875,0.344){$v_3$}
\pscircle[linewidth=0.026300002,dimen=outer,fillstyle=solid,fillcolor=lightgreen](7.241875,0.344){0.2652611}
\usefont{T1}{ptm}{m}{n}
\rput(7.241875,0.344){$v_4$}
\pscircle[linewidth=0.026300002,dimen=outer,fillstyle=solid,fillcolor=lightgreen](8.241875,0.344){0.2652611}
\usefont{T1}{ptm}{m}{n}
\rput(8.241875,0.344){$v_5$}
\pscircle[linewidth=0.026300002,dimen=outer,fillstyle=solid,fillcolor=lightgreen](0.441875,-1.056){0.2652611}
\usefont{T1}{ptm}{m}{n}
\rput(0.441875,-1.056){\scalebox{0.85}{$u$}}
\pscircle[linewidth=0.026300002,dimen=outer,fillstyle=solid,fillcolor=lightgreen](3.641875,-1.056){0.2652611}
\usefont{T1}{ptm}{m}{n}
\rput(3.641875,-1.056){\scalebox{0.85}{$u$}}
\usefont{T1}{ptm}{m}{n}
\rput(2.5135937,0.3665){\scalebox{2}{$\Rightarrow$}}
\end{pspicture} 
}%
       }
    \end{center}
    \caption{
    The replacement of a vertex $v$. 
    Let
    $u=v\downvertex$.
The vertices $v_1,\ldots,v_5$ are
         the copies of~$v$. %
             }
    \label{fig:bd}
  \end{figure}%
More precisely, iterate over the vertices $v$ with non-increasing lower height.
We consider only the more interesting case where $h^-_\varphi(v)\ge 2$.
Let $u=v\downvertex$.
Let $\{v,u\},\{v,u_1\},\ldots,\{v,u_d\}$
be the edges incident to $v$ in the clockwise order in which they appear
around $v$. %
Then $v$ is replaced by a series of vertices $v_1,\ldots,v_{d}$
with $v_1$ being incident to $u_1$ and
$v_2$, whereas $v_{d}$ is incident to $v_{d-1}$ and $u_d$, and
for~$i\in\{2,\ldots,d-1\}$, $v_i$ is incident to $v_{i-1},v_{i+1}$,
and $u_{i}$.
We finally connect all vertices
$v_1,\ldots,v_{d}$ to $u$ by an edge.
The vertices $v_1,\ldots,v_{d}$ are then called the {\em copies of~$v$}.
We %
finally %
define %
the {\cost} of the copies of a vertex $v$ as the {\cost} of~$v$.
Let $(G,\varphi,c)$ be the %
weighted plane graph
obtained. Thus, the copies of $v$ have the same height interval as $v$
itself, and $G$ is weighted $\ell$-outerplanar.
Note that strictly speaking no down vertices and down edges
are defined in $G$ since
down vertices where only defined for almost triangulated graphs.
Thus, if we refer to a down edge $\{u_i,v_j\}$ in $G$ for vertices
$u_i$ and $v_j$ being copies of vertices $u$ and $v$ in $\tilde{G}$, 
we mean that $\{u,v\}$ is a down edge in the original graph. By our construction
every vertex $u_i$ in $G$ being a copy of a vertex $u$ in the original
graph is then incident 
to at most one down edge with another
endpoint of larger height---so that property (N) holds---and exactly one down 
edge with another endpoint  of lower height.
The endpoint of the latter
edge is then defined to be the down vertex $u_i\downvertex$ of $u_i$,
which then is also a copy of $u\downvertex$.      
Moreover, the down path of $u_i$, defined similarly as for almost
triangulated graphs, then consists of copies of the vertices of the
down path from $u$. However, different copies $u_i$ and $u_j$ in $G$ 
of the same vertex $u$ in $\tilde{G}$ have vertex-disjoint down paths
in $G$.  

We next
want to bound the 
number of edges of $G$ with respect to the number 
of edges of $\tilde{G}$. Note that, for each edge incident to a vertex $v$, but
not being the down edge of $v$, the splitting of $v$ introduces a new copy $v_i$
of $v$ and 
up to 
two additional edges into $G$ connecting $v_i$ to the down vertex of
$v_i$ and to a previous copy $v_{i-1}$ of $v_i$, respectively.
One of them namely, the down edge 
of $v_i$ will recursively cause further splittings along the down path of
$v_i$ so that altogether one edge of the original graph will introduce  
up to $2\ell$ new edges.
Consequently, if $\tilde{n}$ is the number of
vertices of the original graph, $G$ has $O(\ell\tilde{n})$ vertices and
edges. 

\begin{lemma}\label{lem:6.1}
For each given almost triangulated weighted $\ell$-outerplanar graph $(\tilde{G},\tilde{\varphi},\tilde{c})$
with $\tilde n$ vertices,
a weighted $\ell$-outerplanar graph $(G,\varphi,c)$ can be found in
$O(\tilde{n}\ell)$ time such that 
\begin{enumerate}
\item $G$ is 
a reverse minor of~$\tilde{G}$,
\item $G$
satisfies the neighborhood property, 
\item for each
each edge $\{u,v\}$,
there is 
an edge connecting a copy $u_i$ of $u$
and a copy $v_i$ of $v$ in $G$, and 
\item
the down paths of $u_i$ and $v_i$
in $G$ can be obtained from the down paths of $u$ and $v$,  
respectively, in $\tilde{G}$ by replacing the vertices of the down paths in $\tilde{G}$ 
by copies of its vertices in $G$.
\end{enumerate}
\end{lemma}

We now describe an algorithm to compute a tree decomposition for
the %
weighted $\ell$-outerplanar
graph
$({G},{\varphi},c)$ constructed above. Let $n$ be the
number of vertices of $G$ and $V$ be the vertex set of $G$. 
W.l.o.g., %
$G$ is biconnected; otherwise,
one can compute a tree
decomposition for each biconnected component independently and finally connect them.
 Take $S=V \setminus \{ u\downvertex\, \mid u \in V
 \ \mathrm{and}\ h^-_{\varphi}(u)\ge 2\}$ as illustrated
 in Fig.~\ref{fig:upSpannT}. %

 \begin{figure}[t!]
     \begin{center}
        \scalebox{0.97}{\input{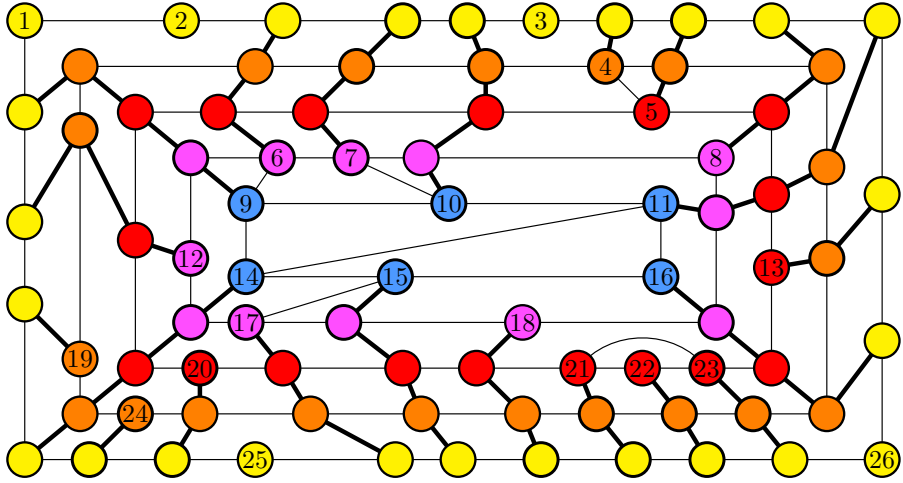}}%
     \end{center}
     \caption{A plane unweighted graph with the neighborhood
     property. Down edges are drawn bold.
      Vertices in the set $S$ are numbered.
              }
     \label{fig:upSpannT}
   \end{figure}
As a first step of our computation, we merge each down path starting at a
vertex $v$ of~$S$ to one vertex $v^*$ and define the weight of~$v^*$ as $h^+_{\varphi}(v)$. 
Let $(G^+,\varphi^+,c^+)$ be the graph obtained.
Note that $h^+_{\varphi}(v)=h^+_{\varphi^+}(v^*)$,
and $G^+$ is 
weighted $\ell$-outerplanar.
Since all vertices of~$(G^+,\varphi^+,c^+)$
are incident to the outer face, the unweighted version $(G^-,\varphi^-)$
of~$(G^+,\varphi^+,c^+)$ is outerplanar and we find a tree decomposition
$(T,B)$ for~$G^-$ of width~2 as shown by Bodlaender~\cite{Bod98}, i.e., in each bag we
have at most~3 vertices. Since each vertex in $G^+$ has a weight of at most
$\ell$, $(T,B)$ is a tree decomposition for~$(G^+,\varphi^+,c^+)$ with 
bags of size $3\ell$; in other words, $(T,B)$ as a tree decomposition
for~$(G^+,\varphi^+,c^+)$ has a
 width of at most $3\ell -1$.
By replacing each vertex $v^*$ by the down path of~$v$, we obtain a tree
decomposition for~$(G,\varphi,c)$ of width $3\ell -1$ in $O(n\ell)$ time.

As we show in the following, the algorithm from above can be slightly
modified to prove the next lemma.

\begin{lemma}\label{lem:Bod} Let %
$(G,\varphi,c)$ be an 
almost triangulated
weighted $\ell$-outerplanar biconnected graph,
and let 
$M=(G,\varphi,c,\mathcal{H},\mathcal{S})$
be a good mountain structure
of~$(G,\varphi,c)$.
Take $C$ as an $(\mathcal{S},\varphi)$-component,
and $C'=\mathrm{ext}(C,\mathcal{S})$.
Then one
can construct a tree decomposition $(T,B)$ of width %
$3\ell-1$
for  $C'$ that, for
each crest separator $X\in\mathcal{S}$ with a top edge in $C$, %
has a bag %
containing all vertices of~$X$.
Moreover, given $C'$ as well as the height intervals
$[h^-_\varphi(v),h^+_\varphi(v)]$
for each vertex $v$ in $C'$, the
construction can be done
in $O(n'\ell^2)$ time, where $n'$ is the number of vertices in $C'$.
\end{lemma}

\begin{proof}
For the time being, let us assume that no
 crest separator %
with a top edge in $C'$
has %
a lowpoint. At the end of our proof we
 show how to handle %
crest separators with a lowpoint.
The heights of the vertices in $C'$ with respect to $\psi=\varphi|_{C'}$
may differ from
the heights %
of these vertices with respect to $\varphi$.
In order to avoid this, %
we
insert additional %
edges into $C'$.
More precisely,
for each crest separator $X=(P_1,P_2)$ with a top edge in 
$C$,
height $r\in \Nat$,
and vertices $u_1,\ldots,u_p$ of~$P_1$ and $v_1,\ldots,v_q$
of~$P_2$ in the order of their appearance on these paths,
and for~$i\in \{1,\ldots,r\}$, define 
$u(i)$ as the vertex in $\{u_1,\ldots,u_p\}$ 
whose height interval contains $i$. 
Analogously, we define $v(i)$ for all $i\le h_\varphi^+(v)$.
We extend $C'$ by inserting vertices $m(1),m(2),\ldots,m(h_\varphi^+(v))$ into the
$(X,\varphi)$-component not containing $C$ as well as 
edges $\{u(i),m(i)\}$ and $\{m(i),v(i)\}$ for all $i$ with $1\le i\le h_\varphi^+(v)$
and edges $\{m(i),m(i+1)\}$ for all $i$ with $1\le i\le h_\varphi^+(v)-1$,
as well as edges between $m(h_\varphi^+(v))$ and each vertex
of $u(h_\varphi^+(v)+1),\ldots u(r)$.
By applying the changes above to all crest separators,
this results in
a weighted plane graph $(\tilde{C},\tilde{\varphi},\tilde{c})$ such that %
$h_{\tilde{\varphi}}(v)=h_\varphi(v)$ holds for all vertices~$v$ of~$C'$.  %
We next transform $(\tilde{C},\tilde{\varphi},\tilde{c})$ into a graph 
$(\hat{C},\hat{\varphi},\hat{c})$ with the neighbor hood property (N) by
applying Lemma~\ref{lem:6.1} and construct a tree decomposition
of width $3\ell-1$ for this new graph as it is described 
after Lemma~\ref{lem:6.1}.

Let us now analyze what happens to a crest separator $X = (P_1,P_2)$.
By Lemma~\ref{lem:6.1}, the 
top edge of $X$ is replaced by an edge in
$(\hat{C},\hat{\varphi},\hat{c})$ connecting copies of the original
endpoints. Moreover, the down paths of these endpoint copies consist of copies
of the vertices of $P_1$ and $P_2$.
After merging each down path 
in $(\hat{C},\hat{\varphi},\hat{c})$
to one vertex---let $G'$ be the graph
obtained---there is an edge connecting the two
vertices that are introduced for~$P_1$ and~$P_2$. Thus, the tree
decomposition for~$G'$ has a bag containing the two vertices. As a
consequence, we obtain a tree decomposition for~$\hat{C}$
 that contains a
bag with %
copies of all
vertices of~$P_1$ and~$P_2$. 
Since $\hat{C}$ is a reverse minor of $\tilde{C}$, we obtain
a tree decomposition for $\tilde{C}$ from a tree decomposition for 
$\hat{C}$ of at most the same width in a standard way by replacing 
split vertices and edges by the original vertices and edges in 
$\tilde{C}$ 
thereby
replacing the copies of the vertices of $P_1$ and $P_2$
by the original vertices of $P_1$ and $P_2$. After removing the
vertices $m(i)$ outside $C'$, we 
obtain a tree decomposition of width at most $3\ell-1$ for $C'$
that, for each crest separator $X\in\mathcal{S}$, has a bag containing
all vertices of $X$.

We next show how to exclude crest separators with
lowpoints by modifying the given graphs. 
For simplicity, our
modifications described below
do not result in an %
almost triangulated graph; 
however the graph can be easily transformed 
into an
almost triangulated graph
by adding into each inner face with more than three edges on its boundary
edges incident to one vertex of smallest upper height on the boundary,
which does not change any height interval.
Let us first consider a crest separator 
$X\in\mathcal{S}$ with a top edge in $C$ that encloses $C$.
Note that in this case every crest separator with a 
top edge in $C$ contains the lowpoint $v$ of $X$.
Let %
$i=h^+_\varphi(v)$. Then
we remove all vertices~$u$ with
$h^+_\varphi(u)\le i$ from $G$, from $C$, and from the crest
separators contained in $\mathcal{S}$, and additionally we
remove from $\mathcal{S}$
all crest separators %
that afterwards have at most one vertex (since, by definition, 
a single vertex is not a crest separator anymore).
For all vertices $u$ with 
$h^-_\varphi(u)\le i+1<h^+_\varphi(u)$,
we define $c'(u)=h^+_\varphi(u)-i$. For the remaining vertices $u$, we define $c'(u)=c(u)$.
We so obtain a new good
mountain structure $(G',\varphi',c',\mathcal{S}')$
from $(G,\varphi,c,\mathcal{S})$,
where $\varphi'$ is %
a weighted $(\ell-i)$-outerplanar embedding.
More precisely,
if the graph $G'$ after removing all vertices of upper height at most
$i$ is not biconnected, we
take
a good mountain
structure for each 
biconnected component.
Note
that, for each crest separator $X'=(P'_1,P'_2)$ of $\mathcal{S}$ with a top edge in $C$, the
subpaths $P_1^*$ and $P_2^*$ of $P'_1$ and $P'_2$, respectively, ending
immediately before $v$ are contained in the same biconnected component.
To see this,
we distinguish between two cases. If we have $h_\varphi^-(u)=h_\varphi^-(v)=i+1$ for both top vertices $u$ and $v$ of $X'$, $P_1^*$ and $P_2^*$ consist
only of these two vertices. Since these vertices are connected by an
edge, they must be part of the same biconnected component.
Otherwise we must have $h_\varphi^-(u)>i+1$ for at least one top vertex $u$ of $X'$.
Hence we can conclude that
there is a cycle of vertices with their height intervals containing $i+1$
that encloses $u$ and all vertices of $P_1^*$ and $P_2^*$ of lower height
at least $i+2$ and that contains the vertices of $P_1^*$ and $P_2^*$ with
lower height $i+1$. The inner graph of this cycle is biconnected
and since it contains the cycle itself it must contain all 
vertices of $P_1^*$ and 
$P_2^*$. Therefore, we separately construct a tree decomposition for each biconnected
component of $G'$ and then connect the tree decompositions of each
biconnected component. 
After the modifications---in particular, the removal of~$v$---no
crest separator in $\mathcal{S}'$ encloses
the new $(\mathcal{S}',\varphi')$-component $C^*$ obtained from $C$
(Lemma~\ref{lemma:enclose}).
The idea is then to use the construction as described for crest separators
with no lowpoints to construct a tree decomposition
of width
$3(\ell-i)-1$ 
for each biconnected component
such that, for each crest separator
$X'\in\mathcal{S}$
with a top edge in $C^*$, 
it 
has a bag containing all
vertices $u$ of~$X'$ with
$h^+_{\varphi}(u)> i$.
Since $C$ is enclosed by $X$, the remaining vertices of~$X'$ are all part
of the down path of~$v$ in $(G,\varphi,c)$. %
We can simply add the %
vertices
of the down path of~$v$ %
into
all bags of the tree decomposition to obtain the desired tree
decomposition for~$C'$ of width $3\ell -1$.

Crest separators with a lowpoint that does not enclose $C$
can be handled in the same way as crest separators without any
lowpoint except that we do need to define vertices $u(i), v(i), m(i)$
for all i smaller or equal than the upper height $h$ of the lowpoint
and that we insert an edge between the lowpoint and $m(h+1)$.

Concerning the running time, it is dominated by the
construction of a tree decomposition for $\hat{C}$. This construction 
takes $O(\hat{n}\ell)$ time, where $\hat{n}$ is the number of
vertices of $\hat{C}$. Since the replacement of~$\tilde{C}$ by $\hat{C}$ may
increase the number of vertices by a factor of~$O(\ell)$, the whole
running time is $O(n'\ell^2)$.
\end{proof}

\section{The Main Algorithm}\label{sec:alg}

In this section we describe our main algorithm.
As mentioned in Section~\ref{sec:ide}, we assume that
we are given an almost triangulated weighted 
graph $(G,\varphi,c)$ with weighted treewidth $k$.
In the case that no embedding $\varphi$ is given, we can compute an
arbitrary planar embedding in linear time~\cite{HopT74}.
Recall that $c_{\mathrm{max}}$ denotes the maximum weight over all vertices.
Let $\ell=2k+2c_{\mathrm{max}}$.
Our algorithm
starts with cutting off all maximal connected
subsets of vertices of whose height interval  contains a value of size
at least $\ell$
by coast separators of size $O(k)$.
More precisely, to find such coast separators, in a first substep we
merge each maximal connected set $M$ of vertices of whose height interval
contains a value of size 
at least 
$\ell$
to one vertex $v_M$ and define $c(v_M)$ to be $\ell+1$ minus
the smallest lower height of a vertex in $M$. This means that
the lower height of $v_M$ is the smallest lower height of a vertex in $M$
and its upper height is $\ell$.
Therefore
the weighted graph $(G',\varphi',c)$ obtained is
weighted $\ell$-outerplanar.
Given a vertex of the coast, this can be done in a time linear in the number of vertices
with a lower height of at most 
$\ell$.
$G'$ is an almost triangulated, biconnected graph since this is true
for~$G$. 
We then can use 
Corollary~\ref{cor:StrongSCut}
to 
construct, for the subset $\mathcal{H}$ of the vertices of 
height $\ell$ in $G'$,
a good mountain
structure $(G',\varphi',c,{\mathcal H},\mathcal{S})$ 
as well as 
to find a set of
coast separators $\mathcal{Y}$
and a function
$m$ that maps the coast separators to $(\mathcal{S},\varphi')$-components
such that the following properties of the lemma hold.
\begin{itemize}
\item Each crest of upper height exactly $\ell$ is enclosed by a 
coast separator $Y\in \mathcal{Y}$
of
 weighted size at most $3k+4c_{\mathrm{max}}-5$.
\item 
For each pair of crests of height $\ell$, the crests are  
either 
part of the inner graph of one $Y\in\mathcal{Y}$ or 
there is a crest separator $X\in \mathcal{S}$ strongly
going between the crests.
\item The $(\mathcal{S},\varphi')$-components in 
$m(Y)$
induce
 a connected subgraph of the mountain connection tree.

\item The inner graph of a
  coast separator $Y\in \mathcal{Y}$
 is a subgraph of the graph
 obtained from the union of the 
 $(\mathcal{S},\varphi')$-components
 in $m(Y)$.
\item For each $(\mathcal{S},\varphi')$-component $C$, there is at most
 one
   $Y\in \mathcal{Y}$
  with
 $C\in m(Y)$.
\end{itemize}

Since $G'$ is %
weighted $\ell$-outerplanar,
we can apply 
Lemma~\ref{lem:Bod}
to each $(\mathcal{S},\varphi')$-component $C$ to
compute a tree decomposition {$(T_C,B_C)$}
of width at most 
$3\ell-1$    
for~$\mathrm{ext}(C,\mathcal{S})$
such that, for each crest separator
$X\in\mathcal{S}$ with a top edge in $C$, {$(T_C,B_C)$} has a node
whose bag contains all
vertices of~$X$.
This node
is then connected 
to a node whose bag also contains all vertices of~$X$ and
that is
constructed for
$\mathrm{ext}(C',\mathcal{S})$ with $C'$ being the
other $(\mathcal{S},\varphi')$-component
$C'$ containing the top edge of~$X$. Since the
set of common vertices of
$\mathrm{ext}(C,\mathcal{S})$ and $\mathrm{ext}(C',\mathcal{S})$
is
a subset of the vertices of~$X$,
after also connecting nodes for all other crest 
separators in $\mathcal{S}$,
we obtain a tree decomposition $(T^*,B^*)$
for~$G'$.

Let us next remove from
$\mathcal{S}$ all crest separators whose top edge
is contained in two $(\mathcal{S},\varphi')$-components
belonging to the same set $m(Y)$ for some $Y\in\mathcal{Y}$.
Afterwards, for the new set $\mathcal{S}'$ of crest separators,
each cycle $Y\in\mathcal{Y}$ is contained in one
$(\mathcal{S}',\varphi')$-component.%
\footnote{We now have found a set of
crest separators and coast separators that guarantee (P1) - (P3)
from page \pageref{page:ideas}. The set $\mathcal{S}'$ is exactly the
set of perfect crest separators.}
For each $(\mathcal{S}',\varphi')$-component $C'$,
let us call the {\em flat component} of~$C'$
to be the subgraph of~$\mathrm{ext}(C',\mathcal{S}')$ obtained by
removing the vertices of the inner graph
of the cycle $Y\in\mathcal{Y}$
with $Y$ contained in $C'$
if such a cycle
$Y$ exists.
Otherwise,
we define the {\em flat component} to be
$\mathrm{ext}(C',\mathcal{S}')$, which then 
contains no vertex of upper height larger than $\ell$. 
From the bags in $(T^*,B^*)$,
we then remove all vertices that
do not belong to a flat component. 
Afterwards, for each cycle $Y\in\mathcal{Y}$ disconnecting
the crests of an $(\mathcal{S}',\varphi')$-component $C'$
from the coast,
we
put
the
vertices of~$Y$ into all bags
of the tree decompositions {$(T_C,B_C)$}
constructed as part of~$(T^*,B^*)$ for the (extended components of the) 
$(\mathcal{S},\varphi')$-components {$C$} contained in $C'$.
This allows us to connect one of these
bags with a bag of a tree decomposition for the inner graph of
$Y$.
Indeed, for each $Y\in\mathcal{Y}$, we recursively construct a
tree decomposition $(T_Y,B_Y)$ for the
inner graph $G_Y$ of~$Y$ with the vertices
of~$Y$ being the coast of~$G_Y$. Into all bags of~$(T_Y,B_Y)$
that are not constructed in further recursive calls, we also put
the vertices of~$Y$.
At the end of the recursions, we obtain a tree decomposition for the
whole graph. Note that
each bag is of weighted size $O(k)$. More precisely, let us consider
a recursive call that constructs a tree decomposition $(T_Y,B_Y)$
for a cycle $Y$ constructed in a previous step. Then the tree decomposition
for the flat component considered in the current recursive call puts
vertices with a total weight 
at most~$3\ell$
into each bag. However, we also have to insert the vertices
of~$Y$ and possibly the vertices of a cycle constructed in the
current recursive call into the bags. Since each of these cycles consists of
vertices with a total weight
at most~$3k+4c_{\mathrm{max}}-5$,
each bag of the final tree decomposition of~$G$ 
has weight
at most~$3\ell+6k+8c_{\max}-10\le 12k+14c_{\max}-10$.
Recall that \mbox{$c_{\max}\le k$.} 
As mentioned in Section~\ref{sec:ide},
it is possible to
replace a non-triangulated weighted graph $H$ of weighted treewidth $k$ by an almost triangulated
weighted supergraph $G$ of $H$
and to run our algorithm from above on $G$
such that
we can obtain a
tree decomposition for~$H$ of width 
$(12+\epsilon)k + 14c_{\max} + O(1)$.

Concerning the running time, it is easy to see that each
recursive call is dominated by the computation of the
cycles being used as coast separators.
This means that each recursive call
runs in $O(\tilde{n}k^2%
c_{\mathrm{max}} 
)$ time, where $\tilde{n}$ is the number
of vertices of the subgraph $G'$ of~$G$ considered in this call.
Some vertices part of one recursive
call are cut off from the current graph
and then are also considered in a further recursive call.
However, since 
the
coast separators 
contain no vertex of the coast,
the coast is
not part of any recursive call. 
Therefore, each vertex is considered in $O(k)$ recursive calls, and
our algorithm finds a tree decomposition for~$G$ of width $O(k)$ in
$O(|V|k^3%
c_{\mathrm{max}} 
)$ time.
If we do not know $k$ in advance, we can use a binary search to
determine
a tree decomposition for~$G$ of width
$O(k)$ in
$O(|V|k^3%
c_{\mathrm{max}} 
\log k)$ time.

\begin{theorem}\label{the:final}
For
a weighted planar graph $(G,c)$ with $n$ vertices and 
weighted treewidth $k$ and any constant $\epsilon>0$, a tree decomposition 
for~$G$ of weighted
width
$(12+\epsilon)k + 14c_{\max} + O(1)$
can be constructed in
$O(nk^3 
c_{\mathrm{max}} 
\log k)$ time, where $c_{\max}$ denotes the maximum weight of a
vertex of~$G$.
\end{theorem}

\begin{corollary}\label{col:final}
For
a planar graph $G$ with $n$ vertices and 
treewidth $k$ and any constant $\epsilon>0$, a tree decomposition 
for~$G$ of 
width
$(12+\epsilon)k + O(1)$
can be constructed in
$O(nk^3\log k)$ time.
\end{corollary}

For a more efficient algorithm, we replace $\ell$ by 
$3k+2c_{\mathrm{max}}$. When considering a weighted $\ell$-outerplanar
graph $(G,\varphi)$ in one recursive call of the algorithm, we
remove all vertices of upper height at most $k$, reduce the weight of
each vertex $v$ with upper height at least $k+1$ and lower height
smaller than $k+1$ by $k$, and search for a coast separator in the
resulting weighted $(2k+2c_{\mathrm{max}})$-outerplanar graph as in our
original algorithm. 
Since now the lower and upper heights of each vertex considered 
in two consecutive recursive steps differ by at least $k$, every
vertex is now considered in at most $O(1)$ recursive calls. 
However, we now have to construct a tree decomposition for
a weighted $(3k+2c_{\mathrm{max}})$-outerplanar flat component in each recursive call.
Thus,
we now construct a tree decomposition 
with a weight of size at most
$3\ell+6k+8c_{\max}-10=15k+14c_{\max}-10$
per bag if $G$ is
almost triangulated,
and $(15+\epsilon)k+14c_{\max}+O(1)$
per bag otherwise.

\begin{theorem}\label{the:finalFast}
For a weighted planar graph $(G,c)$ with $n$ vertices and weighted treewidth $k$
and any constant $\epsilon>0$, a tree decomposition 
for~$G$ of 
weighted width
$(15+\epsilon)k+14c_{\max}+O(1)$ can be constructed in
$O(nk^2 %
c_{\mathrm{max}} 
\log k)$ time, where $c_{\max}$ denotes the maximum weight of a
vertex of~$G$.
\end{theorem}

\begin{corollary}\label{col:finalFast}
For a planar graph $G$ with $n$ vertices and treewidth $k$                    
and any constant $\epsilon>0$, a tree decomposition 
for~$G$ of 
width
$(15+\epsilon)k+O(1)$ can be constructed in
$O(nk^2\log k)$ time. 
\end{corollary}

It is also interesting to compute a grid minor if a graph
has no tree decomposition of size $O(k)$. We can do so, for an unweighted
planar graph $G'$, if we abstain from multiplying the weights of the vertices
of $G'$ by a factor $x$ 
during the transformation of $G'$ into an almost triangulated
graph $G$ by %
inserting a vertex into each face  
and connecting it to all vertices on the boundary.
Kloks et al.~\cite[Theorem 2]{KloLL02} showed that 
the treewidth of $G$ can be bounded by $k=4k'+1$ where $k'$ is the treewidth of $G'$.
If, afterwards, the algorithm fails to construct a tree decomposition
for the graph $G$ without the weight changes of unweighted
tree\-width $k$, 
then this can only happen when we search for a separator with Theorem~\ref{the:Sep}. %
In this case, we have $k$ internally-vertex-disjoint paths $\mathcal P$ that all start and
end at two vertices whose heights differ by more than $k$, i.e., each of these paths
``crosses'' $k-1$ cycles. If we cut the cycles between a 
consecutive pair of paths of
$\mathcal P$ and remove 
the endpoints of the paths, we get a $k \times (k-1)$ minor
in the almost triangulated version of $G$. 
If we remove every second path of $\mathcal P$ and every second cycle, then we can 
replace
the
remaining paths and cycles such that no new vertex added into a face is used and the
paths (the cycles) are pairwise vertex-disjoint. Thus, $G$ has a 
grid of size $\lfloor k/2 \rfloor \times \lfloor(k-1)/2 \rfloor$ as minor.

\begin{theorem}\label{the:grid}
Given a planar graph $G$ with $n$ vertices and $k\in \Nat$, 
there is an algorithm that constructs
either 
a tree decomposition for~$G$ of    
width $O(k)$ or a $\Theta(k)\times \Theta(k)$ grid as a minor of $G$ 
in $O(nk^2)$ time.
\end{theorem}

We finally want to remark that it was not the purpose of this paper to show
the smallest possible approximation ratio. Indeed, it
is not necessary to turn a given graph
into an almost triangulated graph and to compute subsequently a tree decomposition
for the almost triangulated graph.

With more sophisticated techniques,
Kammer~\cite{Kam10} presented an algorithm that can compute a tree decomposition of width 
$9tw(G)+9$
for an
unweighted planar graph $G$ in $O(nk^4\log k)$ time.

\section{Conclusion}\label{sec:concl}

We have shown that a tree decomposition for 
a planar graph 
with its width approximating the treewidth by a constant factor
can be found in a time linear in the number of vertices of the
given graph. Since a tree decomposition for a planar graph with $n$
vertices and treewidth $k$ can be of size $\Theta(nk)$, an interesting open question is
if the running time of Corollary~\ref{col:finalFast} can be improved to $O(nk)$.

To obtain a better approximation ratio, 
Kammer~\cite{Kam10} has shown how
to adapt our algorithm from triangulated planar graphs
to general planar graphs. This makes the algorithm much more
complicated. %
Another more promising approach would be to find
a linear-time triangulation of a planar
(weighted) graph %
without increasing the treewidth of the graph.

We also want to mention that it is a still an open problem whether 
the treewidth on planar graphs can be 
found in
polynomial time or 
whether the problem is NP-hard.

%

\addcontentsline{toc}{section}{References}

\bibliography{ourbib}  %

\end{document}